\newtheorem{theorem}{Theorem}[section]
\newtheorem{lemma}[theorem]{Lemma}
\theoremstyle{definition}
\newtheorem{definition}[theorem]{Definition}
\newtheorem{conjecture}[theorem]{Conjecture}
\theoremstyle{remark}
\newtheorem{remark}[theorem]{Remark}
\numberwithin{equation}{section}
\author{Amit Einav}
\title{A Counter Example to Cercignani's Conjecture for the $d$ Dimensional Kac Model}
\thanks{The author was supported by ERC Grant MATKIT }
\address{Amit Einav \\ 
Department of Pure Mathematics and Mathematical Statistics \\ University of Cambridge}
\email{A.Einav@dpmms.cam.ac.uk}
\begin{document}

\maketitle

\begin{abstract}
Kac's $d$ dimensional model gives a linear, many particle, binary collision model from which, under suitable conditions, the celebrated Boltzmann equation, in its spatially homogeneous form, arise as a mean field limit. The ergodicity of the evolution equation leads to questions about the relaxation rate, in hope that such a rate would pass on the Boltzmann equation as the number of particles goes to infinity. This program, starting with Kac and his one dimensional 'Spectral Gap Conjecture' at 1956, finally reached its conclusion in a series of papers by authors such as Janvresse, Maslen, Carlen, Carvalho, Loss and Geronimo, but the hope to get a a limiting relaxation rate for the Boltzmann equation with this linear method was already shown to be unrealistic. A less linear approach, via a many particle version of Cercignani's conjecture, is the grounds for this paper. In our paper, we extend recent results by the author from the one dimensional Kac model to the $d$ dimensional one, showing that the entropy-entropy production ratio, $\Gamma_N$, still yields a very strong dependency in the number of particles of the problem when we consider the general case.
\end{abstract}

\section{Introduction}\label{sec: introduction}
One of the most important equations in the field of non equilibrium Statistical Physics is the celebrated Boltzmann equation. In its spatially homogeneous form it is given by:
\begin{equation}\label{eq: boltzmann equation}
\frac{\partial f}{\partial t}(v,t)=Q(f,f)(v,t),
\end{equation}
where $v\in\mathbb{R}^d$, $d\geq 2$ and
\begin{equation}\label{eq: boltzmann collision operator}
\begin{gathered}
Q(f,f)=\int_{\mathbb{R}^d \times \mathbb{S}^{d-1}}B\left(|v-v_\ast|,\cos(\vartheta) \right)\left(f\left(v^\prime \right)f\left(v_\ast^\prime \right)-
f\left(v \right)f\left(v_\ast\right)\right)dv_\ast d\omega , \\
v^\prime =\frac{v+v_\ast}{2}+\frac{\left\lvert v-v_\ast \right\rvert}{2}\cdot\omega , \\
v_\ast^\prime =\frac{v+v_\ast}{2}-\frac{\left\lvert v-v_\ast \right\rvert}{2}\cdot\omega.
\end{gathered}
\end{equation}
$v,v^\prime$ stand for the pre collision velocities and $\vartheta\in[0,\pi]$ is the deviation angle between $v-v_\ast$ and $v^\prime - v_\ast ^\prime$. The function $B$ is the Boltzmann collision kernel, affected by the physics of the problem, such as the cross section.\\
While physically motivated, to this day a proof of the derivation of (\ref{eq: boltzmann equation}) from the reversible Newtonian laws is missing in full. The main, and remarkable, progress in that area was done in 1973, by Lanford (see \cite{Lanford}), who managed to show the result for short times (shorter than the average time before we see collisions).\\ 
In his 1956 paper \cite{Kac}, Marc Kac introduced probability into the mix, and along with a new concept - 'Boltzmann Property' (what we now call chaotic families) - he managed derive a caricature of the spatially homogeneous Boltzmann equation in one dimensions as a mean field limit of his stochastic process. Kac considered a linear $N$-particle binary collision model with an evolution equation (the 'master equation') given by
\begin{equation}\label{eq: master equation in 1d}
\frac{\partial F_N}{\partial t}\left(v_1,\dots,v_N \right)=-N(I-Q)F_N\left(v_1,\dots,v_N \right),
\end{equation}
where 
\[QF\left(v_1,\dots,v_N \right)=\frac{1}{2\pi}\cdot \frac{2}{N(N-1)}\sum_{i<j}\int_{0}^{2\pi}F\left(v_1,\dots,v_i(\vartheta),\dots,v_j(\vartheta),\dots,v_N\right)d\vartheta ,\]
with 
\begin{equation}\label{eq: v_i(theta),v_j(theta)}
\begin{gathered}
v_i(\vartheta)=v_i\cos(\vartheta)+v_j\sin(\vartheta), \\
v_j(\vartheta)=-v_i\sin(\vartheta)+v_j\cos(\vartheta).
\end{gathered}
\end{equation}
Under the assumption of chaoticity, i.e. that the $k$-th marginal of $F_N$ converges to the $k$-tensorization of the limit of the first marginal, $f$ (where the limits are considered in the weak sense), Kac showed that $f$ satsfies the following spatially homogeneous 'Boltzmann equation':
\begin{equation}\label{eq: caricature boltzmann equation}
\frac{\partial f}{\partial t}(v,t)=\frac{1}{2\pi}\int_{\mathbb{R}}\int_0 ^{2\pi}\left(f\left(v(\vartheta)\right)f\left(v_\ast(\vartheta) \right)-
f\left(v \right)f\left(v_\ast\right)\right)dv_\ast d\vartheta ,
\end{equation}
where $v(\vartheta),v_\ast(\vartheta)$ are defined as in (\ref{eq: v_i(theta),v_j(theta)}). Note that a simple comparison of (\ref{eq: boltzmann equation}) with (\ref{eq: caricature boltzmann equation}) shows that in his model, Kac assumed that $B=1$, which is the less physical but very interesting mathematically case of the so called 'Grad Maxwell Molecules'. The reason behind this is the immense difficulty in mixing a collision function that depends on the relative velocities along with the jump process  (see \cite{Kac,MM}).\\
While the model itself wasn't completely physical, as it doesn't conserve momentum, it still gave rise to many interesting observations and results. The first one is    that the property of chaoticity \textit{propagates} with the evolution. This means that if we started with a chaotic family, then at each time $t$, the solution to (\ref{eq: master equation in 1d}) is still a chaotic family. The proof is a beautiful combinatorial argument along with an explicit expression to the solution (wild sums). Another important observation was that the evolution equation (\ref{eq: master equation in 1d}) is ergodic on $\mathbb{S}^{N-1}(\sqrt{N})$, implying that $\lim_{t\rightarrow\infty}F\left(t,v_1,\dots,v_n \right)=1$ for any fixed $N$. This led Kac to hope that a rate of relaxation of his linear equation can be bounded independently of $N$ and serve to prove a rate of relaxation to the associated Boltzmann equation. Denoting by 
\[\Delta_N= \inf_{F_N\in L^2_{sym}\left(\sqrt{N} \right)}\left\lbrace \frac{\left\langle F_N,N(I-Q)F_N \right\rangle}{\left\lVert F_N \right\rVert^2_{L^2\left(\sqrt{N},d\sigma^N \right)}}, \qquad F_N\perp 1 \right\rbrace ,\]
where $L^2_{sym}\left(\sqrt{N} \right)$ is the set of symmetric $L^2\left( \mathbb{S}^{N-1}(\sqrt{N}),d\sigma^N \right)$ functions and $d\sigma^N$ is the uniform probability measure on the sphere, Kac conjectured that $\liminf_{N\rightarrow\infty}\Delta_N >0$. This would lead to the following estimation:
\begin{equation}\label{eq: spectral gap estimation}
\left\lVert F_N(t) - 1 \right\rVert_{L^2_{sym}\left(\sqrt{N} \right)} \leq e^{-(\liminf_{N\rightarrow\infty}\Delta_N)t} \left\lVert F_N(0) - 1 \right\rVert_{L^2_{sym}\left(\sqrt{N} \right)}.
\end{equation}
The 'spectral gap' problem was investigated by many people including Janvresse (\cite{Jan}) and Maslen (\cite{Maslen}), and was finally given an explicit answer by Carlen, Carvahlo and Loss (\cite{CCL}) who managed to show that 
\[\Delta_N=\frac{N+2}{2(N-1)}.\]
Inequality (\ref{eq: spectral gap estimation}) along with the propagation of chaos would seemingly lead to an exponential decay to equilibrium of the first marginal, now that we know that Kac's conjecture is true, but a closer look shows this to be false. Indeed, intuitively speaking, being a chaotic family means that in some sense $F_N \sim f^{\otimes N}$. This leads to a very strong dependency of $N$ in the right term of (\ref{eq: spectral gap estimation}). One can find a chaotic family on the sphere, $F_N$, such that 
\[\left\lVert F_N \right\rVert_{L^2\left(\mathbb{S}^{N-1}(\sqrt{N}),d\sigma^N \right)}\geq C^N ,\]
where $C>1$, which leads to a relaxation time of order $N$.\\ 
The reason for the above catastrophe is the choice of $L^2$ as a reference norm along with the chaoticity requirement. A better norm-like function is required, one that is more amiable towards the chaoticity property.\\
Bearing that in mind, a natural quantity to investigate is the entropy. On the Kac sphere it is defined as 
\[H_N(F)=\int_{\mathbb{S}^{N-1}(\sqrt{N})}F \log F d\sigma^N .\]
The superiority of the entropy over the $L^2$ norm is given by its \textit{extensiviy} property: intuitively speaking, for chaotic families that satisfy $F_N \sim f^{\otimes N}$ we have that 
\[H_N(F_N)\approx N H(f \vert \gamma),\]
where $H(f \vert \gamma)=\int_{\mathbb{R}}f \log \left(f/ \gamma \right)$ and $\gamma$ is the standard Gaussian.\\
A related 'spectral gap' problem appeared: Noticing that 
\[D(F_N)=-\frac{\partial H_N(F_N)}{\partial t}=\left\langle \log F, N(I-Q)F \right\rangle\]
whenever $F_N$ is the solution to (\ref{eq: master equation in 1d}), one can ask if there exists $C>0$ such that 
\begin{equation}\label{eq: entropic spectral gap 1d}
\Gamma_N=\inf_{F\in L^2_{sym}\left(\mathbb{S}^{N-1}(\sqrt{N})\right)}\frac{D(F_N)}{H_N(F)}
\end{equation}
satisfies $\Gamma_N>C$? If it is true then a known inequility by Csisz$\'a$r, Kullback, Leibler and Pinsker shows that
\[\left\lVert F_N(t)d\sigma^N-d\sigma^N \right\rVert_{TV} \leq 2H_N(F_N(t))
\leq 2e^{-Ct}H_N(F_N(0)),\]
giving us a way to measure relaxation time of the marginals.\\ 
The above question is a variant of Cercignani's conjecture (see \cite{Cer}) known as the many particles Cercignani's conjecture. \\
The answer to that conjecture is \textit{No}. In his 2003 paper, \cite{Villani}, Villani managed to prove that $\Gamma_N \geq 2/(N-1)$ and conjectured that 
\begin{conjecture}\label{con: villani}
\begin{equation}\label{con: villani}
\Gamma_N = O\left( \frac{1}{N} \right).
\end{equation}
\end{conjecture}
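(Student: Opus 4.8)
The plan is to prove $\Gamma_N = O(1/N)$, i.e.\ $\Gamma_N\le C/N$ for all large $N$, by producing an explicit test family; since $\Gamma_N$ is an infimum it is enough to exhibit \emph{one} family of symmetric probability densities $F_N$ on the $d$-dimensional Kac manifold with $D(F_N)/H_N(F_N)\le C/N$. As in the one dimensional case, the candidate is a \emph{conditioned tensor product}: fix probability densities $f_N$ on $\mathbb{R}^d$ with $\int v\,f_N = 0$ and $\int |v|^2 f_N = d$, so that the conditioning of $\bigotimes_{i=1}^{N} f_N$ onto the Kac manifold $\{\sum_i |v_i|^2 = dN,\ \sum_i v_i = 0\}$ is non-degenerate, and let $F_N$ be that conditioned product, normalised against the uniform measure $d\sigma^N$ (if $F_N$ itself fails to be square integrable because $f_N$ is heavy tailed, one first truncates and passes to the limit). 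The $f_N$'s are chosen to be \emph{near counterexamples to Cercignani's conjecture for the single particle $d$-dimensional Grad--Maxwell ($B\equiv1$) equation} --- the equation of the mean field limit: writing $a_N = H(f_N\mid\gamma_d)$ and $b_N = D_{\mathrm{K}}(f_N)$ for the one particle relative entropy and entropy production ($\gamma_d$ the standard Gaussian), one wants $b_N/a_N\le C'/N$. That such $f_N$ exist is exactly the classical failure of Cercignani's conjecture for Maxwell molecules, $\inf_f D_{\mathrm{K}}(f)/H(f\mid\gamma_d) = 0$ over densities with these moments; the infimum is approached, schematically, along $f_N = (1-\varepsilon_N)\gamma_d + \varepsilon_N g_N$ where $g_N$ carries a vanishing mass $\varepsilon_N\to0$ in a component of large covariance, tuned so that $a_N$ stays bounded while $b_N = O(1/N)$.

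The core is to transfer this one particle picture to $N$ particles through two chaos estimates. For the entropy, with $h_N = f_N/\gamma_d$ one has $F_N = \widetilde{Z}_N^{-1}\prod_i h_N(v_i)$ on the manifold, where $\widetilde{Z}_N = \int\prod_i h_N(v_i)\,d\sigma^N$, hence
\[
H_N(F_N) \;=\; N\!\int_{\mathbb{R}^d}\!\log h_N(v)\,F_N^{(1)}(v)\,dv \;-\; \log\widetilde{Z}_N ,
\]
with $F_N^{(1)}$ the first marginal. A $(d+1)$-dimensional local central limit theorem for $\bigl(\sum_i |v_i|^2,\sum_i v_i\bigr)$ yields $\widetilde{Z}_N = 1 + O(1/N)$ and $F_N^{(1)}\to f_N$ at rate $O(1/N)$ in a topology strong enough to integrate $\log h_N$, so that $H_N(F_N) = N a_N + O(M(f_N))$, where $M(f_N)$ collects certain low order moments of $f_N$. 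For the entropy production one uses the Dirichlet form identity
\[
D(F) \;=\; \frac{N}{2}\,\mathbb{E}_{\omega}\!\int \bigl(\log F - \log R_{\omega}F\bigr)\bigl(F - R_{\omega}F\bigr)\,d\sigma^N ,
\]
where $R_\omega$ denotes the action on $F$ of a single collision between one fixed pair of particles with parameter $\omega\in\mathbb{S}^{d-1}$ and $\mathbb{E}_\omega$ is the uniform average over $\mathbb{S}^{d-1}$ (the choice of pair immaterial by symmetry, the constant being fixed by the normalisation of the master equation). Because the integrand depends on $(v_1,v_2)$ only through the common factor $\prod_{i\ge3}h_N(v_i)$, integrating those out against the conditional uniform measure and invoking the same local CLT reduces the manifold integral to
\[
\int_{(\mathbb{R}^d)^2}\!\bigl(\log(f_N f_{N\ast})-\log(f_N' f_{N\ast}')\bigr)\bigl(f_N f_{N\ast}-f_N' f_{N\ast}'\bigr)\,dv\,dv_\ast ,
\]
against Lebesgue measure --- the Gaussian from the two particle marginal cancelling the $\gamma_d^{-2}$ hidden in $h_N$, with $v',v_\ast'$ the post-collisional velocities --- i.e.\ to $D_{\mathrm{K}}(f_N)$ up to an absolute constant and a relative error of the same $O(M(f_N)/N)$ type. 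Hence $D(F_N) = c_d N b_N + O(b_N\,M(f_N))$, and if $M(f_N) = o(N)$ one concludes $\Gamma_N\le D(F_N)/H_N(F_N)\le 2c_d\,b_N/a_N\le C/N$; specialising $d=1$ recovers the author's one dimensional result.

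The main obstacle is the tension built into the choice of $f_N$. Making the single particle ratio $b_N/a_N$ as small as $1/N$ requires the ``bad'' component $g_N$ to sit at a covariance of order $N$, but exactly this forces its contribution to the low order moments $M(f_N)$ --- precisely the quantities entering the local CLT / Edgeworth corrections to both $H_N(F_N)$ and $D(F_N)$ --- to grow like $\varepsilon_N\cdot(\text{covariance})^2 \sim N$, i.e.\ to the very edge of what the argument tolerates; so one must engineer the profile and the scaling of $g_N$ so that $M(f_N)$ stays within a \emph{constant} of $N a_N$ (equivalently, $o(N)$ after optimising constants), a tight window in which the precise exponents matter. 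This is where the work lies, and also where, should the window be missed by a logarithm, one would settle for the marginally weaker $\Gamma_N = O\bigl((\log N)/N\bigr)$ --- still a refutation of any uniform Cercignani inequality. Two further points must be handled: entropic chaos for conditioned products can fail for badly behaved one particle laws, so $f_N$ must be checked to lie in a good class (e.g.\ a bounded core density together with controlled low moments); and the $d$-dimensional momentum constraint $\sum_i v_i = 0$ forces $g_N$ to be placed isotropically (to keep $\int v\,f_N = 0$) and requires every marginal, local CLT and normalisation estimate to be redone on the codimension-$(d+1)$ manifold --- changing the effective dimension from $N-1$ to $dN-d-1$ and the Jacobians of the slicing, though not the structure of the argument.
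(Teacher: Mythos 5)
First, be aware that the statement you are proving is stated in the paper as a \emph{conjecture} (Villani's conjecture), and the paper does not prove it: the main theorem establishes only the weaker bound $\Gamma_N \le C_\eta N^{-\eta}$ for every $\eta<1$, and does so by essentially the strategy you describe --- conditioning the tensor product of the $N$-dependent mixture $f_{\delta_N}=\delta_N M_{\frac{1}{2d\delta_N}}+(1-\delta_N)M_{\frac{1}{2d(1-\delta_N)}}$ onto the Boltzmann sphere, controlling the normalization function through a local central limit theorem for the pair $\left(\sum_i v_i,\sum_i|v_i|^2\right)$, and comparing the $N$-particle entropy and entropy production with their one-particle analogues. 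So your architecture coincides with the paper's; the question is whether you have closed the gap between $O(N^{-\eta})$ and $O(N^{-1})$, and you have not.

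The gap sits exactly where you write ``this is where the work lies,'' and it is twofold. (i) At the one-particle level, the family you invoke does not deliver $b_N/a_N\le C'/N$. For the mixture above (the only construction on the table), the relative entropy $a_N$ stays of order one while the entropy production $b_N$ is of order $\delta_N\lvert\log\delta_N\rvert$, the logarithm coming from $-\log f_{\delta_N}$ evaluated on the small-mass, large-variance component; so even before any many-particle analysis the best available single-particle ratio at $\delta_N=1/N$ is $O(\log N/N)$. You acknowledge this possibility but offer no construction removing the logarithm, and the Bobylev--Cercignani degeneration you cite only shows the infimum is zero, not a rate. (ii) At the $N$-particle level, the local CLT errors are controlled by negative powers of $\delta_N N$ (the variance $\Sigma_{\delta_N}^2\sim 1/\delta_N$ enters every estimate), and the approximation of the normalization function is valid only when $\delta_N^{1+2\beta}N\rightarrow\infty$; at $\delta_N=1/N$ this condition fails, which is precisely why the exponent in the known result is capped at $\eta<1$. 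Your proposal records this tension accurately but resolves neither point, so as written it yields at most the same almost-version already proved, not the conjectured $O(1/N)$.
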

In 2011, the author managed to show that for any $0<\eta<1$ there exists $C_\eta >0$ such that $\Gamma_N \leq C_\eta / N^\eta$ (see \cite{Einav}), giving a proof to an 'almost-$\epsilon$' version of Villani's conjecture and showing that in its full generality, the entropy-entropy production method doesn't give a much better result than the spectral gap approach.\\
While the one dimensional model itself posed, and still posses, many interesting problem, the fact that it is not very physical is a small deterrent. In his 1967 paper,  \cite{McKean}, McKean generalized Kac's model to a more realistic, momentum and energy conserving, $d$ dimensional model from which the real Boltzmann equation, (\ref{eq: boltzmann equation}), arose. McKean also extended the allowed collision kernels (though he still demanded that there won't be dependency on the relative velocity and that there would be no angular singularities) and showed propagation of chaos in a similar method to that of Kac. \\
The evolution equation to the simplest $d$-dimensional model, where $B=1$ (Grad Maxwellian Molecules), is given by 
\begin{equation}\label{eq: master equation in dd}
\frac{\partial F_N}{\partial t}\left(v_1,\dots,v_N \right)=-N(I-Q)F_N\left(v_1,\dots,v_N \right),
\end{equation}
where $v_1,\dots,v_N\in \mathbb{R}^d$ and 
\begin{equation}\label{eq: collision operator}
\begin{gathered}
QF\left(v_1,\dots,v_N \right)=\frac{2}{N(N-1)}\sum_{i<j} \\
\int_{\mathbb{S}^{d-1}}F\left(v_1,\dots,v_i(\omega),\dots,v_j(\omega),\dots,v_N\right)d\sigma^d ,
\end{gathered}
\end{equation}
with 
\begin{equation}\label{eq: v_i(omega) v_j(omega)}
\begin{gathered}
v_i(\omega)=\frac{v_i+v_j}{2}+\frac{\left\lvert v_i-v_j \right\rvert}{2}\cdot\omega , \\
v_j(\omega)=\frac{v_i+v_j}{2}-\frac{\left\lvert v_i-v_j \right\rvert}{2}\cdot\omega .
\end{gathered}
\end{equation}
The appropriate space is no longer the energy sphere $\mathbb{S}^{N-1}(\sqrt{N})$, but the Boltzmann sphere, defined by:
\begin{definition}\label{def: boltzmann sphere}
\begin{equation}\label{eq: boltzmann sphere}
\mathcal{S}_{B}^N(E,z)=\left\lbrace v_1,\dots,v_N\in \mathbb{R}^{d} \thickspace \Bigg\vert \thickspace \sum_{i=1}^N |v_i|^2=E \thickspace, \thickspace \sum_{i=1}^N v_i=z \right\rbrace.
\end{equation}
\end{definition}
with $E=N$ and $z=0$ for simplicity. For more information we refer the reader to \cite{CGL}.\\
The related spectral gap problem was solved in 2008 by Carlen, Geronimo and Loss (see \cite{CGL}), but a similar reasoning to that presented in the one dimensional case leads us to conclude that the spectral gap method is not suited to deal with chaotic families. \\
Like before, we define the entropy on the Boltzmann sphere as:
\begin{definition}\label{def: entropy}
\begin{equation}\label{eq: entropy}
H_N(F)=\int_{\mathcal{S}_B^N(N,0)}F \log F d\sigma^N_{N,0},
\end{equation}
where $d\sigma^N_{E,z}$ is the uniform probability measure on the Boltzmann sphere.
\end{definition}
One can ask now, similar to the one dimensional discussion, if a many particles Cercignani's conjecture holds in this case, or do we find the same situation as that of Conjecture \ref{con: villani}? \\
Defining:
\begin{definition}\label{def: entopic spectral gap dd}
\begin{equation}\label{eq: entropic spectral gap dd}
\Gamma_N=\inf \frac{D(F_N)}{H_N(F)},
\end{equation}
where $D(F_N)=\left\langle \log F, N(I-Q)F \right\rangle$ and the infimum is being taken over all symmetric probability densities over the Boltzmann sphere.
\end{definition}
we have that the main theorem of our paper is:
\begin{theorem}\label{thm: main theorem}
For any $0<\eta<1$ there exists a constant $C_\eta$, depending only on $\eta$, such that $\Gamma_N$, defined in (\ref{eq: entropic spectral gap dd}), satisfies
\begin{equation}\label{eq: Villani conjecture dd}
\Gamma_N \leq \frac{C_\eta}{N^{\eta}}.
\end{equation}
\end{theorem}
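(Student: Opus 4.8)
The proof is constructive and mirrors the one-dimensional argument of \cite{Einav}: for each $N$ we build a symmetric probability density $F_N$ on $\mathcal{S}_B^N(N,0)$ for which $D(F_N)/H_N(F_N)\le C_\eta N^{-\eta}$, which forces the infimum defining $\Gamma_N$ to be at least that small. The candidate is a \emph{conditioned tensor power}: fix a one-particle density $f=f_\delta$ on $\mathbb{R}^d$ with zero mean and with $\int|v|^2f=1$ (so that $f^{\otimes N}$ is naturally supported near $\mathcal{S}_B^N(N,0)$), and let $\Pi_N f^{\otimes N}$ be its projection onto the Boltzmann sphere, i.e.\ the conditional law of $(v_1,\dots,v_N)\sim f^{\otimes N}$ given $\sum_i v_i=0$ and $\sum_i|v_i|^2=N$, expressed against $d\sigma^N_{N,0}$. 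The argument then splits into two essentially independent parts: (i) constructing a one-parameter family $f_\delta\to\gamma$ (as $\delta\to0$) whose \emph{one-particle} Boltzmann entropy production $\mathcal{D}_B(f_\delta)$ degenerates strictly faster than its relative entropy $H(f_\delta\,|\,\gamma)$, with all relevant moments kept under control --- this is the $d$-dimensional analogue of the construction in \cite{Einav}, and it is here that the failure of Cercignani's conjecture for the limiting equation \eqref{eq: boltzmann equation} is encoded; and (ii) an ``entropy extensivity plus quantitative chaos'' toolbox on $\mathcal{S}_B^N(N,0)$ that transfers the one-particle estimates to $F_N=\Pi_N f_\delta^{\otimes N}$ with explicit control of the errors.

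For the denominator, the plan is an extensivity bound $H_N\!\big(\Pi_N f^{\otimes N}\big)\ge c\,N\,H(f\,|\,\gamma)$ for large $N$. Writing $H_N(\Pi_N f^{\otimes N})=N\,\mathbb{E}_{F_N}[\log f(v_1)]-\log Z_N$ with $Z_N=\mathbb{E}_{\sigma^N}[f^{\otimes N}]$, one controls $Z_N$ and the one-particle marginal of $F_N$ through a local central limit theorem for the $\mathbb{R}^{d+1}$-valued statistic $\big(\sum_i v_i,\ \sum_i|v_i|^2\big)$ under $f^{\otimes N}$. This is where the Boltzmann-sphere geometry of \cite{CGL} replaces the one-dimensional Kac sphere: the conditioning now has codimension $d+1$, so one needs the joint local limit law of momentum and energy, but the skeleton of the estimate is the same as in one dimension.

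For the numerator, we exploit the structure of $N(I-Q)$. Since $\mathcal{S}_B^N(N,0)$ and $d\sigma^N_{N,0}$ are invariant under each two-particle collision, the partial collision average $R_{12}$ (the $\omega$-average over $\mathbb{S}^{d-1}$ of the collision of particles $1$ and $2$) is self-adjoint, fixes functions of $v_3,\dots,v_N$, and by symmetry of $F_N$,
\begin{equation*}
D(F_N)=N\big\langle \log f(v_1)+\log f(v_2),\,(I-R_{12})F_N\big\rangle
=N\big\langle (I-R_{12})\big(\log f(v_1)+\log f(v_2)\big),\,F_N^{(2)}\big\rangle,
\end{equation*}
where $F_N^{(2)}$ is the two-particle marginal of $F_N$. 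Replacing $F_N^{(2)}$ by $f\otimes f$ turns the right-hand side into $N$ times the entropy production functional $\mathcal{D}_B(f)$ of the limiting equation at $f$; the correction $N\big\langle(I-R_{12})(\log f(v_1)+\log f(v_2)),\,F_N^{(2)}-f\otimes f\big\rangle$ is bounded by combining the quantitative chaoticity of $\Pi_N f^{\otimes N}$ (again via the local limit theorem) with growth bounds on $\log f_\delta$ and on the moments of $f_\delta$. Together the two halves give $\Gamma_N\le D(F_N)/H_N(F_N)\lesssim \mathcal{D}_B(f_\delta)/H(f_\delta\,|\,\gamma)+(\text{chaos error})$, and one finishes by choosing $\delta=\delta_N\to0$ slowly (a small negative power of $N$) so that the one-particle ratio is $\le C_\eta N^{-\eta}$ while the chaos error stays of strictly smaller order.

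The main obstacle is exactly this balancing: keeping the errors from (i)--(ii) uniform in $N$ while $f_{\delta_N}$ is itself degenerating (heavy-tailed, in the spirit of the Bobylev--Cercignani counterexample). Two difficulties stand out. First, the local limit theorem on $\mathcal{S}_B^N(N,0)$ must be quantitative in $\delta$: when the tail of $f_\delta$ is nearly critical, the energy coordinate $\sum_i|v_i|^2$ may have infinite variance, so one must either work in the domain of attraction of a stable law or regularize $f_\delta$ (e.g.\ truncate its tail) and track the truncation scale, all the while keeping the error on $Z_N$ and on $F_N^{(2)}$ below $N^{-\eta}H(f_{\delta_N}\,|\,\gamma)$. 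Second, $\log f_\delta$ is unbounded, so the pairing $\big\langle(I-R_{12})(\log f(v_1)+\log f(v_2)),F_N^{(2)}\big\rangle$ must be justified and estimated in the presence of logarithmic singularities, now interacting with the $\mathbb{S}^{d-1}$-averaging inside $R_{12}$. Once these estimates are in place, the choice of $\delta_N$ as a suitable power of $N$ yields \eqref{eq: Villani conjecture dd}.
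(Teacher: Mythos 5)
Your plan is essentially the paper's proof: the paper takes $F_N=\prod_i f_{\delta_N}(v_i)/\mathcal{Z}_N(f_{\delta_N},\sqrt{N},0)$ with $f_\delta$ a two-Gaussian mixture and $\delta_N=N^{-(1-\eta)}$, proves a quantitative local central limit theorem for the joint statistic $\big(\sum_i v_i,\sum_i|v_i|^2\big)$ to control $\mathcal{Z}_N$ (Theorem \ref{thm: main approximation theorem for the normalization function}), deduces extensivity of the entropy (Theorem \ref{thm: convergence of the entropy}), and reduces the entropy production to a two-particle integral against $f_{\delta_N}\otimes f_{\delta_N}$ bounded by $-C\delta_N\log\delta_N$ (Theorem \ref{thm: entropy production estimation}). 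Your worry about infinite-variance tails and stable laws does not arise for this choice of generating function (only the fourth moment, i.e.\ $\Sigma_{\delta_N}^2$, degenerates), but otherwise the decomposition, the key lemma, and the tuning of $\delta_N$ all coincide with the paper's argument.
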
 
The idea behind this proof is one that keeps repeating (see \cite{Einav,CCRLV}). An intuitive way to create a chaotic family on the Boltzmann sphere is by tensorising a one variable function (what we call our 'generating function'):
\[F_N\left(v_1,\dots,v_N \right)=\frac{\prod_{i=1}^N f(v_i)}{\mathcal{Z}_N\left(f,\sqrt{N},0 \right)}\]
where the normalization function $\mathcal{Z}_N$ is defined by
\[\mathcal{Z}_N\left(f,\sqrt{u},z \right)=\int_{\mathcal{S}^N_B(u,z)}\prod_{i=1}^N f(v_i) d\sigma^N_{u,z}\]
The new method, presented originally in our previous work on the one dimensional case (see  \cite{Einav}), that we use here is to allow the function $f$ to depend on $N$, and still control the normalization function in an explicit way. The additional dimensions and geometry of the problem cause technical difficulties than in the one dimensional case, manifesting mainly in the normalization function and an approximation theorem for it. More details on the difficulties and how we solved them are presented in Sections \ref{sec: preliminaries} and \ref{sec: normalization function}.\\
The above introduction is, by far, a mere glimpse into the Kac model and its relation to the Boltzmann equation. There are many more details and some remarkable proofs involved with this subject and we refer the reader to \cite{CCL,CCRLV,CGL,MM,VReview} to read more about it.\\
The paper is structured as follows: Section \ref{sec: preliminaries} will discuss some preliminaries, giving more information about the Boltzmann sphere and the normalization function. Section \ref{sec: normalization function} will contain our specific choice of 'generating function' and the approximation theorem of its normalization function, leading to Section \ref{sec: the main result} where we prove the main theorem. Section \ref{sec: final remarks} concludes with final words and some remarks and is followed by the Appendix, containing additional computation we found unnecessary to include in the main body of the paper.\\
\textbf{Acknowledgement:}
The author would like to thank Cl$\'e$ment Mouhot for many fruitful discussions and constant encouragement, as well as Kleber Carrapatoso for allowing him to read the preprint of his paper (\cite{Carr}), helping to bridge the dimension gap. 

\section{Preliminaries}\label{sec: preliminaries}
In this section we'll discuss a few preliminary results, mainly about the Boltzmann sphere and the normalization function $Z_N(f,\sqrt{u},z)$. Many of the results presented here can be found in \cite{Carr}, but we choose to present a variant of them for completion.
\subsection{The Boltzmann Sphere}\label{subsec: boltzmann sphere}
Recall Definition \ref{def: boltzmann sphere}, where the Boltzmann sphere was defined as
\[\mathcal{S}_{B}^N(E,z)=\left\lbrace v_1,\dots,v_N\in \mathbb{R}^{d} \thickspace \Bigg\vert \thickspace \sum_{i=1}^N |v_i|^2=E \thickspace, \thickspace \sum_{i=1}^N v_i=z \right\rbrace.\]
The term 'Boltzmann sphere' is evident from the following 'transformation':
\begin{equation}\label{eq: boltzmann sphere transformation}
U=RV,
\end{equation}
where $V=\left(v_1,\dots,v_N \right)^T$ and $R$ is the orthogonal matrix with rows given by
\[r_j=\frac{1}{\sqrt{j(j+1)}}\left(\sum_{i=1}^{j}e_i-je_{j+1}\right) \qquad 1\leq j \leq N-1,\]
\[r_N=\frac{\sum_{i=1}^N e_i}{\sqrt{N}},\]
where $e_j\in \mathbb{R}^N$ is the standard basis. Under (\ref{eq: boltzmann sphere transformation}) we see that 
\begin{equation}\label{eq: boltzmann sphere after transformation}
\mathcal{S}_B^N(E,z)=\left\lbrace u_1,\dots,u_N\in\mathbb{R}^d \thickspace \Bigg\vert \thickspace \sum_{i=1}^{N-1} |u_i|^2 = E-\frac{|z|^2}{N} \thickspace , \thickspace u_N=\frac{z}{\sqrt{N}} \right\rbrace ,
\end{equation}
giving us a sphere in a hyperplane of $d(N-1)$ dimensions of $\mathbb{R}^{dN}$ with radius $\sqrt{E-\frac{|z|^2}{N}}$.\\
Since we'll be interested in integration with respect to the uniform probability measure on the Boltzmann sphere, $d\sigma^N_{E,z}$, we will need the following Fubini-type formula:
\begin{theorem}\label{thm: fubini on the boltzmann sphere}
\begin{equation}
\begin{gathered}
\int_{\mathcal{S}_B^N(E,z)}Fd\sigma^N_{E,z}=\frac{\left\lvert \mathbb{S}^{d(N-j-1)-1} \right\rvert}{\left\lvert \mathbb{S}^{d(N-1)-1} \right\rvert}\cdot \frac{N^{\frac{d}{2}}}{(N-j)^\frac{d}{2} \left(E-\frac{|z|^2}{N}\right)^{\frac{d(N-1)-2}{2}}} \\
\int_{\Pi_j(E,z)}dv_1\dots dv_j\left(E-\sum_{i=1}^j |v_i|^2-\frac{\left\lvert z-\sum_{i=1}^j v_i \right\rvert^2}{N-j} \right)^{\frac{d(N-j-1)-2}{2}}\\
\int_{\mathcal{S}_B^{N-j}\left(E-\sum_{i=i}^j|v_i|^2,z-\sum_{i=1}^j v_i\right)}F d\sigma^{N-j}_{E-\sum_{i=i}^j|v_i|^2,z-\sum_{i=1}^j v_i} ,
\end{gathered}
\end{equation}
where $\Pi_j(E,z)=\left\lbrace \sum_{i=1}^j |v_i|^2 + \frac{|z-\sum_{i=1}^j v_i|^2}{N-j} \leq E \right\rbrace$.
\end{theorem}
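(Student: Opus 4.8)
The plan is to prove the formula first in the building-block case $j=1$ and then obtain the general case by iterating it $j$ times. The $j=1$ statement is the heart of the matter: it records how the uniform measure $d\sigma^N_{E,z}$ disintegrates when one singles out the first velocity $v_1$, the remaining $N-1$ velocities then ranging over the residual Boltzmann sphere $\mathcal{S}_B^{N-1}(E-|v_1|^2,z-v_1)$, which by (\ref{eq: boltzmann sphere after transformation}) is again a sphere inside an affine subspace.

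For the $j=1$ case I would first describe the geometry of the $v_1$-slice. Apply an orthogonal transformation on $\mathbb{R}^{dN}$ that is the identity on the $v_1$-block and acts on $(v_2,\dots,v_N)$ just as $R$ in (\ref{eq: boltzmann sphere transformation}) acts on $(v_1,\dots,v_N)$; in particular it sends $(v_2,\dots,v_N)$ to $(u_2,\dots,u_N)$ with $u_N=\frac{1}{\sqrt{N-1}}\sum_{i=2}^N v_i$. Being orthogonal, it preserves both the energy constraint and the normalized surface measure, and it turns $\mathcal{S}_B^N(E,z)$ into the set of $(v_1,u_2,\dots,u_{N-1})$, with $u_N=(z-v_1)/\sqrt{N-1}$ determined by $v_1$, satisfying $|v_1|^2+\sum_{k=2}^{N-1}|u_k|^2+\frac{|z-v_1|^2}{N-1}=E$. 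Completing the square gives $|v_1|^2+\frac{|z-v_1|^2}{N-1}=\frac{N}{N-1}\bigl|v_1-\tfrac{z}{N}\bigr|^2+\frac{|z|^2}{N}$, so after the affine substitution $w=\sqrt{N/(N-1)}\,(v_1-z/N)$ the slice becomes a genuine round sphere $\mathbb{S}^{d(N-1)-1}(\rho)$ with $\rho^2=E-|z|^2/N$, split as $\mathbb{R}^d_w\times\mathbb{R}^{d(N-2)}_{(u_2,\dots,u_{N-1})}$. On this round sphere I would invoke the classical ``onion'' disintegration: for $\mathbb{S}^{m-1}(\rho)\subset\mathbb{R}^d\times\mathbb{R}^{m-d}$ one has $\int G\,d\sigma=\frac{|\mathbb{S}^{m-d-1}|}{|\mathbb{S}^{m-1}|}\,\rho^{-(m-2)}\int_{|w|\le\rho}(\rho^2-|w|^2)^{(m-d-2)/2}\bigl(\int_{\mathbb{S}^{m-d-1}(\sqrt{\rho^2-|w|^2})}G\,d\sigma\bigr)\,dw$, an immediate consequence of polar coordinates (the density integrates to one by a Beta-integral), here with $m=d(N-1)$. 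Undoing the changes of variables then finishes the case: $dw=(N/(N-1))^{d/2}\,dv_1$ supplies the factor $N^{d/2}/(N-1)^{d/2}$; the identity $\rho^2-|w|^2=E-|v_1|^2-|z-v_1|^2/(N-1)$ produces the stated weight and turns $\{|w|\le\rho\}$ into $\Pi_1(E,z)$; and, undoing the orthogonal transformation, the inner sphere with its uniform measure is precisely $\mathcal{S}_B^{N-1}(E-|v_1|^2,z-v_1)$ with $d\sigma^{N-1}_{E-|v_1|^2,z-v_1}$.

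To pass to general $j$ I would apply the $j=1$ identity $j$ times, peeling off $v_1,\dots,v_j$ one at a time (each residual set being again a Boltzmann sphere, so the formula reapplies). The sphere-volume ratios telescope to $|\mathbb{S}^{d(N-j-1)-1}|/|\mathbb{S}^{d(N-1)-1}|$, the $N^{d/2}/(N-1)^{d/2}$-type factors telescope to $N^{d/2}/(N-j)^{d/2}$, and the radius-power weights collapse along the chain once one observes that the squared radius of the $k$-th residual sphere, $E-\sum_{i=1}^k|v_i|^2-|z-\sum_{i=1}^k v_i|^2/(N-k)$, is exactly the base that appears (raised to a power) in the $k$-th integrand; only $(E-|z|^2/N)^{-(d(N-1)-2)/2}$ survives in front and $\bigl(E-\sum_{i=1}^j|v_i|^2-|z-\sum_{i=1}^j v_i|^2/(N-j)\bigr)^{(d(N-j-1)-2)/2}$ inside. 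The same observation, together with the monotonicity of these squared radii along the chain (again from the completing-the-square identity), shows that the nested integration domains collapse to the single region $\Pi_j(E,z)$.

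The main obstacle is not conceptual but a matter of careful bookkeeping: tracking the prefactors, the Jacobian $(N/(N-1))^{d/2}$, and the radius-power weights cleanly through the orthogonal change of coordinates and then through the $j$-fold iteration, and verifying that the telescoping really reproduces the exponents $\tfrac{d(N-1)-2}{2}$ and $\tfrac{d(N-j-1)-2}{2}$ and the prefactor $\tfrac{N^{d/2}}{(N-j)^{d/2}}$ exactly. A secondary point deserving a sentence is the identification of the induced measures -- that both an orthogonal map of $\mathbb{R}^{dN}$ and the slicing of a sphere by an affine subspace carry normalized Hausdorff (uniform) measure to normalized Hausdorff measure -- and the interpretation of the formula in the degenerate boundary case $N-j=1$, where the innermost ``sphere'' is a single point.
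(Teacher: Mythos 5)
Your proposal is correct and rests on the same geometric core as the paper's proof: an orthogonal change of variables of the type \eqref{eq: boltzmann sphere transformation} that realizes the Boltzmann sphere as a round sphere after completing the square, followed by the classical polar (``onion'') disintegration on round spheres, with the Jacobian of the rescaling supplying the factor $\left(N/(N-j)\right)^{d/2}$. The only real difference is organizational: the paper proves the general-$j$ statement in one shot, applying a product transformation $R_1\otimes R_2$ that separates $(v_1,\dots,v_j)$ from $(v_{j+1},\dots,v_N)$, completing the square in the single block $\xi_j=\tfrac{1}{\sqrt j}\sum_{i\le j}v_i$, and invoking the sphere Fubini formula peeling $dj$ coordinates at once, whereas you prove $j=1$ and iterate. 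Your route therefore needs the two extra verifications you correctly single out -- that the radius-power weights telescope because the $k$-th weight's base reappears as the normalizing radius $\left(E_k-|z_k|^2/(N-k)\right)^{(d(N-k-1)-2)/2}$ of the next application, and that the nested domains collapse to $\Pi_j(E,z)$ by monotonicity of the residual squared radii, both consequences of the same completing-the-square identity. The measure-theoretic point you flag at the end (that the composite of the orthogonal map, the affine slicing, and the rescaling $\xi_j\mapsto\widetilde{\xi_j}$ is an isometry onto the round sphere, hence carries uniform measure to uniform measure) is indeed the one step both write-ups treat briskly, and it is worth the sentence you propose.
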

We leave the proof to the Appendix (See Theorem \ref{thm: fubini on the boltzmann sphere app}).
\subsection{The Normalization Function}\label{subsec: normalization function preliminaries}
A key part of the proof of our main theorem lies with an approximation of the appropriate normalization function. While the true approximation theorem will be discussed in Section \ref{sec: normalization function}, we present here some basic probabilistic interpretation of it as a prelude to the proof.\\
As was mentioned before, the normalization function for a suitable function $f$ is defined as:
\begin{definition}\label{def: normalization function}
\begin{equation}\label{eq: normalization function def}
\mathcal{Z}_N\left(f,\sqrt{r},z \right)=\int_{\mathcal{S}_B^N(r,z) }f^{\otimes N}d\sigma^N_{r,z}.
\end{equation}
\end{definition}
\begin{lemma}\label{lem: probabilistic interpretation of the normalization function}
Let $V$ be a random variable with values in $\mathbb{R}^d$ and law $f$. Let $h$ be the law of the couple $\left(V,|V|^2 \right)$ then 
\begin{equation}\label{eq: probabilistic interpretation of the normalization function}
\mathcal{Z}_N \left(f,\sqrt{u},z\right)=\frac{2N^{\frac{d}{2}}h^{\ast N}(z,u)}{\left\lvert \mathbb{S}^{d(N-1)-1} \right\rvert \left(u-\frac{|z|^2}{N}\right)^\frac{d(N-1)-2}{2}}.
\end{equation}
\end{lemma}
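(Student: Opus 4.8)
The plan is to read the two constraints defining $\mathcal{S}_B^N(u,z)$ as the single vector equation $\sum_{i=1}^N\bigl(v_i,|v_i|^2\bigr)=(z,u)$ in $\mathbb{R}^d\times\mathbb{R}$, so that for i.i.d.\ random variables $V_1,\dots,V_N$ with law $f$ the measure $d\sigma^N_{u,z}$ is, up to an explicit geometric factor, the conditional law of $(V_1,\dots,V_N)$ given $\sum_i(V_i,|V_i|^2)=(z,u)$, while $\mathcal{Z}_N$ picks up the density $h^{\ast N}(z,u)$ of that sum as its normalising constant. To make this precise I would apply the co-area formula to the smooth (polynomial) map
\[
\Phi:\mathbb{R}^{dN}\to\mathbb{R}^{d+1},\qquad \Phi(v_1,\dots,v_N)=\Bigl(\textstyle\sum_{i=1}^N v_i,\ \sum_{i=1}^N|v_i|^2\Bigr),
\]
with integrand $f^{\otimes N}$: disintegrating $f^{\otimes N}\,dV$ along the level sets of $\Phi$ and pushing forward gives
\[
h^{\ast N}(z,u)=\int_{\Phi^{-1}(z,u)}\frac{f^{\otimes N}(V)}{J\Phi(V)}\,d\mathcal{H}^{\,dN-d-1}(V),
\]
where $J\Phi=\sqrt{\det\bigl(D\Phi\,(D\Phi)^{T}\bigr)}$ is the co-area Jacobian and $\mathcal{H}^{k}$ is $k$-dimensional Hausdorff measure.

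The first computation is $J\Phi$ on $\Phi^{-1}(z,u)$. The gradients of the first $d$ components of $\Phi$ are mutually orthogonal of squared norm $N$, the gradient of the last component is $2V$ (squared norm $4\sum_i|v_i|^2=4u$), and the cross inner products equal $2\sum_i v_i=2z$; hence the Gram matrix is $\begin{pmatrix}N I_d & 2z\\ 2z^{T}& 4u\end{pmatrix}$, and a Schur-complement computation gives $\det\bigl(D\Phi\,(D\Phi)^{T}\bigr)=4N^{d}\bigl(u-\tfrac{|z|^2}{N}\bigr)$, so that
\[
J\Phi\equiv 2N^{d/2}\Bigl(u-\tfrac{|z|^2}{N}\Bigr)^{1/2}\quad\text{on }\Phi^{-1}(z,u).
\]
The crucial point is that this is \emph{constant} on the level set, so it factors out of the integral above.

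Next I would identify $\mathcal{H}^{\,dN-d-1}$ on $\Phi^{-1}(z,u)$ with the geometry of the Boltzmann sphere. Under the orthogonal change of variables $U=RV$ of \eqref{eq: boltzmann sphere transformation} (which, acting blockwise as $R\otimes I_d$ on $\mathbb{R}^{dN}$, preserves Hausdorff measure), $\Phi^{-1}(z,u)$ becomes $\{u_N=z/\sqrt N\}$ together with the Euclidean sphere $\{\sum_{i=1}^{N-1}|u_i|^2=u-|z|^2/N\}$ of radius $\rho:=\bigl(u-|z|^2/N\bigr)^{1/2}$ in dimension $d(N-1)$, whose total surface measure is $|\mathbb{S}^{d(N-1)-1}|\,\rho^{\,d(N-1)-1}$. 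Since $d\sigma^N_{u,z}$ is by definition the normalised uniform (hence normalised surface) measure on this sphere, we get
\[
\int_{\Phi^{-1}(z,u)}f^{\otimes N}\,d\mathcal{H}^{\,dN-d-1}=|\mathbb{S}^{d(N-1)-1}|\,\rho^{\,d(N-1)-1}\,\mathcal{Z}_N(f,\sqrt u,z).
\]
Combining the three displays and using $\rho^{\,d(N-1)-1}/\rho=\bigl(u-|z|^2/N\bigr)^{\frac{d(N-1)-2}{2}}$ yields exactly \eqref{eq: probabilistic interpretation of the normalization function}.

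I expect the main obstacle to be the bookkeeping needed to justify the measure-theoretic identifications cleanly: that $d\sigma^N_{u,z}$ really is the normalised restriction of $\mathcal{H}^{\,dN-d-1}$ to $\Phi^{-1}(z,u)$ (which follows from its characterisation as the unique rotation-invariant probability measure on the sphere produced by the $R$-transformation), that $h^{\ast N}$ is genuinely absolutely continuous for $N\ge 2$ so that the pointwise identity is meaningful (for $N=1$ the formula is formally degenerate, since $|\mathbb{S}^{-1}|$ is not defined), and that $J\Phi\neq 0$ off the negligible set $\{u=|z|^2/N\}$ on which all $v_i$ coincide. An alternative route, staying within the tools already set up, is to prove \eqref{eq: probabilistic interpretation of the normalization function} by induction on $N$ via Theorem \ref{thm: fubini on the boltzmann sphere} with $j=1$; after the sphere-area and power prefactors cancel, the inductive step reduces to the elementary convolution identity
\[
h^{\ast N}(z,u)=\int_{\mathbb{R}^d}f(v)\,h^{\ast(N-1)}\bigl(z-v,\,u-|v|^2\bigr)\,dv,
\]
which is nothing but the definition of $h$ as the law of $v\mapsto(v,|v|^2)$ under $f$.
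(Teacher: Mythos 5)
Your argument is correct, and it reaches \eqref{eq: probabilistic interpretation of the normalization function} by a genuinely different route from the paper. The paper works in the weak formulation: it tests the law of $\bigl(\sum_i V_i,\sum_i|V_i|^2\bigr)$ against $\varphi\in C_b$, passes to the coordinates $U=RV$ of \eqref{eq: boltzmann sphere transformation}, introduces polar coordinates in the first $d(N-1)$ variables, recognizes the resulting spherical average as $\mathcal{Z}_N$, and then performs the substitutions $z=\sqrt{N}u_N$, $w=r^2+|u_N|^2$; the prefactor $\tfrac{2N^{d/2}}{|\mathbb{S}^{d(N-1)-1}|}\bigl(u-\tfrac{|z|^2}{N}\bigr)^{-\frac{d(N-1)-2}{2}}$ emerges from the accumulated Jacobians of these successive changes of variables, and the identification of the law of the sum with $h^{\ast N}$ is quoted as a standard fact. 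You instead disintegrate $f^{\otimes N}\,dV$ directly along the fibers of $\Phi$ via the co-area formula, and your Gram-determinant computation $\det\bigl(D\Phi\,(D\Phi)^{T}\bigr)=4N^{d}\bigl(u-\tfrac{|z|^2}{N}\bigr)$ produces the same prefactor in one stroke; the observation that $J\Phi$ is constant on each level set is exactly what makes the closed formula possible, and your approach makes the geometric meaning of the factor $2N^{d/2}\bigl(u-\tfrac{|z|^2}{N}\bigr)^{1/2}$ transparent, at the cost of invoking a heavier tool and of having to justify separately that $d\sigma^N_{u,z}$ is the normalized Hausdorff measure on the fiber (which does follow from the $R$-transformation, since $R\otimes I_d$ is an isometry of $\mathbb{R}^{dN}$). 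Both proofs share the same caveats, which you correctly flag: the identity is meaningful pointwise only once $h^{\ast N}$ is known to be an actual (continuous) density, which the paper defers to Lemma \ref{lem: the convolution yields a function} and the Appendix, and the locus $u=|z|^2/N$ where $J\Phi$ vanishes is negligible. Your suggested induction on $N$ via Theorem \ref{thm: fubini on the boltzmann sphere} with $j=1$ is also a valid third route and is essentially the paper's own machinery reorganized recursively.
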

\begin{proof}
Let $\varphi\in C_b$ be a function of $\sum_{i=1}^N v_i$ and $\sum_{i=1}^N |v_i|^2$. By the definition
\[\mathds{E}\varphi=\int_{\mathbb{R}^{dN}}\varphi \left(\sum_{i=1}^N v_i, \sum_{i=1}^N |v_i|^2 \right)f^{\otimes N}\left(v_1,\dots,v_N \right)dv_1 \dots dv_N .\]
Using (\ref{eq: boltzmann sphere transformation}) we can rewrite the above as
\[\int_{\mathbb{R}^{dN}}\varphi \left(\sqrt{N}u_N, \sum_{i=1}^N |u_i|^2 \right)f^{\otimes N}\circ R^{-1}\left(u_1,\dots,u_N \right)du_1 \dots du_N\]
\[=\int_{\mathbb{R}^d}du_N \int_{\mathbb{R}^{d(N-1)}}\varphi \left(\sqrt{N}u_N, \sum_{i=1}^N |u_i|^2 \right)f^{\otimes N}\circ R^{-1}\left(u_1,\dots,u_N \right)du_1 \dots du_N\]
\[=\int_{\mathbb{R}^d}du_N \int_0^\infty dr\cdot r^{d(N-1)-1}\varphi \left(\sqrt{N}u_N, r^2+|u_N|^2 \right)\int_{\mathbb{S}^{d(N-1)-1}}f^{\otimes N}\circ R^{-1}ds^{d(N-1)}\]
\[=\int_{\mathbb{R}^d}du_N \int_0^\infty dr\left\lvert \mathbb{S}^{d(N-1)-1} \right\rvert r^{d(N-1)-1}\varphi \left(\sqrt{N}u_N, r^2+|u_N|^2 \right)\int_{\mathbb{S}^{d(N-1)-1}}f^{\otimes N}\circ R^{-1}d\gamma^{d(N-1)} ,\]
where $d\gamma$ is the uniform probability measure on the sphere.\\
At this point we notice that 
\[\int_{\mathbb{S}^{d(N-1)-1}}f^{\otimes N}\circ R^{-1}d\gamma^{d(N-1)}=\int_{\sum_{i=1}^{N-1} |u_i|^2=r^2, u_N}f^{\otimes N}\circ R^{-1}d\gamma^{d(N-1)}\]
\[=\int_{\sum_{i=1}^N |v_i|^2=r^2+|u_N|^2, \sum_{i=1}^N v_i= \sqrt{N}u_N}f^{\otimes N}d\sigma^N=\mathcal{Z}_N\left(f,\sqrt{r^2+|u_N|^2},\sqrt{N}u_N \right) .\]
Using the change of variables $z=\sqrt{N}u_N$ and $w=r^2+|u_N|^2$ yields
\[\mathds{E}\varphi=\int_{\mathbb{R}^d}du_N \int_0^\infty dr\left\lvert \mathbb{S}^{d(N-1)-1} \right\rvert \left(r^2\right)^\frac{d(N-1)-1}{2}\varphi \left(\sqrt{N}u_N, r^2+|u_N|^2 \right)\]
\[\cdot\mathcal{Z}_N\left(f,\sqrt{r^2+|u_N|^2},\sqrt{N}u_N \right)=\int_{\mathbb{R}^d}du_N \int_0^\infty dw\frac{\left\lvert \mathbb{S}^{d(N-1)-1} \right\rvert}{2} \left(w-|u_N|^2\right)^\frac{d(N-1)-2}{2}\varphi \left(\sqrt{N}u_N, w \right)\]
\[\cdot\mathcal{Z}_N\left(f,\sqrt{w},\sqrt{N}u_N \right)=\int_{\mathbb{R}^d}dz \int_0^\infty dw\frac{\left\lvert \mathbb{S}^{d(N-1)-1} \right\rvert}{2N^{\frac{d}{2}}} \left(w-\frac{|z|^2}{N}\right)^\frac{d(N-1)-2}{2}\varphi \left(z, w \right)\cdot\mathcal{Z}_N\left(f,\sqrt{w},z \right).\]
On the other hand, denoting by $s_N$ the law of the couple 
\[\sum_{i=1}^N \left(V_i, \left\lvert V_i \right\rvert^2 \right) = \left(\sum_{i=1}^N V_i, \sum_{i=1}^n \left\lvert V_i \right\rvert^2 \right) ,\] we find that 
\[\mathds{E}\varphi=\int_{\mathbb{R}^d \times [0,\infty)}\varphi\left(z,w \right)s_N(z,w)dzdw.\]
This leads to the conclusion that 
\[\mathcal{Z}_N\left(f,\sqrt{w},z \right)=\frac{2N^{\frac{d}{2}}s_N(z,w)}{\left\lvert \mathbb{S}^{d(N-1)-1} \right\rvert \left(w-\frac{|z|^2}{N}\right)^\frac{d(N-1)-2}{2}},\]
and the result follows using a known theorem in Probability Theory.
\end{proof}
The fact that convolution itself gives us a function and not just a law is discussed in \cite{Carr}. In our particular case we'll prove that we indeed get a well defined function upon a very specific choice of law $f$.\\
We conclude this section with the connection between the law of $V$ and the couple $\left(V,|V|^2 \right)$.
\begin{lemma}\label{lem: law of h}
Let $f$ be a density function for the random variable $V$. Then, the law of the couple $\left(V,|V|^2 \right)$, denoted by $h$, is given by 
\[dh(v,u)=f(v)\delta_{u=|v|^2}(u)dvdu.\]
\end{lemma}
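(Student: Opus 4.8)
The plan is to verify the claimed identity weakly, by testing against an arbitrary bounded continuous function and reading off the law of $V$ as a pushforward measure. First I would fix $\varphi \in C_b\left(\mathbb{R}^d \times [0,\infty)\right)$ and, using that $f$ is the density of $V$, write
\[
\mathds{E}\,\varphi\left(V, |V|^2 \right) = \int_{\mathbb{R}^d} \varphi\left(v, |v|^2\right) f(v)\, dv .
\]
By definition the left-hand side equals $\int \varphi \, dh$, so this computation already identifies $h$ as the pushforward of the measure $f(v)\,dv$ under the map $v \mapsto \left(v, |v|^2\right)$. The content of the lemma is then just to rewrite this pushforward in the suggestive ``Dirac mass on the paraboloid'' notation: for every such $\varphi$,
\[
\int_{\mathbb{R}^d} \varphi\left(v, |v|^2\right) f(v)\, dv = \int_{\mathbb{R}^d}\int_0^\infty \varphi(v,u)\, f(v)\, \delta_{u=|v|^2}(u)\, du\, dv ,
\]
which is nothing more than the sifting property of the Dirac measure applied to the inner $u$-integral (for each fixed $v$, integrating a continuous function of $u$ against $\delta_{u=|v|^2}$ returns its value at $u=|v|^2$). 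Since the two displayed expressions agree for all bounded continuous $\varphi$, the measures agree, giving $dh(v,u) = f(v)\,\delta_{u=|v|^2}(u)\,dv\,du$.

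The only point requiring care — and the closest thing to an obstacle — is interpretational rather than analytic: $h$ is a genuinely singular measure on $\mathbb{R}^d \times [0,\infty)$, concentrated on the $d$-dimensional paraboloid $\left\lbrace \left(v,|v|^2\right) : v\in\mathbb{R}^d \right\rbrace$, and hence possesses no density with respect to Lebesgue measure $dv\,du$. The expression $f(v)\,\delta_{u=|v|^2}(u)\,dv\,du$ must therefore be understood throughout as shorthand for the pushforward measure described above, exactly as it is used in the convolution $h^{\ast N}$ appearing in Lemma \ref{lem: probabilistic interpretation of the normalization function}. With this convention fixed, the weak computation above is a complete proof, and no further estimates are needed.
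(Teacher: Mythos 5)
Your proof is correct and follows essentially the same route as the paper: both identify $h$ weakly by testing against bounded continuous functions and reading off the pushforward of $f(v)\,dv$ under $v\mapsto\left(v,|v|^2\right)$. If anything, your version is slightly more careful, since you test against general $\varphi(v,u)$ and make the singular, pushforward interpretation of the notation explicit, whereas the paper tests only functions of $v$ and remarks that any function of the couple $\left(v,|v|^2\right)$ is a function of $v$ alone.
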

\begin{proof}
let $\varphi\in C_b$ be a function of $v$. Then 
\[\mathds{E}\varphi=\int_{\mathbb{R}^d}\varphi(v)f(v)dv.\]
On the other hand 
\[\mathds{E}\varphi=\int_{\mathbb{R}^d \times [0,\infty)}\varphi(v)h(v,u)dvdu.\]
Since every function of the couple $(v,|v|^2)$ is actually a function of $v$. The result follows.
\end{proof}

\section{The Normalization Function and its Approximation}\label{sec: normalization function}
The core of the proof of the main theorem of our paper lies in understanding how the normalization function of a particular \textit{changing} family of densities behaves asymptotically on the Kac sphere, following ideas presented in \cite{Einav}. \\
The first step we must take is to define the 'generating function'. This is a very natural choice following the trends of \cite{BC,CCRLV,Einav}. \\

\begin{definition}
We denote by
\begin{equation}\label{eq: definition of the generating function}
f_{\delta}(v)=\delta M_{\frac{1}{2d\delta}}(v)+(1-\delta) M_{\frac{1}{2d(1-\delta)}}(v),
\end{equation}
 where $M_a(v)=\frac{1}{\left(2\pi a \right)^{\frac{d}{2}}}e^{-\frac{|v|^2}{2a}}$.
\end{definition}

The main theorem of this section is the following:
\begin{theorem}\label{thm: main approximation theorem for the normalization function}
Let $f_{\delta_N}$ be as in (\ref{eq: definition of the generating function}) where $\delta_N=\frac{1}{N^{1-\eta}}$ with $\frac{2\beta}{1+2\beta}<\eta<\frac{(3+d)\beta}{1+3\beta+\frac{d}{2}+d\beta}$ and $0<\beta<1$ arbitrary. Then 
\begin{equation}
\begin{gathered}
\sup_{u\in[0,\infty),v\in\mathbb{R}^d}\left\lvert \frac{\left\lvert \mathbb{S}^{d(N-1)-1}\right\rvert \left(u-\frac{|v|^2}{N} \right)^{\frac{d(N-1)-2}{2}}}{2N^{\frac{d}{2}}}\mathcal{Z}_N\left(f_{\delta_N},\sqrt{u},v \right)-\gamma_N(u,v) \right\rvert \\
\leq \frac{\epsilon(N)}{\Sigma_{\delta_N} N^{\frac{d+1}{2}}},
\end{gathered}
\end{equation}
where $\gamma_N(u,v)=\frac{d^{\frac{d}{2}}}{\Sigma_{\delta_N} N^{\frac{d+1}{2}}}\cdot \frac{e^{-\frac{d|v|^2}{2N}}\cdot e^{-\frac{(u-N)^2}{2\Sigma_{\delta_N}^2 N}}}{(2\pi)^{\frac{d+1}{2}}}$, $\Sigma_{\delta_N}^2=\frac{d+2}{4d\delta_N (1-\delta_N)}-1$ and  $\lim_{N\rightarrow\infty}\epsilon(N)=0$.
\end{theorem}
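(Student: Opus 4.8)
The plan is to reduce the statement to a local central limit theorem for the sum $\sum_{i=1}^N (V_i,|V_i|^2)$, where each $V_i$ has density $f_{\delta_N}$, and then to extract uniform control on the density $h^{*N}$ of that sum by means of a Fourier (characteristic function) analysis. By Lemma~\ref{lem: probabilistic interpretation of the normalization function} the quantity inside the supremum is exactly $h^{*N}(v,u)$, so the theorem is precisely the assertion that $h^{*N}(v,u)$ is uniformly close, at scale $\tfrac{1}{\Sigma_{\delta_N}N^{(d+1)/2}}$, to the Gaussian density $\gamma_N(u,v)$ of a $(d+1)$-dimensional normal law with the matching mean $(0,N)$ and covariance (the identity in the $v$-block, scaled by $N$, and $\Sigma_{\delta_N}^2 N$ in the $u$-block, with the cross terms vanishing because $f_{\delta_N}$ is radial and hence $\mathds{E}[V|V|^2]=0$). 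Note that $\Sigma_{\delta_N}^2$ is exactly $\mathrm{Var}(|V|^2)$ for $V\sim f_{\delta_N}$: a direct moment computation on the Gaussian mixture gives $\mathds{E}|V|^2=1$ and $\mathds{E}|V|^4=\tfrac{d+2}{4d\delta_N(1-\delta_N)}$, which blows up like $\delta_N^{-1}=N^{1-\eta}$, and this is the source of the $N$-dependent normalization $\Sigma_{\delta_N}$ appearing everywhere.

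First I would record the characteristic function. Writing $\xi\in\mathbb{R}^d$ dual to $v$ and $\tau\in\mathbb{R}$ dual to $u$, one has
\[
\widehat h(\xi,\tau)=\mathds{E}\,e^{i\xi\cdot V+i\tau|V|^2}
=\delta_N\,\varphi_{\frac{1}{2d\delta_N}}(\xi,\tau)+(1-\delta_N)\,\varphi_{\frac{1}{2d(1-\delta_N)}}(\xi,\tau),
\]
where for a centered Gaussian of variance $a I_d$ the quantity $\varphi_a(\xi,\tau)=\int M_a(v)e^{i\xi\cdot v+i\tau|v|^2}dv$ is computable in closed form, $\varphi_a(\xi,\tau)=(1-2ia\tau)^{-d/2}\exp\!\big(-\tfrac{a|\xi|^2}{2(1-2ia\tau)}\big)$. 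Then $h^{*N}$ is recovered by Fourier inversion, $h^{*N}(v,u)=(2\pi)^{-(d+1)}\int \widehat h(\xi,\tau)^N e^{-i\xi\cdot v-i\tau u}\,d\xi\,d\tau$, so that after rescaling $\xi\mapsto \xi/\sqrt N$, $\tau\mapsto \tau/(\Sigma_{\delta_N}\sqrt N)$ the difference $h^{*N}(v,u)-\gamma_N(u,v)$ becomes $\tfrac{1}{\Sigma_{\delta_N}N^{(d+1)/2}}$ times an integral of the form $(2\pi)^{-(d+1)}\int \big(\widehat h(\xi/\sqrt N,\tau/(\Sigma_{\delta_N}\sqrt N))^N-e^{-|\xi|^2/2-\tau^2/2}\big)e^{-\cdots}\,d\xi\,d\tau$, whose absolute value is therefore exactly the $\epsilon(N)$ we must send to zero, uniformly in $(v,u)$ — and uniformity is automatic since the integrand no longer depends on $(v,u)$ once one bounds the oscillating exponential by $1$.

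The core of the work is the three-region split of this integral. On the \emph{central region} $|\xi|^2+\tau^2\le R_N$ (with $R_N\to\infty$ chosen slowly), I would Taylor-expand $\log \widehat h$ at the origin: the linear term produces the mean shift already absorbed into $\gamma_N$, the quadratic term gives $-|\xi|^2/2-\tau^2/2$ after the chosen rescaling, and the remainder is controlled by the third moment of $(V,|V|^2)$, which scales like $\mathds{E}|V|^6\sim \delta_N^{-2}$; dividing by $(\Sigma_{\delta_N}\sqrt N)^3\sim \delta_N^{-3/2}N^{3/2}$ shows the cubic error is $O(\delta_N^{-2}/(\delta_N^{-3/2}N^{3/2}))=O(\delta_N^{-1/2}N^{-3/2})=O(N^{-1+\eta/2})\to0$ — this is where the upper bound $\eta<\tfrac{(3+d)\beta}{1+3\beta+d/2+d\beta}$ is needed to keep all moment-ratio errors summable against the chosen cutoffs. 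On the \emph{intermediate region} $R_N\le |\xi|^2+\tau^2\le \rho\,\delta_N N$ one uses a quantitative bound $|\widehat h(\eta)|\le e^{-c(|\xi|^2+\tau^2)/(\Sigma_{\delta_N}^2 N)}$ coming from the $(1-\delta_N)$-weighted ``fat'' Gaussian component, so the $N$-th power decays faster than any polynomial. The \emph{tail region} is the delicate one: for $|\tau|$ large the factor $(1-2ia\tau)^{-d/2}$ only decays like $|\tau|^{-d/2}$, so $|\widehat h(\xi,\tau)|$ does not go to zero as $|\xi|\to\infty$ with $\tau$ bounded, and one must instead exploit that $\widehat h$ is bounded away from $1$ in modulus on the whole complement of a neighborhood of the origin and use the standard trick of peeling off a fixed power, $|\widehat h^N|\le (\sup_{\mathrm{tail}}|\widehat h|)^{N-n_0}\,|\widehat h|^{n_0}$ with $|\widehat h|^{n_0}$ integrable for $n_0>2(d+1)/d$; here the ratio $|\widehat h|<1$ away from $0$ degrades like $1-c\delta_N$, so the surviving factor is $(1-c\delta_N)^{N}\le e^{-c\delta_N N}=e^{-cN^{\eta}}$, which still beats everything — and it is precisely to make $\delta_N N=N^{\eta}\to\infty$ fast enough relative to the intermediate cutoff that the lower bound $\eta>\tfrac{2\beta}{1+2\beta}$ enters.

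\medskip\noindent\textbf{Main obstacle.} The hard part is the tail estimate together with the bookkeeping of $\delta_N$-powers. Unlike a classical i.i.d.\ LCLT with a fixed smooth density, here the summands change with $N$ and the ``spread'' of $|V|^2$ is itself diverging, so the naive quadratic approximation is only valid on a shrinking (in the original variables) neighborhood $|\tau|\lesssim \delta_N$; one must show that the portion of the inversion integral outside this neighborhood is negligible, and this requires a uniform-in-$N$ lower bound on $1-|\widehat h(\xi,\tau)|$ away from the origin that degrades no faster than a constant times $\delta_N$, honestly using the presence of the $(1-\delta_N)M_{1/(2d(1-\delta_N))}$ component (which for small $\delta_N$ is a near-fixed Maxwellian and hence has a characteristic function genuinely bounded away from $1$ off the origin). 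Threading all three error bounds through the admissible window $\tfrac{2\beta}{1+2\beta}<\eta<\tfrac{(3+d)\beta}{1+3\beta+d/2+d\beta}$, and checking that this window is nonempty for every $0<\beta<1$ and every $d$, is the remaining — essentially arithmetic — point; the extra $\beta$ parameter is the slack one needs to simultaneously afford a polynomially large central cutoff $R_N\sim N^{\beta(\cdots)}$ and still have the cubic remainder beat it.
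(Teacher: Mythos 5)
Your proposal is correct and follows essentially the same route as the paper: reduce via Lemma \ref{lem: probabilistic interpretation of the normalization function} to a uniform local CLT for $h^{\ast N}$, bound the supremum by $\iint|\widehat h^N-\widehat{\gamma_1}^N|\,dp\,dt$, split frequency space into a near-origin region treated by Taylor expansion, an intermediate region killed by the exponential decay of the fat Gaussian component, and a tail handled by peeling off an integrable power $|\widehat h|^{n_0}$ with $n_0>2(d+1)/d$, then verify the admissible window for $\eta$. The only cosmetic differences are that the paper splits by $|t|$ and $|p|$ separately at the scales $\delta^{1+\beta}$ and $\delta^{1/2+\beta}$ and compares $\widehat h$ with $\widehat{\gamma_1}$ directly through the telescoping identity rather than expanding $\log\widehat h$.
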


\begin{remark}\label{rem: approximation theorem follows recent ones}
The above approximation theorem gives a similar result to the one presented in \cite{Carr}, however a closer inspection of our choice of 'generating function' shows a difference in the definition of $\Sigma$. We believe this difference manifests itself due to the dependency of $\delta$ in $N$, appearing as a different dimension factor. 
\end{remark}
The proof of the above theorem is quite technical and will occupy us for the rest of this section. We encourage the reader to skip the rest of this section at first reading, and jump to Section \ref{sec: the main result} to see how the approximation theorem serves to prove the main result.\\
Before we begin we'd like to state a few technical Lemmas.
\begin{lemma}\label{lem: the fourier transform of the generating function}
Let $f_{\delta}$ be as defined in (\ref{eq: definition of the generating function}). Then 
\begin{equation}\label{eq: the fourier transform of h associated to the generating function}
\widehat{h_\delta}(p,t)=\frac{\delta e^{-\frac{\pi^2 |p|^2}{d\delta+2\pi i t}}}{\left(1+\frac{2\pi i t}{d\delta}\right)^{\frac{d}{2}}}+\frac{(1-\delta) e^{-\frac{\pi^2 |p|^2}{d(1-\delta)+2\pi i t}}}{\left(1+\frac{2\pi i t}{d(1-\delta)}\right)^{\frac{d}{2}}},
\end{equation}
where $h_\delta$ is associated to $f_\delta$ via Lemma \ref{lem: law of h} and the Fourier transform is defined in the measure sense.
\end{lemma}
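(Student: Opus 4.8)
The plan is to compute the Fourier transform directly from the explicit description of $h_\delta$ furnished by Lemma \ref{lem: law of h}. Since $dh_\delta(v,u)=f_\delta(v)\delta_{u=|v|^2}(u)\,dv\,du$, pairing against the character $e^{-2\pi i(p\cdot v+tu)}$ collapses the $u$-integration onto $u=|v|^2$, so that
\[
\widehat{h_\delta}(p,t)=\int_{\mathbb{R}^d}e^{-2\pi i p\cdot v}\,e^{-2\pi i t|v|^2}\,f_\delta(v)\,dv.
\]
By linearity of the integral and the definition (\ref{eq: definition of the generating function}) of $f_\delta$, it then suffices to evaluate the single Gaussian-type integral
\[
I_a(p,t):=\int_{\mathbb{R}^d}e^{-2\pi i(p\cdot v+t|v|^2)}\,M_a(v)\,dv
\]
for the two values $a=\tfrac{1}{2d\delta}$ and $a=\tfrac{1}{2d(1-\delta)}$, and afterwards to recombine the two pieces with weights $\delta$ and $1-\delta$.

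To evaluate $I_a$, I would collect the quadratic and linear terms in the exponent and write the integrand as $\tfrac{1}{(2\pi a)^{d/2}}e^{-\alpha|v|^2-2\pi i p\cdot v}$ with $\alpha=\tfrac{1}{2a}+2\pi i t$, noting that $\operatorname{Re}\alpha=\tfrac{1}{2a}>0$. Completing the square and applying the standard identity $\int_{\mathbb{R}^d}e^{-\alpha|v|^2-2\pi i p\cdot v}\,dv=(\pi/\alpha)^{d/2}e^{-\pi^2|p|^2/\alpha}$ — valid for complex $\alpha$ with positive real part, with the principal branch of the power, which one justifies either by analytic continuation in $\alpha$ from the positive real axis (both sides are holomorphic on $\{\operatorname{Re}\alpha>0\}$ and coincide for $\alpha>0$) or by a one-dimensional contour-shift argument in each coordinate — gives $I_a(p,t)=\tfrac{1}{(2a\alpha)^{d/2}}e^{-\pi^2|p|^2/\alpha}$. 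Substituting $\tfrac{1}{2a}=d\delta$ when $a=\tfrac{1}{2d\delta}$ yields $\alpha=d\delta+2\pi i t$, hence $2a\alpha=1+\tfrac{2\pi i t}{d\delta}$ and $\pi^2|p|^2/\alpha=\tfrac{\pi^2|p|^2}{d\delta+2\pi i t}$; the analogous substitution with $\delta$ replaced by $1-\delta$ handles the second term. Multiplying by the respective weights and adding reproduces exactly (\ref{eq: the fourier transform of h associated to the generating function}).

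The only genuinely delicate point is the choice of branch in $(2a\alpha)^{d/2}$ when $d/2$ is not an integer: since $2a\alpha=1+\tfrac{2\pi i t}{d\delta}$ always has real part equal to $1$, it never touches the negative real axis, so the principal branch is continuous in $t$ and equals the positive root at $t=0$, which is the natural normalization making $\widehat{h_\delta}(0,0)=1$; I would state this explicitly where the complex Gaussian formula is invoked. Everything else — the appeal to Lemma \ref{lem: law of h}, linearity, and the absolute convergence of all the integrals, which follows from $f_\delta\in L^1(\mathbb{R}^d)$ together with the Gaussian decay of $M_a$ — is routine and can be dispatched quickly.
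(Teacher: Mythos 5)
Your proposal is correct and follows essentially the same route as the paper: reduce via Lemma \ref{lem: law of h} to a Gaussian-type integral against $e^{-2\pi i(p\cdot v+t|v|^2)}$, evaluate it using the Gaussian Fourier transform analytically continued to $\operatorname{Re}\beta>0$, and combine the two Maxwellians by linearity. The only cosmetic difference is that the paper factorizes the $d$-dimensional integral into a product of one-dimensional ones rather than completing the square in $\mathbb{R}^d$ directly, and your explicit remark on the branch of the $d/2$ power is a welcome extra precision.
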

\begin{proof}
We begin with the known Fourier transform of the Gaussian
\[\int_{\mathbb{R}}e^{-\beta x^2}e^{-2\pi i x \xi}dx=\sqrt{\frac{\pi}{\beta}}e^{-\frac{\pi^2 \xi^2}{\beta}}\]
for $\beta>0$. Since both sides are clearly analytic in $\beta$ for $\textbf{Re}\beta>0$ we find that the equality is still true in that domain.\\
Denoting by $h_a$ the law associated to the couple $\left(V,|V|^2 \right)$ where $V$ has law $M_a$, we notice that by the above remark, Lemma \ref{lem: law of h} and the definition of the Fourier transform of a measure:
\[\widehat{h_a}(p,t)=\int_{\mathbb{R}^d \times [0,\infty)}e^{-2\pi i (p\circ v +tu)}dh(v,u)=\int_{\mathbb{R}^d}M_a(v)e^{-2\pi i (p\circ v +t|v|^2)}\]
\[=\prod_{i=1}^d \frac{1}{\sqrt{2\pi a}}\int_{\mathbb{R}}e^{-v_i^2 \left(\frac{1}{2a}+2\pi i t \right)}\cdot e^{-2\pi i p_i v_i}dv_i=\prod_{i=1}^d \frac{e^{-\frac{\pi^2 p_i^2}{\frac{1}{2a}+2\pi i t}}}{\sqrt{1+4\pi i a t}}=\frac{e^{-\frac{2a\pi^2 |p|^2}{1+4\pi i a t}}}{\left(1+4\pi i a t\right)^{\frac{d}{2}}}.\]
Thus the result follows immediately from the definition of $f_\delta$ and the linearity of the Fourier transform.
\end{proof}
At this point we'll explain why the convolution in (\ref{eq: probabilistic interpretation of the normalization function}) yields a function. The proof of the following Lemma is provided in the Appendix.
\begin{lemma}\label{lem: the convolution yields a function}
Let $h_\delta$ be associated to $f_\delta$ via Lemma \ref{lem: law of h} where $f_\delta$ is defined in (\ref{eq: definition of the generating function}). Then $\widehat{h_\delta}^n\in L^{q}\left(\mathbb{R}^d \times [0,\infty) \right)$ when $n>\frac{2(1+d)}{qd}$. In particular, for every $n>\frac{2(1+d)}{d}$ we have that $\widehat{h_\delta}^n\in L^2\left(\mathbb{R}^d \times [0,\infty) \right)\cap L^1 \left(\mathbb{R}^d \times [0,\infty) \right)$ and thus $h^{\ast n}$ can be viewed as a density.
\end{lemma}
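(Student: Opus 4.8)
The plan is to read off pointwise decay estimates for $\widehat{h_\delta}$ from the explicit formula in Lemma \ref{lem: the fourier transform of the generating function} and then integrate. Writing $a$ for either $\delta$ or $1-\delta$, each of the two summands of $\widehat{h_\delta}(p,t)$ has the shape $a\bigl(1+\tfrac{2\pi i t}{da}\bigr)^{-d/2}\exp\bigl(-\tfrac{\pi^2|p|^2}{da+2\pi i t}\bigr)$, so its modulus equals $a\bigl(1+\tfrac{4\pi^2 t^2}{(da)^2}\bigr)^{-d/4}\exp\bigl(-\pi^2|p|^2\,\mathrm{Re}\tfrac{1}{da+2\pi i t}\bigr)$, and $\mathrm{Re}\tfrac{1}{da+2\pi i t}=\tfrac{da}{(da)^2+4\pi^2 t^2}$. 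With $\delta$ fixed, an elementary comparison then yields constants $C_\delta,c_\delta>0$ such that
\[
\bigl|\widehat{h_\delta}(p,t)\bigr|\le C_\delta\,(1+t^2)^{-d/4}\exp\!\Bigl(-\frac{c_\delta|p|^2}{1+t^2}\Bigr),\qquad (p,t)\in\mathbb{R}^d\times[0,\infty),
\]
the two summands being bounded separately and their exponents combined by keeping the smaller constant.

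Next I would raise this to the power $qn$ and integrate first in $p$, where the Gaussian integral produces a factor proportional to $(1+t^2)^{d/2}$, and then in $t$:
\[
\int_{\mathbb{R}^d\times[0,\infty)}\bigl|\widehat{h_\delta}(p,t)\bigr|^{qn}\,dp\,dt\ \le\ C_\delta'\int_0^\infty (1+t^2)^{\frac{d}{2}-\frac{qnd}{4}}\,dt,
\]
and the last integral converges exactly when $\tfrac{qnd}{4}-\tfrac{d}{2}>\tfrac12$, i.e.\ when $n>\tfrac{2(1+d)}{qd}$, which is the first claim. Specializing, $q=2$ needs $n>\tfrac{1+d}{d}$ while $q=1$ needs $n>\tfrac{2(1+d)}{d}$; since the latter is the larger threshold, $n>\tfrac{2(1+d)}{d}$ puts $\widehat{h_\delta}^n$ in $L^1\cap L^2$ at once. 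To conclude that $h_\delta^{\ast n}$ is a density, I would invoke Fourier inversion for finite measures: $h_\delta$ is a probability measure, hence so is $h_\delta^{\ast n}$, its Fourier transform is $\widehat{h_\delta}^n$ by the convolution theorem, and membership of this in $L^1$ forces $h_\delta^{\ast n}$ to be absolutely continuous with bounded continuous density given by the inversion formula.

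The only genuinely delicate point is the pointwise bound on $|\widehat{h_\delta}|$: one must verify that $\tfrac{da}{(da)^2+4\pi^2 t^2}\ge \tfrac{c_\delta}{1+t^2}$ holds uniformly in $t\ge 0$, which pins $c_\delta$ down to something small relative to both $1/(da)$ (the regime $t=0$) and $da/(4\pi^2)$ (the regime $t\to\infty$); the remaining estimates are routine Gaussian computations. It is worth stressing that $h_\delta$ is itself \emph{singular}, being carried by the paraboloid $\{u=|v|^2\}\subset\mathbb{R}^d\times[0,\infty)$, so the real content of the lemma is that enough self-convolutions regularize it, and the mechanism above — polynomial-times-Gaussian decay of $\widehat{h_\delta}$ raised to a power — is precisely what makes this quantitative.
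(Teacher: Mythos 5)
Your proposal is correct, and the integrability estimate is in substance the same computation the paper performs. The paper's Lemma \ref{lem: h_delta^n is in L^p } expands $\widehat{h_\delta}^{\,n}$ binomially into products $\widehat{h_a}^{\,j}\widehat{h_b}^{\,n-j}$, does the Gaussian $p$-integral first, and then reads off the decay $t^{-d(nq/2-1)}$ at infinity, which gives exactly your threshold $n>\tfrac{2(1+d)}{qd}$; your single uniform pointwise bound $|\widehat{h_\delta}(p,t)|\le C_\delta\,(1+t^2)^{-d/4}\exp\bigl(-c_\delta|p|^2/(1+t^2)\bigr)$ packages the same mechanism more compactly, at the cost of constants that degenerate as $\delta\to 0$ (harmless here, since the lemma is for fixed $\delta$ and the quantitative $\delta$-dependence is handled elsewhere in the paper). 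Where you genuinely diverge is the final step. You invoke the classical inversion theorem for finite measures: $h_\delta^{\ast n}$ is a probability measure, its Fourier transform is $\widehat{h_\delta}^{\,n}$ by the convolution theorem, and $L^1$-membership of the transform forces absolute continuity with a bounded continuous nonnegative density. The paper instead proves this from scratch as Lemma \ref{lem: equivalence of a measure and a function}: it defines the candidate density as the $L^2$ inverse Fourier transform $F_n$ (using the $L^1\cap L^2$ membership and Plancherel to get $\int\varphi\,dh^{\ast n}=\int\varphi\,\overline{F_n}$ for $\varphi\in C_c$), and then identifies the measure with $F_n\,dx$ via Urysohn's lemma and approximation of indicators, deducing positivity and integrability of $F_n$ along the way. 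Your route is shorter and standard; the paper's buys a self-contained argument that also makes explicit why the $L^2$ bound (not just $L^1$) is being carried along. Your closing remark correctly identifies the point of the lemma: $h_\delta$ itself is singular (supported on the paraboloid $u=|v|^2$), and it is only after enough self-convolutions that one obtains an honest density.
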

Next, we state and prove a couple of integral estimations.
\begin{lemma}\label{lem: one dimensional gaussian estimation}
For any $\alpha,\beta>0$ we have that 
\begin{align}\label{eq: one dimensional gaussian estimation}
\int_{x>\beta} e^{-\alpha x^2}dx \leq \sqrt{\frac{\pi}{4\alpha}}e^{-\frac{\alpha \beta^2}{2}}. \\
\int_{x>\beta} xe^{-\alpha x^2}dx \leq \frac{e^{-\frac{\alpha \beta^2}{2}}}{2\alpha} .\\
\int_{0<x\leq\beta} e^{-\alpha x^2}dx\leq \sqrt{\frac{\pi}{4\alpha}}\sqrt{1-e^{-2\alpha \beta^2}}.
\end{align}
\end{lemma}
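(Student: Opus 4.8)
The plan is to treat the three inequalities separately, each by an elementary reduction to a standard Gaussian integral; no heavy machinery is needed. For the second estimate I would simply compute the integral exactly: since $-\frac{1}{2\alpha}e^{-\alpha x^2}$ is an antiderivative of $xe^{-\alpha x^2}$, one has $\int_{x>\beta}xe^{-\alpha x^2}\,dx=\frac{1}{2\alpha}e^{-\alpha\beta^2}$, and then the trivial bound $e^{-\alpha\beta^2}\le e^{-\alpha\beta^2/2}$ gives the stated inequality. This step is immediate.

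For the first estimate I would shift the variable, $x\mapsto x+\beta$, writing $\int_{x>\beta}e^{-\alpha x^2}\,dx=e^{-\alpha\beta^2}\int_0^\infty e^{-\alpha y^2-2\alpha\beta y}\,dy$, and then drop the nonnegative term $-2\alpha\beta y$ in the exponent to obtain $\le e^{-\alpha\beta^2}\int_0^\infty e^{-\alpha y^2}\,dy=e^{-\alpha\beta^2}\sqrt{\pi/(4\alpha)}$. Weakening $e^{-\alpha\beta^2}$ to $e^{-\alpha\beta^2/2}$ (indeed one gets the slightly stronger bound with $e^{-\alpha\beta^2}$) yields the claim. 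An equally good alternative is to use $x^2\ge\tfrac12 x^2+\tfrac12\beta^2$ on $\{x>\beta\}$ and factor out $e^{-\alpha\beta^2/2}$.

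The third estimate is the one that requires a genuine, if classical, idea: the two-dimensional polar-coordinate trick for Gaussian integrals. I would square the quantity, $\left(\int_{0<x\le\beta}e^{-\alpha x^2}\,dx\right)^2=\iint_{[0,\beta]^2}e^{-\alpha(x^2+y^2)}\,dx\,dy$, and observe that the square $[0,\beta]^2$ is contained in the quarter-disc $\{x,y\ge 0,\ x^2+y^2\le 2\beta^2\}$, since its farthest corner $(\beta,\beta)$ lies at distance $\beta\sqrt2$ from the origin. Passing to polar coordinates over that quarter-disc gives $\le\frac{\pi}{2}\int_0^{\beta\sqrt2}re^{-\alpha r^2}\,dr=\frac{\pi}{2}\cdot\frac{1-e^{-2\alpha\beta^2}}{2\alpha}=\frac{\pi}{4\alpha}\left(1-e^{-2\alpha\beta^2}\right)$, and taking square roots produces exactly $\sqrt{\pi/(4\alpha)}\sqrt{1-e^{-2\alpha\beta^2}}$.

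I expect the only mild subtlety — the \emph{main obstacle}, such as it is — to lie in the geometric containment of the third part: one must use the enclosing radius $\beta\sqrt2$ (not $\beta$) and check that the resulting constant $\frac{\pi}{4\alpha}$ matches the target. The first two inequalities are one-line estimates with no real difficulty.
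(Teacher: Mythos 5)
Your proof is correct, and for the second and third inequalities it coincides with the paper's argument: the exact antiderivative computation followed by $e^{-\alpha\beta^2}\le e^{-\alpha\beta^2/2}$, and the squaring/polar-coordinate trick over the quarter-disc of radius $\beta\sqrt{2}$, respectively. For the first inequality you take a genuinely different (and slightly sharper) route: the paper squares the integral and passes to polar coordinates over the region $\{x^2+y^2>\alpha\beta^2,\ x,y>0\}$, arriving at $\sqrt{\pi/(4\alpha)}\,e^{-\alpha\beta^2/2}$, whereas your translation $x\mapsto y+\beta$ with the discarded cross term $e^{-2\alpha\beta y}\le 1$ gives the stronger bound $\sqrt{\pi/(4\alpha)}\,e^{-\alpha\beta^2}$ directly, which is then weakened to the stated form. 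One small caveat: your parenthetical ``equally good alternative'' of using $x^2\ge\tfrac12 x^2+\tfrac12\beta^2$ on $\{x>\beta\}$ is not quite equally good as written, since $\int_0^\infty e^{-\alpha x^2/2}\,dx=\sqrt{\pi/(2\alpha)}$, so that route only yields the constant $\sqrt{\pi/(2\alpha)}$ rather than $\sqrt{\pi/(4\alpha)}$; this does not affect your main argument, which is complete and correct.
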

\begin{proof}
This follows immediately from the next estimations
\[ \int_{x>\beta} e^{-\alpha x^2}dx=\frac{1}{\sqrt{\alpha}}\int_{x>\sqrt{\alpha}\beta}e^{-x^2} dx
=\frac{1}{\sqrt{\alpha}}\sqrt{\int_{x,y>\sqrt{\alpha}\beta}e^{-x^2-y^2} dxdy} \]
\[\leq \frac{1}{\sqrt{\alpha}} \sqrt{\int_{x^2+y^2>\alpha\beta^2, x>0,y>0 }e^{-x^2-y^2} dxdy}=\sqrt{\frac{\pi}{2\alpha}}\sqrt{\int_{r>\sqrt{\alpha}\beta}re^{-r^2}dr} =\sqrt{\frac{\pi}{4\alpha}}e^{-\frac{\alpha \beta^2}{2}}.\]
\[\int_{x>\beta} xe^{-\alpha x^2}dx=\frac{1}{\alpha} \int_{x>\sqrt{\alpha}\beta}xe^{-x^2} dx = \frac{e^{-\alpha \beta^2}}{2\alpha} \leq \frac{e^{-\frac{\alpha \beta^2}{2}}}{2\alpha}. \]
\[\int_{x\leq\beta} e^{-\alpha x^2}dx\leq \frac{1}{\sqrt{\alpha}}\sqrt{\int_{x^2+y^2\leq 2\alpha\beta^2,x>0,y>0}e^{-x^2-y^2} dxdy}=\sqrt{\frac{\pi}{2 \alpha}}\sqrt{\int_{r<\sqrt{2\alpha}\beta}re^{-r^2}dr}\]
\[=\sqrt{\frac{\pi}{4\alpha}}\sqrt{1-e^{-2\alpha \beta^2}}.\]
\end{proof}
\begin{lemma}\label{lem: Gaussian estimation}
\begin{equation}\label{eq: Gaussian estimation}
\int_{|x|>\beta} |x|^{m}e^{-\alpha |x|^2}d^d x \leq \frac{C_{m,d} \max \left(\beta^{m+d-2},\beta^{m+d-4},\dots,1 \right)}{\min \left(\alpha, \alpha^2, \dots, \alpha^{\left[\frac{m+d+2}{2}\right]} \right)}e^{-\frac{\alpha \beta^2}{2}},  
\end{equation}
where $C_{m.d}$ is a constant depending only on $m$ and $d$.
\end{lemma}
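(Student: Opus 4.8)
The plan is to reduce the $d$-dimensional integral to a one-dimensional one and then peel off powers of the radial variable by integration by parts, feeding the base case into Lemma~\ref{lem: one dimensional gaussian estimation}. Since the integrand depends only on $|x|$, passing to polar coordinates gives
\[
\int_{|x|>\beta}|x|^{m}e^{-\alpha|x|^2}\,d^dx=\left\lvert\mathbb{S}^{d-1}\right\rvert\int_\beta^\infty r^{m+d-1}e^{-\alpha r^2}\,dr,
\]
so it suffices to bound $I_k(\alpha,\beta):=\int_\beta^\infty r^{k}e^{-\alpha r^2}\,dr$ for the single exponent $k=m+d-1$, noting $k\geq 1$ because $d\geq 2$.

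Next I would run the integration-by-parts recursion. Writing $r^{k}e^{-\alpha r^2}\,dr=r^{k-1}\,d\!\left(-\tfrac{1}{2\alpha}e^{-\alpha r^2}\right)$ yields
\[
I_k(\alpha,\beta)=\frac{\beta^{k-1}}{2\alpha}e^{-\alpha\beta^2}+\frac{k-1}{2\alpha}I_{k-2}(\alpha,\beta).
\]
Iterating this $\lfloor k/2\rfloor$ times produces a finite sum of boundary terms of the form $c_j\,\alpha^{-(j+1)}\beta^{\,k-1-2j}e^{-\alpha\beta^2}$, with $c_j$ depending only on $k$ (hence only on $m,d$), plus a final constant multiple of $\alpha^{-\lfloor k/2\rfloor}$ times either $I_0(\alpha,\beta)$ or $I_1(\alpha,\beta)$ according to the parity of $k$. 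The remaining one-dimensional integral is controlled directly by Lemma~\ref{lem: one dimensional gaussian estimation}: $I_0\le\sqrt{\pi/(4\alpha)}\,e^{-\alpha\beta^2/2}$ and $I_1\le\tfrac{1}{2\alpha}e^{-\alpha\beta^2/2}$.

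Finally I would collect the terms. Using $e^{-\alpha\beta^2}\le e^{-\alpha\beta^2/2}$, every summand carries the common factor $e^{-\alpha\beta^2/2}$. In each boundary term the numerator $\beta^{\,k-1-2j}$ (with $0\le j\le\lfloor k/2\rfloor-1$) is one of $\beta^{m+d-2},\beta^{m+d-4},\dots$ and hence $\le\max(\beta^{m+d-2},\beta^{m+d-4},\dots,1)$, while the term coming from the base integral contributes the exponent $0$, i.e. the ``$1$'' in that maximum. Likewise each denominator power is $\alpha^{j+1}$ with $j+1\le\lfloor(m+d+2)/2\rfloor$, or (from the $I_0$ case) $\alpha^{\lfloor k/2\rfloor+1/2}$; checking the cases $\alpha\ge 1$ and $\alpha<1$ separately and using $\alpha^{-1/2}\le\max(1,\alpha^{-1})$, one sees $\alpha^{-\ell}\le 1/\min(\alpha,\alpha^2,\dots,\alpha^{\lfloor(m+d+2)/2\rfloor})$ for every relevant $\ell$. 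Hence each of the $O(m+d)$ summands is bounded by a constant times $\frac{\max(\beta^{m+d-2},\dots,1)}{\min(\alpha,\dots,\alpha^{\lfloor(m+d+2)/2\rfloor})}e^{-\alpha\beta^2/2}$; summing and absorbing all numerical constants together with $\lvert\mathbb{S}^{d-1}\rvert$ into $C_{m,d}$ gives \eqref{eq: Gaussian estimation}.

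The whole computation is elementary; the only delicate point is the bookkeeping — tracking the exact exponents of $\beta$ and $\alpha$ produced by the recursion, matching them against the stated maximum and minimum, and handling the parities of $m+d$ together with the base cases $k\in\{1,2\}$ correctly.
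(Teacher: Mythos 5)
Your proposal is correct and follows essentially the same route as the paper: reduction to a radial integral, repeated integration by parts down to the $j=0,1$ base cases of Lemma \ref{lem: one dimensional gaussian estimation}, and the same bookkeeping of the resulting powers of $\beta$ and $\alpha$ against the stated maximum and minimum. The only cosmetic difference is that the paper first rescales $x\mapsto\sqrt{\alpha}x$ before integrating by parts, which is equivalent to your recursion.
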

\begin{proof}
First we notice that 
\[\int_{|x|>\beta} |x|^{m}e^{-\alpha |x|^2}d^d x=C_{d,m} \int_{r>\beta} r^{m+d-1} e^{-\alpha r^2}dr, \]
where $C_{d,m}$ is a constant depending only on $m$ and $d$.\\
Lemma \ref{lem: one dimensional gaussian estimation} tells us that 
\[\int_{x>\beta} x^j e^{-\alpha x^2}dx \leq \frac{Ce^{-\frac{\alpha \beta^2}{2}}}{\min(\alpha,\sqrt{\alpha})},\]
where $j=0,1$. For $j>1$ we have that 
\[\int_{x>\beta} x^j e^{-\alpha x^2}dx=\frac{1}{\alpha^{\frac{j+1}{2}}}\int_{x>\sqrt{\alpha}\beta}x^j e^{-x^2} dx\]
\[=\frac{1}{\alpha^{\frac{j+1}{2}}}\left(-\frac{x^{j-1}e^{-x^2}}{2}\vert_{\sqrt{\alpha}\beta}^{\infty}+\frac{j-1}{2}\int_{x>\sqrt{\alpha}\beta}x^{j-2} e^{-x^2} dx \right)\]
\[=\frac{1}{\alpha^{\frac{j+1}{2}}}\left(\frac{\alpha^{\frac{j-1}{2}}\beta^{j-1}e^{-\alpha \beta^2}}{2}+\frac{j-1}{2}\int_{x>\sqrt{\alpha}\beta}x^{j-2} e^{-x^2} dx \right).\]
Continuing to integrate by parts yields 
\[\int_{x>\beta} x^j e^{-\alpha x^2}dx\leq \frac{C_j}{\alpha^{\frac{j+1}{2}}}
\left( \alpha^{\frac{j-1}{2}}\beta^{j-1}e^{-\alpha \beta^2}+\alpha^{\frac{j-3}{2}}\beta^{j-3}e^{-\alpha \beta^2}+\dots+ \int_{x>\sqrt{\alpha}\beta} x^{\tilde{j}} e^{- x^2}dx \right)\]
\[=C_j \left( \frac{\beta^{j-1}e^{-\alpha \beta^2}}{\alpha}+\frac{\beta^{j-3}e^{-\alpha \beta^2}}{\alpha^2}+\dots+ \frac{1}{\alpha^{\frac{j+1}{2}}}\int_{x>\sqrt{\alpha}\beta} x^{\tilde{j}} e^{- x^2}dx  \right),\]
where $C_j$ is a constant depending only on $j$ and $\tilde{j}=0,1$. Using our previous estimation we conclude that 
\[\int_{x>\beta} x^j e^{-\alpha x^2}dx \leq \frac{C_j \max \left(\beta^{j-1},\beta^{j-3},\dots,1 \right)}{\min \left(\alpha, \alpha^2, \dots, \alpha^{\left[\frac{j+3}{2}\right]} \right)}e^{-\frac{\alpha \beta^2}{2}},\]
completing the proof.
\end{proof}
\begin{remark}\label{rem: special case alpha>1 and beta<1}
In the special case where $\alpha\geq 1$ and $\beta\leq 1$ we get the estimation
\begin{equation}\label{eq: Special Gaussian estimation with alpha>1 and beta<1}
\int_{|x|>\beta} |x|^{m}e^{-\alpha |x|^2}d^d x \leq
\frac{C_{m,d}}{\alpha}e^{-\frac{\alpha \beta^2}{2}}.
\end{equation}
\end{remark}
Lastly, we notice three things:
\begin{enumerate}
\item It is easy to show that $\widehat{\gamma_N}(p,t)=\widehat{\gamma_1}^N(p,t)$ where 
\[\gamma_1(p,t)=e^{-\frac{2\pi^2 |p|^2}{d}}\cdot e^{-2\pi i t}\cdot e^{-2\pi^2 \Sigma^2_{\delta}t^2}.\]
\item An estimation we'll constantly use is the following: For any $0\leq k \leq N-1$ we have that 
\begin{equation}\label{eq: estimation of the kth power of abs h time the N-k th power of gamma 1}
\begin{gathered}
\left\lvert \widehat{h}(p,t) \right\rvert ^k \left\lvert \widehat{\gamma_1}(p,t) \right\rvert ^{N-k-1}\\
\leq \sum_{j=0}^{k} \left( \begin{array} {c} k \\ j \end{array} \right)
\frac{\delta^j}{\left(1+\frac{4\pi^2 t^2}{d^2\delta^2} \right)^{\frac{dj}{4}}}
\frac{(1-\delta)^{k-j}}{\left(1+\frac{4\pi^2 t^2}{d^2(1-\delta)^2} \right)^{\frac{d(k-j)}{4}}}\\
\cdot e^{-\pi^2 |p|^2 \left(\frac{jd\delta}{d^2 \delta^2+4\pi^2 t^2}+ \frac{(k-j)d(1-\delta)}{d^2 (1-\delta)^2+4\pi^2 t^2}+\frac{2(N-k-1)}{d} \right)}
e^{-2\pi^2(N-k-1) \Sigma^2 t^2}.
\end{gathered}
\end{equation}
\item $\left\lvert \widehat{h}(p,t) \right\rvert \leq 1$ and $\left\lvert \widehat{\gamma_1}(p,t) \right\rvert \leq 1$.
\end{enumerate}
In order to prove an our approximation theorem we need to divide the phasespace domain $\mathbb{R}^d\times\mathbb{R}$ into three domains. The following subsections deal with that division, and end in the proof of Theorem \ref{thm: main approximation theorem for the normalization function}.
\subsection{Large $t$, any $p$: $\vert t \vert > \frac{d \delta^{1+\beta}}{4\pi}$. }\label{subsec: large t any p}
The main theorem of this subsection is the following:
\begin{theorem}\label{thm: large t case}
\begin{equation}\label{eq: large t case}
\begin{gathered}
\iint_{\mathbb{R}^d \times |t|>\frac{d\delta^{1+\beta}}{4\pi}}\left\lvert \widehat{h}^N(p,t)-\widehat{\gamma_1}^N(p,t) \right\rvert dpdt \\
 \leq \frac{N C_d}{\left( N-2 \right)^\frac{d+1}{2} \Sigma}\cdot e^{-\frac{d^2 (N-2)\Sigma^2 \delta^{2+2\beta}}{32}} +\frac{NC_d}{\Sigma}\left(1-\frac{d\delta^{1+2\beta}}{16}+\delta^{1+4\beta}\xi(\delta) \right)^{\frac{N}{2}}\cdot e^{-\frac{d^2\Sigma^2\delta^{2+2\beta}}{32}} \\
 +C_d \left(1-\frac{d\delta^{1+2\beta}}{16}+\delta^{1+4\beta}\xi(\delta) \right)^{N-5},
\end{gathered}
\end{equation}
where $C_d$ is a constant depending only on $d$ and $\xi$ is analytic in $|x|<\frac{1}{2}$.
\end{theorem}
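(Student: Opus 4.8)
The plan is to bound the difference $\widehat{h}^N - \widehat{\gamma_1}^N$ pointwise by a telescoping sum and then integrate over the region $|t| > \frac{d\delta^{1+\beta}}{4\pi}$ using the Gaussian-type estimates from Lemmas \ref{lem: one dimensional gaussian estimation} and \ref{lem: Gaussian estimation}. First I would write
\[
\widehat{h}^N - \widehat{\gamma_1}^N = \left(\widehat{h} - \widehat{\gamma_1}\right)\sum_{k=0}^{N-1}\widehat{h}^k\,\widehat{\gamma_1}^{N-1-k},
\]
so that, since $|\widehat{h}|,|\widehat{\gamma_1}|\leq 1$ (item (3) above), it suffices to control $\bigl|\widehat h - \widehat\gamma_1\bigr|$ on one side and $\sum_k |\widehat h|^k|\widehat\gamma_1|^{N-1-k}$ on the other. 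Rather than estimating $|\widehat h - \widehat\gamma_1|$ directly, I would split the sum at an index around $N/2$: for the ``small $k$'' part use many factors of $|\widehat\gamma_1|^{N-1-k}$ (a genuine Gaussian in $p$ and $t$ with a definite decay rate), and for the ``large $k$'' part extract the decay from the factors of $|\widehat h|$, whose modulus, by Lemma \ref{lem: the fourier transform of the generating function}, is bounded by a convex combination $\delta(1+\tfrac{4\pi^2 t^2}{d^2\delta^2})^{-d/4} + (1-\delta)(1+\tfrac{4\pi^2 t^2}{d^2(1-\delta)^2})^{-d/4}$ times a Gaussian in $p$.

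The key step is the following elementary bound on $|\widehat h(p,t)|$ when $|t| > \frac{d\delta^{1+\beta}}{4\pi}$, i.e. $\frac{4\pi^2 t^2}{d^2} > \delta^{2+2\beta}$. In that regime the dominant term in the convex combination is the $(1-\delta)$ term, and Taylor-expanding $(1+x)^{-d/4}$ one gets a bound of the form
\[
\left\lvert \widehat h(p,t) \right\rvert \leq 1 - \frac{d\delta^{1+2\beta}}{16} + \delta^{1+4\beta}\xi(\delta),
\]
for some function $\xi$ analytic in $|x|<\tfrac12$, uniformly in $p$ — this is exactly where the explicit $\xi$ and the $1-\tfrac{d\delta^{1+2\beta}}{16}$ in the statement come from. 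Raising this to a power $\gtrsim N$ gives the third term on the right-hand side of \eqref{eq: large t case}. For the ``small $k$'' part, I would carry out the $p$-integral first: $\int_{\mathbb{R}^d} e^{-c(N-k-1)|p|^2}\,dp = (\pi/c(N-k-1))^{d/2}$ produces the $(N-2)^{-(d+1)/2}$-type prefactor (one power of $1/\sqrt{N}$ from each of the $d$ momentum directions plus one from the $t$-integral below), and then the $t$-integral $\int_{|t|>\beta'} e^{-2\pi^2(N-k-1)\Sigma^2 t^2}\,dt$ is handled by the first inequality of Lemma \ref{lem: one dimensional gaussian estimation}, giving the factor $\Sigma^{-1} e^{-d^2(N-2)\Sigma^2\delta^{2+2\beta}/32}$. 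Summing the geometric-type series in $k$ contributes the overall factor $N$, and combining the contributions from the $\widehat h^k$ and $\widehat\gamma_1^{N-1-k}$ bookkeeping yields the three terms displayed.

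The main obstacle I anticipate is the bookkeeping in the split sum: one must choose the splitting index (here essentially $N/2$) so that \emph{both} halves decay, track carefully which exponential rate survives after the $p$- and $t$-integrations (the $\Sigma^2$ inside the exponent is small, of order $\delta_N^{-1}$, so the product $(N-2)\Sigma^2\delta^{2+2\beta}$ must be shown to still tend to infinity under the constraints on $\eta$ and $\beta$), and make sure the analytic remainder $\delta^{1+4\beta}\xi(\delta)$ from the Taylor expansion of $(1+x)^{-d/4}$ is genuinely lower order than $\tfrac{d\delta^{1+2\beta}}{16}$ — which forces the restriction $\delta<\tfrac12$ hidden in ``$\xi$ analytic in $|x|<\tfrac12$''. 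The rest is a routine, if lengthy, application of the Gaussian integral estimates already assembled in Lemmas \ref{lem: one dimensional gaussian estimation} and \ref{lem: Gaussian estimation} together with item (2), and I would relegate the most tedious constant-chasing to the Appendix.
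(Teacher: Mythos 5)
Your overall strategy is the same as the paper's: telescope $\widehat{h}^N-\widehat{\gamma_1}^N$, split the sum over $k$ near $N/2$, extract the $t$- and $p$-decay from the many $\widehat{\gamma_1}$ factors when $k$ is small and from the factors of $\widehat{h}$ when $k$ is large, and use the uniform-in-$p$ bound
$\lvert\widehat h(p,t)\rvert\leq \tfrac{\delta}{(1+4\pi^2t^2/(d^2\delta^2))^{d/4}}+\tfrac{1-\delta}{(1+4\pi^2t^2/(d^2(1-\delta)^2))^{d/4}}\leq 1-\tfrac{d\delta^{1+2\beta}}{16}+\delta^{1+4\beta}\xi(\delta)$ on the region $|t|>\tfrac{d\delta^{1+\beta}}{4\pi}$, which is exactly inequality (\ref{eq: smallish t exponential term of delta}). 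The first two terms of (\ref{eq: large t case}) come out of your two halves essentially as you describe. But there is a genuine gap at the endpoint $k=N-1$, and it is not cosmetic: for every $k\leq N-2$ at least one surviving factor of $\lvert\widehat{\gamma_1}\rvert$ supplies the Gaussian $e^{-2\pi^2|p|^2/d}$ that makes the $p$-integral converge, whereas for $k=N-1$ all the $p$-decay must come from $\lvert\widehat h\rvert^{N-1}$ itself. The Gaussian in $p$ hidden in $\widehat h$ has inverse variance of order $\tfrac{jd\delta}{d^2\delta^2+4\pi^2t^2}$, which degenerates as $|t|\to\infty$, so the $p$-integral of $\lvert\widehat h\rvert^{N-1}$ grows like $(1+|t|^d)$ in $t$ (this is estimate (\ref{eq: mid estimation})). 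One must therefore reserve a few powers of the convex combination --- each decaying only like $|t|^{-d/2}$ --- to render $\int(1+|t|^d)\,(\cdots)^{4}\,dt$ finite; this is precisely why the third term of the theorem carries the exponent $N-5$ rather than $N-1$ or $N/2$, a feature your sketch does not account for. As written, bounding $\lvert\widehat h\rvert^{N-1}$ by the $p$-independent quantity $(1-\tfrac{d\delta^{1+2\beta}}{16}+\cdots)^{N-1}$ and then integrating over $p\in\mathbb{R}^d$ diverges.

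A second, smaller point: you attribute the decay $1-\tfrac{d\delta^{1+2\beta}}{16}$ to the $(1-\delta)$-weighted term of the convex combination. In fact the deficit of order $\delta^{1+2\beta}$ comes from the $\delta$-weighted ("hot") Maxwellian: at $|t|\sim\delta^{1+\beta}$ one has $\tfrac{4\pi^2t^2}{d^2\delta^2}\gtrsim\tfrac{\delta^{2\beta}}{4}$, so $\delta(1+\tfrac{4\pi^2t^2}{d^2\delta^2})^{-d/4}\leq\delta-\tfrac{d\delta^{1+2\beta}}{16}+\cdots$, while the $(1-\delta)$ term only loses $O(\delta^{2+2\beta})$. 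This matters quantitatively: a decay rate of $\delta^{2+2\beta}$ instead of $\delta^{1+2\beta}$ would make $N\delta_N^{2+2\beta}$ bounded for the admissible $\delta_N=N^{\eta-1}$, and the exponential factors in (\ref{eq: large t case}) would no longer tend to zero. So the source of the leading correction is not interchangeable. With the $k=N-1$ case treated separately along the lines above, and the Taylor expansion organized around the $\delta$-weighted term, your argument matches the paper's proof.
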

In order to prove the above theorem we need a series of Lemmas and small computations.\\
We start by noticing that due to (\ref{eq: estimation of the kth power of abs h time the N-k th power of gamma 1}) we have
\begin{equation}\label{eq: estimation of the p-integral of the kth power of abs h time the N-k th power of gamma 1}
\begin{gathered}
\int_{\mathbb{R}^d} \left\lvert \widehat{h}(p,t) \right\rvert ^k \left\lvert \widehat{\gamma_1}(p,t) \right\rvert ^{N-k-1} dp\\
\leq C_d \sum_{j=0}^{k} \left( \begin{array} {c} k \\ j \end{array} \right)
\frac{\delta^j}{\left(1+\frac{4\pi^2 t^2}{d^2\delta^2} \right)^{\frac{dj}{4}}}
\frac{(1-\delta)^{k-j}}{\left(1+\frac{4\pi^2 t^2}{d^2(1-\delta)^2} \right)^{\frac{d(k-j)}{4}}} \\
 \cdot \frac{e^{-2\pi^2(N-k-1) \Sigma^2 t^2}}{\left( \frac{jd\delta}{d^2 \delta^2+4\pi^2 t^2}+ \frac{(k-j)d(1-\delta)}{d^2 (1-\delta)^2+4\pi^2 t^2}+\frac{2(N-k-1)}{d} \right)^{\frac{d}{2}}},
\end{gathered}
\end{equation}
where $C_d=\frac{1}{\pi^d}\int_{\mathbb{R}^d} e^{-|z|^2}dz$. \\
Next, we see that 
\begin{equation}\label{eq: main estimation of the p integration}
\begin{gathered}
\frac{1}{\left( \frac{jd\delta}{d^2 \delta^2+4\pi^2 t^2}+\frac{(k-j)d(1-\delta)}{d^2 (1-\delta)^2+4\pi^2 t^2}+\frac{2(N-k-1)}{d} \right)^{\frac{d}{2}}} \\
 \leq \min \left(\frac{\left( d^2 \delta^2+4\pi^2 t^2 \right)^{\frac{d}{2}}}{\left( jd\delta \right)^{\frac{d}{2}}},\frac{\left( d^2 (1-\delta)^2+4\pi^2 t^2 \right)^{\frac{d}{2}}}{\left( (k-j)d(1-\delta) \right)^{\frac{d}{2}}},\frac{d^{\frac{d}{2}}}{\left( 2(N-k-1) \right)^{\frac{d}{2}}}\right).
\end{gathered}
\end{equation}
Also, since 
\[d^2 \delta^2+4\pi^2 t^2 \leq d^2+4 \pi^2 t^2 ,\]
\[d^2 (1-\delta)^2+4\pi^2 t^2 \leq d^2+4\pi^2 t^2 ,\]
when $0<\delta<1$, we have that
\begin{equation}\label{eq: mid estimation}
\max \left( \left(d^2 \delta^2+4\pi^2 t^2\right)^{\frac{d}{2}}, \left(d^2 (1-\delta)^2+4\pi^2 t^2 \right)^{\frac{d}{2}} \right)
\leq A_d (1+|t|^d),
\end{equation}
where $A_d$ is a constant depending only on $d$.
We are now ready to state and prove our first Lemma.
\begin{lemma}\label{lem: large t sum up to N/2}
\begin{equation}\label{eq: large t sum up to N/2}
\begin{gathered}
\sum_{k=0}^{\left[ \frac{N}{2} \right]} \iint_{\mathbb{R}^d \times |t|>\frac{d\delta^{1+\beta}}{4\pi}}  \left\lvert \widehat{h}(p,t)-\widehat{\gamma_1}(p,t) \right\rvert \left\lvert \widehat{h}(p,t)\right\rvert ^k \left\lvert \widehat{\gamma_1}(p,t) \right\rvert ^{N-k-1} dpdt \\
\leq \frac{N C_d}{\left( N-2 \right)^\frac{d+1}{2} \Sigma}\cdot e^{-\frac{d^2 (N-2)\Sigma^2 \delta^{2+2\beta}}{32}},
\end{gathered}
\end{equation}
where $C_d$ is a constant depending only on $d$.
\end{lemma}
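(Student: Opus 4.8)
The plan is to bound the inner $p$-integral uniformly in $k$ over the range $0\le k\le[N/2]$, using the three ingredients assembled immediately before the statement, and then to carry out the $t$-integration with the Gaussian tail bound of Lemma \ref{lem: one dimensional gaussian estimation}. I do not expect a genuine obstacle here; the argument is a careful bookkeeping of estimates already in hand. Two points require attention: (i) selecting the correct entry of the minimum in (\ref{eq: main estimation of the p integration}) and exploiting $k\le N/2$ to replace $N-k-1$ by $\frac{N-2}{2}$ \emph{both} in the polynomial prefactor and in the Gaussian exponent; and (ii) keeping track of the numerical constants so that the final one depends only on $d$.

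First I would fix $k$ with $0\le k\le[N/2]$ and $t$ with $|t|>\frac{d\delta^{1+\beta}}{4\pi}$. Since $|\widehat h|\le1$ and $|\widehat{\gamma_1}|\le1$, we have $|\widehat h-\widehat{\gamma_1}|\le 2$, so it is enough to estimate $\int_{\mathbb{R}^d}|\widehat h|^k|\widehat{\gamma_1}|^{N-k-1}\,dp$. In (\ref{eq: estimation of the p-integral of the kth power of abs h time the N-k th power of gamma 1}), bound the denominator by the third entry of the minimum in (\ref{eq: main estimation of the p integration}), namely $\frac{d^{d/2}}{(2(N-k-1))^{d/2}}$, which for $k\le N/2$ is at most $\frac{d^{d/2}}{(N-2)^{d/2}}$ since $2(N-k-1)\ge N-2$. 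Each factor $\bigl(1+4\pi^2t^2/(d^2\delta^2)\bigr)^{-dj/4}$ and its $(1-\delta)$-counterpart is $\le1$, so $\sum_{j=0}^{k}\binom{k}{j}\delta^j(1-\delta)^{k-j}(\cdots)\le(\delta+(1-\delta))^k=1$. Finally, $N-k-1\ge\frac{N-2}{2}$ for $k\le N/2$ gives $e^{-2\pi^2(N-k-1)\Sigma^2t^2}\le e^{-\pi^2(N-2)\Sigma^2t^2}$. Combining these,
\[
\int_{\mathbb{R}^d}|\widehat h-\widehat{\gamma_1}|\,|\widehat h|^{k}|\widehat{\gamma_1}|^{N-k-1}\,dp\ \le\ \frac{2C_d\,d^{d/2}}{(N-2)^{d/2}}\,e^{-\pi^2(N-2)\Sigma^2 t^2}\qquad\text{uniformly in }k.
\]

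Next I would integrate over $|t|>\frac{d\delta^{1+\beta}}{4\pi}$. By the first inequality of Lemma \ref{lem: one dimensional gaussian estimation} applied on each half-line, with $\alpha=\pi^2(N-2)\Sigma^2$ and threshold $\frac{d\delta^{1+\beta}}{4\pi}$, and using $\frac{\alpha}{2}\bigl(\frac{d\delta^{1+\beta}}{4\pi}\bigr)^2=\frac{d^2(N-2)\Sigma^2\delta^{2+2\beta}}{32}$ together with $2\sqrt{\pi/(4\alpha)}=\frac{1}{\sqrt{\pi}\,\sqrt{N-2}\,\Sigma}$,
\[
\int_{|t|>\frac{d\delta^{1+\beta}}{4\pi}}e^{-\pi^2(N-2)\Sigma^2 t^2}\,dt\ \le\ \frac{1}{\sqrt{\pi}\,\sqrt{N-2}\,\Sigma}\,e^{-\frac{d^2(N-2)\Sigma^2\delta^{2+2\beta}}{32}}.
\]
Multiplying the two displays merges $(N-2)^{d/2}\cdot\sqrt{N-2}$ into $(N-2)^{(d+1)/2}$, so each summand on the left of (\ref{eq: large t sum up to N/2}) is at most $\frac{2C_d\,d^{d/2}}{\sqrt{\pi}\,(N-2)^{(d+1)/2}\,\Sigma}\,e^{-\frac{d^2(N-2)\Sigma^2\delta^{2+2\beta}}{32}}$. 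Finally, since the sum has $[N/2]+1\le N$ terms (for $N\ge2$) and the bound just found is independent of $k$, summing and absorbing $2C_d d^{d/2}/\sqrt{\pi}$ into a single constant depending only on $d$ produces exactly the asserted estimate
\[
\frac{N\,C_d}{(N-2)^{(d+1)/2}\,\Sigma}\,e^{-\frac{d^2(N-2)\Sigma^2\delta^{2+2\beta}}{32}},
\]
understood for $N\ge3$ so that $N-2>0$.
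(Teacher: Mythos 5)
Your proposal is correct and follows essentially the same route as the paper: bound the $p$-integral via (\ref{eq: estimation of the p-integral of the kth power of abs h time the N-k th power of gamma 1}) and the third entry of the minimum in (\ref{eq: main estimation of the p integration}), collapse the binomial sum to $1$, halve the exponent using $N-k-1\ge\frac{N-2}{2}$, and finish with the Gaussian tail bound of Lemma \ref{lem: one dimensional gaussian estimation}. The constant tracking and the count of $[N/2]+1\le N$ summands match the paper's argument as well.
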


\begin{proof}
Since $\left\lvert \widehat{h}(p,t) \right\rvert \leq 1$ and $\left\lvert \widehat{\gamma_1}(p,t) \right\rvert \leq 1$, we find that along with inequality (\ref{eq: estimation of the p-integral of the kth power of abs h time the N-k th power of gamma 1}), inequality (\ref{eq: main estimation of the p integration}) and the fact that $k\leq \frac{N}{2}$ we have
\[\sum_{k=0}^{\left[ \frac{N}{2} \right]} \iint_{\mathbb{R}^d \times |t|>\frac{d\delta^{1+\beta}}{4\pi}}  \left\lvert \widehat{h}(p,t)-\widehat{\gamma_1}(p,t) \right\rvert \left\lvert \widehat{h}(p,t)\right\rvert ^k \left\lvert \widehat{\gamma_1}(p,t) \right\rvert ^{N-k-1} dpdt\]
\[\leq \frac{2C_d d^\frac{d}{2}}{(N-2)^\frac{d}{2}} \sum_{k=0}^{\left[ \frac{N}{2} \right]} \int_{|t|>\frac{d \delta^{1+\beta}}{4\pi}} \sum_{j=0}^{k} \left( \begin{array} {c} k \\ j \end{array} \right) \frac{\delta^j}{\left(1+\frac{4\pi^2 t^2}{d^2\delta^2} \right)^{\frac{dj}{4}}} \frac{(1-\delta)^{k-j}}{\left(1+\frac{4\pi^2 t^2}{d^2(1-\delta)^2} \right)^{\frac{d(k-j)}{4}}}  \cdot e^{-\pi^2(N-2) \Sigma^2 t^2}\]
\[=\frac{2C_d d^\frac{d}{2}}{(N-2)^\frac{d}{2}} \sum_{k=0}^{\left[ \frac{N}{2} \right]} \int_{|t|>\frac{d \delta^{1+\beta}}{4\pi}} \left(\frac{\delta}{\left(1+\frac{4\pi^2 t^2}{d^2\delta^2} \right)^{\frac{d}{4}}} + \frac{(1-\delta)}{\left(1+\frac{4\pi^2 t^2}{d^2(1-\delta)^2} \right)^{\frac{d}{4}}} \right)^k    \cdot e^{-\pi^2(N-2) \Sigma^2 t^2} \]
\[\leq \frac{2C_d d^\frac{d}{2}}{(N-2)^\frac{d}{2}} \sum_{k=0}^{\left[ \frac{N}{2} \right]} \int_{|t|>\frac{d \delta^{1+\beta}}{4\pi}} e^{-\pi^2(N-2) \Sigma^2 t^2}
\leq \frac{NC_d d^\frac{d}{2}}{(N-2)^\frac{d}{2}} \int_{|t|>\frac{d \delta^{1+\beta}}{4\pi}} e^{-\pi^2(N-2) \Sigma^2 t^2}\]
\[ \leq \frac{N C_d}{\left( N-2 \right)^\frac{d+1}{2} \Sigma}\cdot e^{-\frac{d^2 (N-2)\Sigma^2 \delta^{2+2\beta}}{32}},\]
where the last inequality is due to Lemma \ref{lem: one dimensional gaussian estimation}. This concludes the proof.
\end{proof}
For the following Lemmas we will need the next observation: In our domain we have that
\[\frac{\delta}{\left(1+\frac{4\pi^2 t^2}{d^2 \delta^2} \right)^{\frac{d}{4}}}+\frac{1-\delta}{\left(1+\frac{4\pi^2 t^2}{d^2 (1-\delta)^2} \right)^{\frac{d}{4}}}\leq \frac{\delta}{\left(1+\frac{\delta^{2\beta}}{4} \right)^{\frac{d}{4}}}+\frac{1-\delta}{\left(1+\frac{\delta^{2+2\beta}}{4(1-\delta)^2} \right)^{\frac{d}{4}}}\]
\[=\delta \left(1-\frac{d\delta^{2\beta}}{16}+\delta^{4\beta}\phi\left(\delta^{2\beta}\right) \right)+(1-\delta) \left(1-\frac{d\delta^{2+2\beta}}{16(1-\delta)^2}+\frac{\delta^{4+4\beta}}{(1-\delta)^4}\phi\left(\frac{\delta^{2+2\beta}}{(1-\delta)^2}\right) \right),\]
where $\phi$ in analytic in $|x|<\frac{1}{2}$. Opening the parenthesis leads to
\[=1-\frac{d\delta^{1+2\beta}}{16}-\frac{d\delta^{2+2\beta}}{16(1-\delta)}+\delta^{1+4\beta}\phi\left(\delta^{2\beta}\right) 
+\frac{\delta^{4+4\beta}}{(1-\delta)^3}\phi\left(\frac{\delta^{2+2\beta}}{(1-\delta)^2} \right).\]
which we can write as the inequality:
\begin{equation}\label{eq: smallish t exponential term of delta}
\frac{\delta}{\left(1+\frac{4\pi^2 t^2}{d^2 \delta^2} \right)^{\frac{d}{4}}}+\frac{1-\delta}{\left(1+\frac{4\pi^2 t^2}{d^2 (1-\delta)^2} \right)^{\frac{d}{4}}}
\leq 1-\frac{d\delta^{1+2\beta}}{16}+\delta^{1+4\beta}\xi(\delta).
\end{equation}
where $\xi$ is analytic in $|x|<\frac{1}{2}$.\\

We're now ready to state and prove our second Lemma.
\begin{lemma}\label{lem: large t sum form N/2 up to N-2}
\begin{equation}\label{eq: large t sum from N/2 up to N-2}
\begin{gathered}
\sum_{k={\left[ \frac{N}{2} \right]+1}}^{N-2} \iint_{\mathbb{R}^d \times |t|>\frac{d\delta^{1+\beta}}{4\pi}}  \left\lvert \widehat{h}(p,t)-\widehat{\gamma_1}(p,t) \right\rvert \left\lvert \widehat{h}(p,t)\right\rvert ^k \left\lvert \widehat{\gamma_1}(p,t) \right\rvert ^{N-k-1} dpdt \\
\leq \frac{NC_d}{2\Sigma}\left(1-\frac{d\delta^{1+2\beta}}{16}+\delta^{1+4\beta}\xi(\delta) \right)^{\frac{N}{2}}\cdot e^{-\frac{d^2\Sigma^2\delta^{2+2\beta}}{32}},
\end{gathered}
\end{equation}
where $C_d$ is a constant depending only on $d$ and $\xi$ is analytic in $|x|<\frac{1}{2}$.
\end{lemma}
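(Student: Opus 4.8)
The plan is to exploit the fact that, in this range of $k$, the exponent on $\lvert\widehat h\rvert$ is large (at least $N/2$), while $k\le N-2$ still leaves at least one factor of $\widehat{\gamma_1}$ — precisely enough to make the $p$\nobreakdash-integral converge uniformly in $N$ and to produce a Gaussian in $t$. Concretely, I would first use $\lvert\widehat h\rvert\le1$ and $\lvert\widehat{\gamma_1}\rvert\le1$ to replace the factor $\lvert\widehat h-\widehat{\gamma_1}\rvert$ by the harmless constant $2$, so that it suffices to bound $\iint_{\mathbb R^d\times\{\lvert t\rvert>d\delta^{1+\beta}/4\pi\}}\lvert\widehat h\rvert^{k}\lvert\widehat{\gamma_1}\rvert^{N-k-1}\,dp\,dt$ for each $k$ and then sum.

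For the inner ($p$) integral I would start from (\ref{eq: estimation of the kth power of abs h time the N-k th power of gamma 1}), discard the two nonnegative terms in the coefficient of $-\pi^2\lvert p\rvert^2$ so that it is bounded below by $2(N-k-1)/d\ge2/d$, and integrate each summand in $p$; this produces a factor $\bigl(d/2\pi(N-k-1)\bigr)^{d/2}\le C_d$, and the remaining sum over $j$ collapses by the binomial theorem, giving
\[
\int_{\mathbb R^d}\lvert\widehat h(p,t)\rvert^{k}\lvert\widehat{\gamma_1}(p,t)\rvert^{N-k-1}\,dp\le C_d\Bigl(\tfrac{\delta}{(1+\frac{4\pi^2t^2}{d^2\delta^2})^{d/4}}+\tfrac{1-\delta}{(1+\frac{4\pi^2t^2}{d^2(1-\delta)^2})^{d/4}}\Bigr)^{\!k}e^{-2\pi^2(N-k-1)\Sigma^2t^2}.
\]
On the domain $\lvert t\rvert>d\delta^{1+\beta}/4\pi$ the base of the $k$\nobreakdash-th power is at most $q_\delta:=1-\frac{d\delta^{1+2\beta}}{16}+\delta^{1+4\beta}\xi(\delta)$, which lies in $(0,1)$ once $\delta$ is small, by (\ref{eq: smallish t exponential term of delta}). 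It then remains to integrate in $t$: with $\alpha=2\pi^2(N-k-1)\Sigma^2\ge2\pi^2\Sigma^2$ and lower cut\nobreakdash-off $d\delta^{1+\beta}/4\pi$, Lemma~\ref{lem: one dimensional gaussian estimation} gives $\int_{\lvert t\rvert>d\delta^{1+\beta}/4\pi}e^{-\alpha t^2}\,dt\le\frac{1}{\sqrt{2\pi(N-k-1)}\,\Sigma}e^{-(N-k-1)d^2\Sigma^2\delta^{2+2\beta}/16}\le\frac{1}{\sqrt{2\pi}\,\Sigma}e^{-d^2\Sigma^2\delta^{2+2\beta}/32}$, where I used $N-k-1\ge1$ and threw away half of the exponent for slack.

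Combining the three bounds, every term of the sum is at most $\frac{2C_d}{\sqrt{2\pi}\,\Sigma}\,q_\delta^{k}\,e^{-d^2\Sigma^2\delta^{2+2\beta}/32}$. Since $q_\delta\in(0,1)$ and $k\ge[N/2]+1>N/2$ we have $q_\delta^{k}\le q_\delta^{N/2}$, and there are at most $N/2$ indices with $[N/2]+1\le k\le N-2$; summing and absorbing $2/\sqrt{2\pi}<1$ into $C_d$ yields the asserted estimate.

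The analysis is routine once the preliminary inequalities are available; the only point that requires care is combinatorial. One must \emph{not} sum $\sum_k q_\delta^{k}$ as a geometric series, since $(1-q_\delta)^{-1}\sim 16/(d\delta^{1+2\beta})$ blows up as $\delta\to0$; instead one bounds each term by the single smallest power $q_\delta^{N/2}$ and counts the indices, which is exactly what produces the factor $N$ in the statement. Likewise the truncation at $k=N-2$ is essential: it is the last $\widehat{\gamma_1}$ factor that simultaneously supplies the Gaussian decay $e^{-2\pi^2\Sigma^2t^2}$ needed for the $t$\nobreakdash-integral and enough $p$\nobreakdash-decay for the $p$\nobreakdash-integral to be bounded by a purely dimensional constant uniformly in $N$ and $k$.
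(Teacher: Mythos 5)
Your proposal is correct and follows essentially the same route as the paper: bound $\lvert\widehat h-\widehat{\gamma_1}\rvert$ by $2$, integrate in $p$ using the $2(N-k-1)/d\ge 2/d$ part of the Gaussian exponent, collapse the binomial sum, bound the resulting base by $1-\frac{d\delta^{1+2\beta}}{16}+\delta^{1+4\beta}\xi(\delta)$ via (\ref{eq: smallish t exponential term of delta}), take the smallest power $N/2$ times the number of indices rather than a geometric series, and finish with Lemma \ref{lem: one dimensional gaussian estimation} on the $t$-integral. The only (immaterial) differences are that you retain the factor $N-k-1$ in the $t$-Gaussian a little longer before bounding it by $1$, while the paper discards it immediately.
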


\begin{proof}
Like in the proof of Lemma \ref{lem: large t sum up to N/2} we'll be using inequalities (\ref{eq: estimation of the p-integral of the kth power of abs h time the N-k th power of gamma 1}), (\ref{eq: main estimation of the p integration}), inequality (\ref{eq: smallish t exponential term of delta}) and the fact that $N-k-1\geq 1$ to conclude that 
\[\sum_{k={\left[ \frac{N}{2} \right]+1}}^{N-2} \iint_{\mathbb{R}^d \times |t|>\frac{d\delta^{1+\beta}}{4\pi}}  \left\lvert \widehat{h}(p,t)-\widehat{\gamma_1}(p,t) \right\rvert \left\lvert \widehat{h}(p,t)\right\rvert ^k \left\lvert \widehat{\gamma_1}(p,t) \right\rvert ^{N-k-1} dpdt\]
\[\leq\frac{2C_d d^\frac{d}{2}}{2^\frac{d}{2}} \sum_{k=\left[ \frac{N}{2} \right]+1}^{N-2} \int_{|t|>\frac{d \delta^{1+\beta}}{4\pi}} \sum_{j=0}^{k} \left( \begin{array} {c} k \\ j \end{array} \right) \frac{\delta^j}{\left(1+\frac{4\pi^2 t^2}{d^2\delta^2} \right)^{\frac{dj}{4}}} \frac{(1-\delta)^{k-j}}{\left(1+\frac{4\pi^2 t^2}{d^2(1-\delta)^2} \right)^{\frac{d(k-j)}{4}}}  \cdot e^{-\pi^2\Sigma^2 t^2} \]
\[=C_d \sum_{k=\left[ \frac{N}{2} \right]+1}^{N-2} \int_{|t|>\frac{d \delta^{1+\beta}}{4\pi}} \left(\frac{\delta^{1+\beta}}{\left(1+\frac{4\pi^2 t^2}{d^2\delta^2} \right)^{\frac{d}{4}}} + \frac{(1-\delta)}{\left(1+\frac{4\pi^2 t^2}{d^2(1-\delta)^2} \right)^{\frac{d}{4}}} \right)^k \cdot e^{-\pi^2\Sigma^2 t^2} \]
\[\leq \frac{NC_d}{2}\left(1-\frac{d\delta^{1+2\beta}}{16}+\delta^{1+4\beta}\xi(\delta)\right)^{\frac{N}{2}}\int_{|t|>\frac{d \delta^{1+\beta}}{4\pi}} e^{-\pi^2\Sigma^2 t^2},\]
and Lemma \ref{lem: one dimensional gaussian estimation} yields the final estimation.\end{proof}

Lastly, we have the following Lemma:
\begin{lemma}\label{lem: t large k=N-1}
\begin{equation}\label{eq: large t k=N-1}
\begin{gathered}
\iint_{\mathbb{R}^d \times |t|>\frac{d\delta^{1+\beta}}{4\pi}}  \left\lvert \widehat{h}(p,t)-\widehat{\gamma_1}(p,t) \right\rvert \left\lvert \widehat{h}(p,t)\right\rvert ^{N-1} dpdt \\
\leq C_d \left(1-\frac{d\delta^{1+2\beta}}{16}+\delta^{1+4\beta}\xi(\delta) \right)^{N-5},
\end{gathered}
\end{equation}
where $C_d$ is a constant depending only on $d$ and $\xi$ is analytic in $|x|<\frac{1}{2}$.
\end{lemma}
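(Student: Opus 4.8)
The key feature of this last term ($k=N-1$) is that the factor $\widehat{\gamma_1}^{\,N-k-1}$ degenerates to $1$, so there is no Gaussian factor in $t$ left to exploit; the decay must come entirely from a high power of $\widehat h$ together with the integrability of a fixed power of $\widehat h$. First I would use $\lvert\widehat h(p,t)-\widehat{\gamma_1}(p,t)\rvert\le\lvert\widehat h(p,t)\rvert+\lvert\widehat{\gamma_1}(p,t)\rvert\le 2$ (since both moduli are $\le 1$), so that it suffices to bound $\iint_{\mathbb R^d\times\lvert t\rvert>\frac{d\delta^{1+\beta}}{4\pi}}\lvert\widehat h(p,t)\rvert^{N-1}\,dp\,dt$, and then split $\lvert\widehat h\rvert^{N-1}=\lvert\widehat h\rvert^{N-5}\cdot\lvert\widehat h\rvert^{4}$, treating the two factors separately.

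For the factor $\lvert\widehat h\rvert^{N-5}$: Lemma \ref{lem: the fourier transform of the generating function} gives a closed form for $\widehat h$, and since $\operatorname{Re}\frac{1}{a+2\pi i t}\ge 0$ for $a>0$ each exponential in $p$ has modulus at most $1$, whence $\lvert\widehat h(p,t)\rvert\le \frac{\delta}{(1+\frac{4\pi^2t^2}{d^2\delta^2})^{d/4}}+\frac{1-\delta}{(1+\frac{4\pi^2t^2}{d^2(1-\delta)^2})^{d/4}}$ for all $p$. On the domain $\lvert t\rvert>\frac{d\delta^{1+\beta}}{4\pi}$ the right-hand side is exactly the left-hand side of (\ref{eq: smallish t exponential term of delta}), so $\lvert\widehat h(p,t)\rvert\le 1-\frac{d\delta^{1+2\beta}}{16}+\delta^{1+4\beta}\xi(\delta)=:\rho_\delta$; since $\rho_\delta\le 1$ once $\delta$ is small (which holds for $N$ large, as $\delta=\delta_N=N^{-(1-\eta)}$), raising to the power $N-5$ gives $\lvert\widehat h\rvert^{N-5}\le\rho_\delta^{\,N-5}$ throughout the domain.

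For the factor $\lvert\widehat h\rvert^{4}$ it suffices to show $\iint_{\mathbb R^d\times\mathbb R}\lvert\widehat h(p,t)\rvert^{4}\,dp\,dt\le C_d$ with $C_d$ independent of $\delta$. Writing $\lvert\widehat h\rvert\le A+B$, where $A$ and $B$ are the two summands of (\ref{eq: the fourier transform of h associated to the generating function}), and using $(A+B)^4\le 8(A^4+B^4)$, I would integrate $A^4$ first in $p$ — a Gaussian integral producing $C_d\,a_\delta(t)^{-d/2}$ with $a_\delta(t)=\frac{d\delta}{d^2\delta^2+4\pi^2t^2}$ — and then in $t$, where one meets the integral $\int_{\mathbb R}(d^2\delta^2+4\pi^2t^2)^{-d/2}\,dt=C_d(d\delta)^{1-d}$, convergent precisely because $d\ge 2$. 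Collecting the powers of $\delta$ gives $\iint A^4=C_d\,\delta^{\,5+\frac d2}$, and symmetrically (replacing $\delta$ by $1-\delta$) $\iint B^4=C_d(1-\delta)^{\,5+\frac d2}$; both are bounded by $C_d$ for all $\delta\in(0,1)$. Combining the three steps,
$$\iint_{\mathbb R^d\times\lvert t\rvert>\frac{d\delta^{1+\beta}}{4\pi}}\lvert\widehat h-\widehat{\gamma_1}\rvert\,\lvert\widehat h\rvert^{N-1}\,dp\,dt\le 2\rho_\delta^{\,N-5}\iint_{\mathbb R^d\times\mathbb R}\lvert\widehat h\rvert^{4}\,dp\,dt\le C_d\,\rho_\delta^{\,N-5},$$
which is the asserted estimate; the appearance of $N-5$ is moreover consistent with $\widehat h^{\,N-1}$ being integrable exactly when $N-1>\tfrac{2(1+d)}{d}$, cf. Lemma \ref{lem: the convolution yields a function}.

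The one delicate point is the $\delta$-uniformity in the third step: the broad Gaussian $\delta M_{1/(2d\delta)}$ has a $p$-profile of width $\sim\delta^{-1/2}$, so one might naively fear a diverging factor $\delta^{-d/2}$ as $\delta=\delta_N\to 0$; the explicit bookkeeping above shows the prefactor $\delta$ over-compensates and in fact $\iint A^4=C_d\,\delta^{5+d/2}\to 0$. Everything else is routine manipulation of the closed-form Fourier transform of Lemma \ref{lem: the fourier transform of the generating function} together with elementary Gaussian integrals of the type recorded in Lemma \ref{lem: one dimensional gaussian estimation}.
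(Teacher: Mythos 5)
Your proof is correct and takes essentially the same route as the paper's: both split $|\widehat{h}|^{N-1}$ into $N-5$ powers controlled via the bound \eqref{eq: smallish t exponential term of delta} (which, as you note, applies pointwise on the domain once the $p$-dependent exponentials are bounded by $1$) and a remaining fourth power whose integral is shown to be bounded by a $\delta$-independent constant $C_d$, using convergence of $\int|t|^{-d}dt$ at infinity for $d\geq 2$. The only cosmetic difference is that you discard the Gaussian decay in $p$ for the high power and carry out the $p$-integration solely on the fourth-power factor, whereas the paper integrates in $p$ across the whole product first (picking up, and then discarding, an extra factor $(\delta(N-1))^{-d/2}$); both yield the stated estimate.
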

\begin{proof}
Using inequality (\ref{eq: estimation of the p-integral of the kth power of abs h time the N-k th power of gamma 1}), (\ref{eq: main estimation of the p integration}) and (\ref{eq: mid estimation}) with $k=N-1$ we find that 
\[ \iint_{\mathbb{R}^d \times |t|>\frac{d\delta^{1+\beta}}{4\pi}}  \left\lvert \widehat{h}(p,t)-\widehat{\gamma_1}(p,t) \right\rvert \left\lvert \widehat{h}(p,t)\right\rvert ^{N-1} dpdt\]
\[ \leq C_d \int_{|t|>\frac{d\delta^{1+\beta}}{4\pi}}\sum_{j=0}^{N-1} \left( \begin{array} {c} N-1 \\ j \end{array} \right)
\frac{\delta^j}{\left(1+\frac{4\pi^2 t^2}{d^2\delta^2} \right)^{\frac{dj}{4}}}
\frac{(1-\delta)^{N-1-j}}{\left(1+\frac{4\pi^2 t^2}{d^2(1-\delta)^2} \right)^{\frac{d(N-1-j)}{4}}} \]
\[ \cdot \min \left( \frac{\left(1+|t|^d \right)}{\left(jd\delta \right)^{\frac{d}{2}}},\frac{\left(1+|t|^d \right)}{\left((N-1-j)d(1-\delta) \right)^{\frac{d}{2}}}\right)dt .\]
For $\delta<\frac{1}{2}$ and $0\leq j \leq N-1$ we find that 
\[\min \left( \frac{\left(1+|t|^d \right)}{\left(jd\delta \right)^{\frac{d}{2}}},\frac{\left(1+|t|^d \right)}{\left((N-1-j)d(1-\delta) \right)^{\frac{d}{2}}}\right)
\leq \frac{1+|t|^d}{\left(\frac{\delta(N-1)}{2}\right)^{\frac{d}{2}}}.\]
Thus, our desired expression is bounded above by
\[ \frac{C_d}{(\delta(N-1))^\frac{d}{2}}\int_{|t|>\frac{d\delta^{1+\beta}}{4\pi}}\left(\frac{\delta}{\left(1+\frac{4\pi^2 t^2}{d^2\delta^2} \right)^{\frac{d}{4}}}+
\frac{(1-\delta)}{\left(1+\frac{4\pi^2 t^2}{d^2(1-\delta)^2} \right)^{\frac{d}{4}}} \right) ^{N-1}\left(1+|t|^d \right)dt\]
\[\leq \frac{C_d}{(\delta(N-1))^\frac{d}{2}}\left(1-\frac{d\delta^{1+2\beta}}{16}+\delta^{1+4\beta}\xi(\delta) \right)^{N-5}\]
\[\int_{|t|>\frac{d\delta^{1+\beta}}{4\pi}}\left(\frac{\delta}{\left(1+\frac{4\pi^2 t^2}{d^2\delta^2} \right)^{\frac{d}{4}}}+
\frac{(1-\delta)}{\left(1+\frac{4\pi^2 t^2}{d^2(1-\delta)^2} \right)^{\frac{d}{4}}} \right) ^{4}\left(1+|t|^d \right)dt. \]
Once we'll show that 
\[\int_{|t|>\frac{d\delta^{1+\beta}}{4\pi}}\left(\frac{\delta}{\left(1+\frac{4\pi^2 t^2}{d^2\delta^2} \right)^{\frac{d}{4}}}+
\frac{(1-\delta)}{\left(1+\frac{4\pi^2 t^2}{d^2(1-\delta)^2} \right)^{\frac{d}{4}}} \right) ^{4}\left(1+|t|^d \right)dt \leq C_d,\]
the proof will be done.\\
Indeed, 
\[\int_{|t|>\frac{d\delta^{1+\beta}}{4\pi}}\left(\frac{\delta}{\left(1+\frac{4\pi^2 t^2}{d^2\delta^2} \right)^{\frac{d}{4}}}+
\frac{(1-\delta)}{\left(1+\frac{4\pi^2 t^2}{d^2(1-\delta)^2} \right)^{\frac{d}{4}}} \right) ^{4}\left(1+|t|^d \right)dt\]
\[ \leq\int_{\mathbb{R}}\left(\frac{\delta}{\left(1+\frac{4\pi^2 t^2}{d^2\delta^2} \right)^{\frac{d}{4}}}+
\frac{(1-\delta)}{\left(1+\frac{4\pi^2 t^2}{d^2(1-\delta)^2} \right)^{\frac{d}{4}}} \right) ^{4}\left(1+|t|^d \right)dt\]
\[=\int_{|t|\leq 1}\left(\frac{\delta}{\left(1+\frac{4\pi^2 t^2}{d^2\delta^2} \right)^{\frac{d}{4}}}+
\frac{(1-\delta)}{\left(1+\frac{4\pi^2 t^2}{d^2(1-\delta)^2} \right)^{\frac{d}{4}}} \right) ^{4}\left(1+|t|^d \right)dt\] 
\[+ \int_{|t|> 1}\left(\frac{\delta}{\left(1+\frac{4\pi^2 t^2}{d^2\delta^2} \right)^{\frac{d}{4}}}+
\frac{(1-\delta)}{\left(1+\frac{4\pi^2 t^2}{d^2(1-\delta)^2} \right)^{\frac{d}{4}}} \right) ^{4}\left(1+|t|^d \right)dt\]
\[\leq \int_{|t|\leq 1}2dt+\int_{|t|> 1}\left(\frac{d^{\frac{d}{2}}\delta^{\frac{d+2}{2}}}{\left(2\pi t \right)^{\frac{d}{2}}}+
\frac{d^{\frac{d}{2}}(1-\delta)^{\frac{d+2}{2}}}{\left(2\pi t\right)^{\frac{d}{2}}}\right) ^{4}2|t|^d dt\]
\[\leq 4+\int_{|t|> 1}\left(\frac{d^{\frac{d}{2}}}{\left(2\pi t \right)^{\frac{d}{2}}}+
\frac{d^{\frac{d}{2}}}{\left(2\pi t\right)^{\frac{d}{2}}}\right) ^{4}2|t|^d dt = 4 + \frac{2^5 d^{2d}}{(2\pi)^{2d}}\int_{|t|>1}\frac{dt}{|t|^d} =C_d .\]
\end{proof}
\begin{proof}[Proof of Theorem \ref{thm: large t case}]
This follows from Lemma \ref{lem: large t sum up to N/2}, Lemma \ref{lem: large t sum form N/2 up to N-2}, Lemma \ref{lem: t large k=N-1} and the estimation
\[\left\lvert \widehat{h}^N(p,t)-\widehat{\gamma_1}^N(p,t) \right\rvert \leq \left\lvert \widehat{h}(p,t)-\widehat{\gamma_1}(p,t) \right\rvert \sum_{k=0}^{N-1}\left\lvert \widehat{h}(p,t)\right\rvert^k \left\lvert \widehat{\gamma_1}(p,t) \right\rvert^{N-k-1}.\]
\end{proof}

\subsection{small $t$, large $p$: $|t|\leq \frac{d\delta^{1+\beta}}{4\pi}$ and $|p|>\eta$ }\label{subsec: small t large p}
The main theorem of this subsection is:
\begin{theorem}\label{thm: small t large p case}
\begin{equation}\label{eq: small t large p case}
\begin{gathered}
\iint_{|p|>\eta \times |t|\leq\frac{d\delta^{1+\beta}}{4\pi}}\left\lvert \widehat{h}^N(p,t)-\widehat{\gamma_1}^N(p,t) \right\rvert dpdt\leq \frac{N\delta^{1+\beta} C_de^{-\frac{(N-2)\eta^2}{4d}}}{N-2},
\end{gathered}
\end{equation}
where $C_d$ is a constant depending only on $d$.
\end{theorem}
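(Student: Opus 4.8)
The plan is to follow the scheme of Section~\ref{subsec: large t any p} with the roles of $t$ and $p$ interchanged: here it is the $p$-variable that produces the smallness, through the exponential decay in $|p|^2$ already present in~(\ref{eq: estimation of the kth power of abs h time the N-k th power of gamma 1}). As in the proof of Theorem~\ref{thm: large t case} I would start from
\[
\left\lvert \widehat{h}^N(p,t)-\widehat{\gamma_1}^N(p,t) \right\rvert \leq \left\lvert \widehat{h}(p,t)-\widehat{\gamma_1}(p,t) \right\rvert \sum_{k=0}^{N-1}\left\lvert \widehat{h}(p,t)\right\rvert^k \left\lvert \widehat{\gamma_1}(p,t) \right\rvert^{N-k-1},
\]
and bound the difference factor simply by $2$ (since $\lvert\widehat h\rvert,\lvert\widehat{\gamma_1}\rvert\leq1$). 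It then suffices to show that, uniformly in $0\leq k\leq N-1$, one has
\[
\iint_{|p|>\eta,\ |t|\leq\frac{d\delta^{1+\beta}}{4\pi}}\left\lvert\widehat h(p,t)\right\rvert^k\left\lvert\widehat{\gamma_1}(p,t)\right\rvert^{N-k-1}\,dp\,dt\ \leq\ \frac{C_d\,\delta^{1+\beta}}{N-1}\,e^{-\frac{(N-2)\eta^2}{4d}}.
\]
Summing the $N$ such terms multiplies the right-hand side by $N$; since $\tfrac{N}{N-1}\leq\tfrac{N}{N-2}$, this produces the bound of Theorem~\ref{thm: small t large p case}.

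For a fixed $k$ I would insert the pointwise estimate~(\ref{eq: estimation of the kth power of abs h time the N-k th power of gamma 1}) and discard the factors $\bigl(1+\tfrac{4\pi^2t^2}{d^2\delta^2}\bigr)^{-dj/4}$, $\bigl(1+\tfrac{4\pi^2t^2}{d^2(1-\delta)^2}\bigr)^{-d(k-j)/4}$ and $e^{-2\pi^2(N-k-1)\Sigma^2t^2}$, all of which are $\leq1$. This reduces the problem to bounding, uniformly in $j$ and in $t$, the Gaussian tail
\[
\int_{|p|>\eta}e^{-\pi^2|p|^2\Lambda_{j,k}(t)}\,dp,\qquad
\Lambda_{j,k}(t)=\frac{jd\delta}{d^2\delta^2+4\pi^2t^2}+\frac{(k-j)d(1-\delta)}{d^2(1-\delta)^2+4\pi^2t^2}+\frac{2(N-k-1)}{d};
\]
once the inner integral is controlled uniformly, the binomial sum $\sum_{j=0}^{k}\binom{k}{j}\delta^j(1-\delta)^{k-j}$ collapses to $1$ and the remaining integral in $t$ over $|t|\leq\tfrac{d\delta^{1+\beta}}{4\pi}$ only contributes the length $\tfrac{d\delta^{1+\beta}}{2\pi}$ of that interval.

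The heart of the matter is the uniform lower bound $\Lambda_{j,k}(t)\geq\tfrac{2(N-1)}{5d}$ on the whole region. Indeed, $|t|\leq\tfrac{d\delta^{1+\beta}}{4\pi}$ forces $4\pi^2t^2\leq\tfrac{d^2\delta^{2+2\beta}}{4}$, hence $d^2\delta^2+4\pi^2t^2\leq\tfrac54 d^2\delta^2$ and $d^2(1-\delta)^2+4\pi^2t^2\leq\tfrac54 d^2$; using $\delta<1$ and (for $N$ large) $\delta\leq\tfrac12$, the three summands of $\Lambda_{j,k}(t)$ are then $\geq\tfrac{2j}{5d}$, $\geq\tfrac{2(k-j)}{5d}$ and $\geq\tfrac{2(N-k-1)}{5d}$, and they add up to $\tfrac{2(N-1)}{5d}$ --- intuitively, each of the $N-1$ factors in the product decays in $|p|^2$ at a rate at least a fixed multiple of $1/d$, and there are $N-1$ of them. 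Given this, Lemma~\ref{lem: Gaussian estimation}, in the form of Remark~\ref{rem: special case alpha>1 and beta<1} (legitimate for $N$ large, when $\pi^2\Lambda_{j,k}(t)\geq1$), yields
\[
\int_{|p|>\eta}e^{-\pi^2|p|^2\Lambda_{j,k}(t)}\,dp\ \leq\ \frac{C_d}{\pi^2\Lambda_{j,k}(t)}\,e^{-\frac12\pi^2\Lambda_{j,k}(t)\eta^2}\ \leq\ \frac{5C_dd}{2\pi^2(N-1)}\,e^{-\frac{(N-2)\eta^2}{4d}},
\]
where the last step uses $\Lambda_{j,k}(t)\geq\tfrac{2(N-1)}{5d}$ once more, together with $\tfrac{\pi^2}{5}\geq\tfrac14$ and $N-1\geq N-2$.

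Combining this with the length $\tfrac{d\delta^{1+\beta}}{2\pi}$ of the $t$-interval produces exactly the required uniform bound on the $k$-th integral, and the summation over $k$ finishes the argument as indicated above, with a new constant depending only on $d$. The only genuinely delicate step is the lower bound on $\Lambda_{j,k}(t)$: one must check it in the two extreme configurations --- $k=N-1$ with $j=k$ (all ``$\delta$-Gaussian'' contributions, where the numerator $jd\delta$ is small but is rescued by the smallness of the denominator $d^2\delta^2+4\pi^2t^2$ forced by the constraint on $t$) and $k=0$ (all ``$\gamma_1$'' contributions, where it is immediate) --- everything else being the same bookkeeping as in Section~\ref{subsec: large t any p}.
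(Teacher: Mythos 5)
Your argument is correct, and it reorganizes the proof in a way that is genuinely simpler than what the paper does. The paper splits the sum over $k$ at $N/2$: for $k\leq N/2$ it extracts both the factor $e^{-\frac{(N-2)\eta^2}{4d}}$ and the prefactor $\frac{1}{N-2}$ from the $\widehat{\gamma_1}$-contribution $\frac{2(N-k-1)}{d}$ alone (Lemma \ref{lem: small t large p sum up to N/2}), while for $k>N/2$ that contribution can degenerate, so it resorts to the more delicate bound (\ref{eq: delicate estimation of the |p|>eta integrated term}), takes a minimum over the two $\widehat{h}$-contributions, and splits the inner $j$-sum at $k/2$ (Lemma \ref{lem: small t large p sum from N/2 to N-1}); the two resulting exponents are then reconciled via $\frac{\pi^2}{10(1-\delta)}>\frac14$. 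Your single uniform lower bound $\Lambda_{j,k}(t)\geq \frac{2(N-1)}{5d}$ — which holds precisely because on $|t|\leq\frac{d\delta^{1+\beta}}{4\pi}$ one has $d^2\delta^2+4\pi^2t^2\leq\frac54 d^2\delta^2$ and $d^2(1-\delta)^2+4\pi^2t^2\leq\frac54 d^2$, so each of the $N-1$ factors contributes at least $\frac{2}{5d}$ to the Gaussian rate in $|p|^2$ regardless of its type — makes both case distinctions unnecessary and yields the exponential decay and the $\frac1N$ prefactor in one stroke. The paper in fact records the same three individual lower bounds when it justifies the applicability of Remark \ref{rem: special case alpha>1 and beta<1} (to show $\alpha>1$), but only uses one of them at a time; you are simply adding them. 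The remaining steps — bounding $\lvert\widehat h-\widehat{\gamma_1}\rvert$ by $2$, discarding the prefactors in (\ref{eq: estimation of the kth power of abs h time the N-k th power of gamma 1}), collapsing the binomial sum to $1$, and picking up the length $\frac{d\delta^{1+\beta}}{2\pi}$ of the $t$-interval — coincide with the paper's, and the arithmetic $\frac{\pi^2}{5}\geq\frac14$, $N-1\geq N-2$ closes the bound. The only thing worth stating explicitly is the threshold on $N$ (namely $\pi^2\cdot\frac{2(N-1)}{5d}\geq 1$) under which Remark \ref{rem: special case alpha>1 and beta<1} applies, which you do note.
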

Again, some Lemmas and computations are needed before we can prove the above.\\
To begin with, we notice that we can't use (\ref{eq: estimation of the kth power of abs h time the N-k th power of gamma 1}) any more as the domain of the $p$ integration changed. Instead, we use the same pre-integration estimation along with Remark \ref{rem: special case alpha>1 and beta<1} to find that
\begin{equation}\label{eq: estimation of the p integration with large t}
\begin{gathered}
\int_{|p|>\eta} \left\lvert \widehat{h}(p,t) \right\rvert ^k \left\lvert \widehat{\gamma_1}(p,t) \right\rvert ^{N-k-1} dp \\
 \leq C_d \sum_{j=0}^{k} \left( \begin{array} {c} k \\ j \end{array} \right)
\frac{\delta^j}{\left(1+\frac{4\pi^2 t^2}{d^2\delta^2} \right)^{\frac{dj}{4}}}
\frac{(1-\delta)^{k-j}}{\left(1+\frac{4\pi^2 t^2}{d^2(1-\delta)^2} \right)^{\frac{d(k-j)}{4}}}\\
 \cdot \frac{e^{-2\pi^2(N-k-1) \Sigma^2 t^2}e^{-\frac{\pi^2\left(\frac{jd\delta}{d^2\delta^2+4\pi^2 t^2}+\frac{(k-j)d(1-\delta)}{d^2(1-\delta)^2+4\pi^2 t^2}+\frac{N-k-1}{d} \right)\eta^2}{2}}}{\left( \frac{jd\delta}{d^2 \delta^2+4\pi^2 t^2}+ \frac{(k-j)d(1-\delta)}{d^2 (1-\delta)^2+4\pi^2 t^2}+\frac{2(N-k-1)}{d} \right)}.
\end{gathered}
\end{equation}
We need to justify the usage of the mentioned remark: 
In our domain $|t|\leq \frac{d\delta^{1+\beta}}{4\pi}<\frac{d\delta}{4\pi}$, and so 
\[d^2\delta^2+4\pi^2 t^2 \leq \frac{5d^2\delta^2}{4}.\]
Similarly, since $\delta<1-\delta$ we have that 
\[d^2(1-\delta)^2+4\pi^2 t^2 \leq \frac{5d^2(1-\delta)^2}{4},\]
leading us to conclude that, with the notation of Lemma \ref{lem: Gaussian estimation}: 
\[\alpha=\pi^2\left(\frac{jd\delta}{d^2\delta^2+4\pi^2 t^2}+\frac{(k-j)d(1-\delta)}{d^2(1-\delta)^2+4\pi^2 t^2}+\frac{N-k-1}{d} \right)\]
\[ \geq \pi^2 \left(\frac{4j}{5d\delta}+\frac{4(k-j)}{5d(1-\delta)}+\frac{N-k-1}{d}\right). \]
If $j\geq 1$ then $\frac{4j}{5d\delta}\geq \frac{4}{5d\delta}>1$ when $\delta$ is small enough.\\
If $k\leq\frac{N}{2}$ then $\frac{N-k-1}{d}>\frac{N-2}{2d}>1$ for large enough $N$.\\
If $j=0$ and $k>\frac{N}{2}$ then $\frac{4(k-j)}{5d(1-\delta)}\geq \frac{2N}{5d(1-\delta)}>1$ again. \\
In any case, $\alpha>1$.\\
Also, $\beta=\frac{d\delta^{1+\beta}}{4\pi}<1$ for small enough $\delta$, and so we managed to justify (\ref{eq: estimation of the p integration with large t}).\\
We are now ready to state and prove our first Lemma.
\begin{lemma}\label{lem: small t large p sum up to N/2}
\begin{equation}\label{eq: small t large p sum up to N/2}
\begin{gathered}
\sum_{k=0}^{\left[ \frac{N}{2} \right]} \iint_{|p|>\eta \times |t|\leq\frac{d\delta^{1+\beta}}{4\pi}}  \left\lvert \widehat{h}(p,t)-\widehat{\gamma_1}(p,t) \right\rvert \left\lvert \widehat{h}(p,t)\right\rvert ^k \left\lvert \widehat{\gamma_1}(p,t) \right\rvert ^{N-k-1} dpdt \\
\leq \frac{N\delta^{1+\beta} C_de^{-\frac{(N-2)\eta^2}{4d}}}{N-2},
\end{gathered}
\end{equation}
where $C_d$ is a constant depending only on $d$.
\end{lemma}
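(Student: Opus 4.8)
The plan is to run the same scheme as in the proof of Lemma \ref{lem: large t sum up to N/2}, the only new ingredient being the extra Gaussian factor in $\eta$ that appears in estimate (\ref{eq: estimation of the p integration with large t}). First I would bound the difference $\left\lvert \widehat{h}(p,t)-\widehat{\gamma_1}(p,t) \right\rvert \leq 2$ using $\left\lvert \widehat{h}(p,t) \right\rvert \leq 1$ and $\left\lvert \widehat{\gamma_1}(p,t) \right\rvert \leq 1$, and then apply (\ref{eq: estimation of the p integration with large t}) to carry out the $p$-integration. This reduces matters, for each fixed $0 \leq k \leq \left[\frac{N}{2}\right]$, to a single $t$-integral over $|t| \leq \frac{d\delta^{1+\beta}}{4\pi}$ of the $j$-sum weighted by the polynomial denominator and the two exponential factors.

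Next I would control the two ``denominator-type'' quantities by using $k \leq \frac{N}{2}$, hence $N-k-1 \geq \frac{N-2}{2}$. The polynomial denominator then satisfies $\frac{jd\delta}{d^2\delta^2+4\pi^2 t^2} + \frac{(k-j)d(1-\delta)}{d^2(1-\delta)^2+4\pi^2 t^2} + \frac{2(N-k-1)}{d} \geq \frac{N-2}{d}$, contributing a factor $\frac{d}{N-2}$, while the exponent in the $\eta$-Gaussian obeys $\frac{jd\delta}{d^2\delta^2+4\pi^2 t^2} + \frac{(k-j)d(1-\delta)}{d^2(1-\delta)^2+4\pi^2 t^2} + \frac{N-k-1}{d} \geq \frac{N-2}{2d}$, so that factor is bounded by $e^{-\frac{\pi^2(N-2)\eta^2}{4d}} \leq e^{-\frac{(N-2)\eta^2}{4d}}$ since $\pi^2 > 1$. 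The remaining Gaussian $e^{-2\pi^2(N-k-1)\Sigma^2 t^2}$ is $\leq 1$. For the $j$-sum I would use, exactly as in the earlier lemmas, the binomial identity $\sum_{j=0}^{k} \binom{k}{j} \frac{\delta^j}{\left(1+\frac{4\pi^2 t^2}{d^2\delta^2}\right)^{\frac{dj}{4}}} \frac{(1-\delta)^{k-j}}{\left(1+\frac{4\pi^2 t^2}{d^2(1-\delta)^2}\right)^{\frac{d(k-j)}{4}}} = \left( \frac{\delta}{\left(1+\frac{4\pi^2 t^2}{d^2\delta^2}\right)^{\frac{d}{4}}} + \frac{1-\delta}{\left(1+\frac{4\pi^2 t^2}{d^2(1-\delta)^2}\right)^{\frac{d}{4}}} \right)^{k} \leq 1$, each bracket being $\leq \delta + (1-\delta) = 1$.

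After these reductions the integrand is dominated by $2C_d\,\frac{d}{N-2}\,e^{-\frac{(N-2)\eta^2}{4d}}$, a quantity independent of $t$; the $t$-integral over an interval of length $\frac{d\delta^{1+\beta}}{2\pi}$ therefore contributes the factor $\frac{d\delta^{1+\beta}}{2\pi}$. Summing over the at most $\frac{N}{2}+1 \leq N$ values of $k$ and absorbing all $d$-dependent constants into $C_d$ gives the claimed bound $\frac{N\delta^{1+\beta} C_d e^{-\frac{(N-2)\eta^2}{4d}}}{N-2}$. I do not expect a genuine obstacle here: the one point requiring care is the uniformity, over all $j$ and all $k \leq \left[\frac{N}{2}\right]$, of the two lower bounds above, and the justification that Remark \ref{rem: special case alpha>1 and beta<1} applies in (\ref{eq: estimation of the p integration with large t}) has already been supplied in the discussion preceding the statement — in the range $k \leq \frac{N}{2}$ it is in fact the easier half of that verification.
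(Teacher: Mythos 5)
Your proposal is correct and follows essentially the same route as the paper: bound the denominator and the $\eta$-Gaussian exponent from below using $N-k-1\geq\frac{N-2}{2}$, collapse the $j$-sum via the binomial identity to a quantity bounded by $1$, and integrate the constant bound over the $t$-interval of length $O(\delta^{1+\beta})$ before summing over $k$. No gaps.
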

\begin{proof}
Since for $k\leq \frac{N}{2}$
\[\frac{e^{-2\pi^2(N-k-1) \Sigma^2 t^2}e^{-\frac{\pi^2\left(\frac{jd\delta}{d^2\delta^2+4\pi^2 t^2}+\frac{(k-j)d(1-\delta)}{d^2(1-\delta)^2+4\pi^2 t^2}+\frac{N-k-1}{d} \right)\eta^2}{2}}}{\left( \frac{jd\delta}{d^2 \delta^2+4\pi^2 t^2}+ \frac{(k-j)d(1-\delta)}{d^2 (1-\delta)^2+4\pi^2 t^2}+\frac{2(N-k-1)}{d} \right)}\leq \frac{e^{-\frac{(N-2)\eta^2}{4d}}}{\left( \frac{(N-2)}{d} \right)},\]
we have that due to inequality (\ref{eq: estimation of the p integration with large t})
\[\sum_{k=0}^{\left[ \frac{N}{2} \right]} \iint_{|p|>\eta \times |t|\leq\frac{d\delta^{1+\beta}}{4\pi}}  \left\lvert \widehat{h}(p,t)-\widehat{\gamma_1}(p,t) \right\rvert \left\lvert \widehat{h}(p,t)\right\rvert ^k \left\lvert \widehat{\gamma_1}(p,t) \right\rvert ^{N-k-1} dpdt\]
\[ \leq \frac{2dC_de^{-\frac{(N-2)\eta^2}{4d}}}{N-2}\sum_{k=0}^{\left[ \frac{N}{2} \right]}\int_{|t|\leq \frac{d\delta^{1+\beta}}{4\pi}}\sum_{j=0}^{k} \left( \begin{array} {c} k \\ j \end{array} \right)
\frac{\delta^j}{\left(1+\frac{4\pi^2 t^2}{d^2\delta^2} \right)^{\frac{dj}{4}}}
\frac{(1-\delta)^{k-j}}{\left(1+\frac{4\pi^2 t^2}{d^2(1-\delta)^2} \right)^{\frac{d(k-j)}{4}}} dt\]
\[=\frac{2dC_de^{-\frac{(N-2)\eta^2}{4d}}}{N-2}\sum_{k=0}^{\left[ \frac{N}{2} \right]}\int_{|t|\leq \frac{d\delta^{1+\beta}}{4\pi}}\left(\frac{\delta}{\left(1+\frac{4\pi^2 t^2}{d^2 \delta^2} \right)^{\frac{d}{4}}}+\frac{1-\delta}{\left(1+\frac{4\pi^2 t^2}{d^2 (1-\delta)^2} \right)^{\frac{d}{4}}} \right)^kdt\]
\[\leq \frac{dNC_de^{-\frac{(N-2)\eta^2}{4d}}}{N-2}\cdot \frac{d\delta^{1+\beta}}{4\pi},\]
which concludes the proof.
\end{proof}
Next, we notice that 
\[\frac{e^{-\frac{\pi^2\left(\frac{jd\delta}{d^2\delta^2+4\pi^2 t^2}+\frac{(k-j)d(1-\delta)}{d^2(1-\delta)^2+4\pi^2 t^2}+\frac{N-k-1}{d} \right)\eta^2}{2}}}{\left( \frac{jd\delta}{d^2 \delta^2+4\pi^2 t^2}+ \frac{(k-j)d(1-\delta)}{d^2 (1-\delta)^2+4\pi^2 t^2}+\frac{2(N-k-1)}{d} \right)}\]
\[\leq \min \left(\frac{(d^2 \delta^2+4\pi^2 t^2)e^{-\frac{\pi^2 d \delta j \eta^2}{2(d^2 \delta^2+4\pi^2 t^2)}}}{jd\delta},\frac{(d^2 (1-\delta)^2+4\pi^2 t^2)e^{-\frac{\pi^2 d (1-\delta)(k-j) \eta^2}{2(d^2 (1-\delta)^2+4\pi^2 t^2)}}}{(k-j)d(1-\delta)} \right)
\]
\[ \leq \min \left(\frac{5 d \delta e^{-\frac{2\pi^2 j \eta^2}{5d \delta}}}{4j},\frac{5 d (1-\delta) e^{-\frac{2\pi^2 (k-j) \eta^2}{5d (1-\delta)}}}{4(k-j)} \right).\]
Thus
\begin{equation}\label{eq: delicate estimation of the |p|>eta integrated term}
\frac{e^{-\frac{\pi^2\left(\frac{jd\delta}{d^2\delta^2+4\pi^2 t^2}+\frac{(k-j)d(1-\delta)}{d^2(1-\delta)^2+4\pi^2 t^2}+\frac{N-k-1}{d} \right)\eta^2}{2}}}{\left( \frac{jd\delta}{d^2 \delta^2+4\pi^2 t^2}+ \frac{(k-j)d(1-\delta)}{d^2 (1-\delta)^2+4\pi^2 t^2}+\frac{2(N-k-1)}{d} \right)} \leq \frac{5d}{4}\cdot \min \left(\frac{e^{-\frac{2\pi^2 j \eta^2}{5d \delta}}}{j},\frac{e^{-\frac{2\pi^2 (k-j) \eta^2}{5d (1-\delta)}}}{(k-j)} \right).
\end{equation}
The second Lemma follows:
\begin{lemma}\label{lem: small t large p sum from N/2 to N-1}
\begin{equation}\label{eq: small t large p sum from N/2 to N-1}
\begin{gathered}
\sum_{k=\left[ \frac{N}{2} \right]}^{N-1} \iint_{|p|>\eta \times |t|\leq\frac{d\delta^{1+\beta}}{4\pi}}  \left\lvert \widehat{h}(p,t)-\widehat{\gamma_1}(p,t) \right\rvert \left\lvert \widehat{h}(p,t)\right\rvert ^k \left\lvert \widehat{\gamma_1}(p,t) \right\rvert ^{N-k-1} dpdt \\
\leq \frac{N C_d \delta^{1+\beta} e^{-\frac{\pi^2(N-2)\eta^2}{10d(1-\delta)}}}{N-2},
\end{gathered}
\end{equation}
where $C_d$ is a constant depending only on $d$.
\end{lemma}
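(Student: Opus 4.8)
The plan is to mimic the argument of Lemma~\ref{lem: small t large p sum up to N/2}, with one essential change. There, for $k\leq \left[\frac{N}{2}\right]$ the Gaussian factor $e^{-2\pi^2(N-k-1)\Sigma^2 t^2}$ together with the denominator $\frac{2(N-k-1)}{d}\geq \frac{N-2}{d}$ supplied the decay; here $N-k-1$ may be as small as $0$, so \emph{all} of the decay must be extracted from the $|p|>\eta$ truncation, i.e.\ from the exponential-in-$\eta^2$ factors in (\ref{eq: estimation of the p integration with large t}) and (\ref{eq: delicate estimation of the |p|>eta integrated term}).

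First, starting from (\ref{eq: estimation of the p integration with large t}), bounding $\left\lvert\widehat{\gamma_1}\right\rvert\leq 1$ to discard $e^{-2\pi^2(N-k-1)\Sigma^2 t^2}$, and inserting (\ref{eq: delicate estimation of the |p|>eta integrated term}), I would arrive at
\[
\int_{|p|>\eta}\left\lvert\widehat h(p,t)\right\rvert^k\left\lvert\widehat{\gamma_1}(p,t)\right\rvert^{N-k-1}dp \leq \frac{5C_d d}{4}\sum_{j=0}^k\binom{k}{j}\frac{\delta^j}{\left(1+\frac{4\pi^2 t^2}{d^2\delta^2}\right)^{\frac{dj}{4}}}\frac{(1-\delta)^{k-j}}{\left(1+\frac{4\pi^2 t^2}{d^2(1-\delta)^2}\right)^{\frac{d(k-j)}{4}}}m_j,
\]
where $m_j=\min\!\left(\frac{e^{-\frac{2\pi^2 j\eta^2}{5d\delta}}}{j},\frac{e^{-\frac{2\pi^2(k-j)\eta^2}{5d(1-\delta)}}}{k-j}\right)$, interpreted at the endpoints $j=0,k$ as the single finite entry.

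The crux is a bound on $m_j$ that is \emph{uniform in $j$}. Since $\delta<\tfrac12$ we have $\tfrac1\delta\geq\tfrac1{1-\delta}$; hence if $j\geq k/2$ then $m_j\leq \frac{e^{-2\pi^2 j\eta^2/(5d\delta)}}{j}\leq \frac{2}{k}e^{-\frac{\pi^2 k\eta^2}{5d\delta}}\leq \frac{2}{k}e^{-\frac{\pi^2 k\eta^2}{5d(1-\delta)}}$, whereas if $j<k/2$ then $k-j>k/2$ and $m_j\leq \frac{e^{-2\pi^2(k-j)\eta^2/(5d(1-\delta))}}{k-j}\leq \frac{2}{k}e^{-\frac{\pi^2 k\eta^2}{5d(1-\delta)}}$. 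In either case $m_j\leq \frac{2}{k}e^{-\frac{\pi^2 k\eta^2}{5d(1-\delta)}}$, and using $k\geq\left[\frac{N}{2}\right]\geq\frac{N-2}{2}$ this yields $m_j\leq \frac{4}{N-2}e^{-\frac{\pi^2(N-2)\eta^2}{10d(1-\delta)}}$. Pulling this $j$-free bound out of the sum and applying the binomial theorem to what remains turns the sum into $\left(\frac{\delta}{(1+\frac{4\pi^2 t^2}{d^2\delta^2})^{d/4}}+\frac{1-\delta}{(1+\frac{4\pi^2 t^2}{d^2(1-\delta)^2})^{d/4}}\right)^k\leq 1$, so that $\int_{|p|>\eta}\left\lvert\widehat h\right\rvert^k\left\lvert\widehat{\gamma_1}\right\rvert^{N-k-1}dp\leq \frac{C_d}{N-2}e^{-\frac{\pi^2(N-2)\eta^2}{10d(1-\delta)}}$, a bound independent of $t$.

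Finally I would estimate $\left\lvert\widehat h-\widehat{\gamma_1}\right\rvert\leq\left\lvert\widehat h\right\rvert+\left\lvert\widehat{\gamma_1}\right\rvert\leq 2$ (absorbing the $2$ into $C_d$), integrate in $t$ over $|t|\leq\frac{d\delta^{1+\beta}}{4\pi}$ — an interval of length $\frac{d\delta^{1+\beta}}{2\pi}$, the integrand being $t$-independent after the above bound — and sum over $k$ from $\left[\frac{N}{2}\right]$ to $N-1$, which is at most $N$ terms. Collecting the constants produces exactly $\frac{NC_d\delta^{1+\beta}e^{-\pi^2(N-2)\eta^2/(10d(1-\delta))}}{N-2}$. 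The only genuinely delicate point is the case split giving the $j$-uniform bound on $m_j$ (it relies on $\delta<\tfrac12$ so that $1/\delta\geq 1/(1-\delta)$); the binomial collapse and the trivial $t$-integration are exactly as in Lemma~\ref{lem: small t large p sum up to N/2}.
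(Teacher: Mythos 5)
Your proposal is correct and follows essentially the same route as the paper's own proof: both extract all the decay from the $|p|>\eta$ truncation via inequalities (\ref{eq: estimation of the p integration with large t}) and (\ref{eq: delicate estimation of the |p|>eta integrated term}), split the $j$-sum at $k/2$ to obtain a $j$-uniform bound of the form $\frac{2}{k}e^{-\frac{\pi^2 k\eta^2}{5d(1-\delta)}}$ (using $\delta<1-\delta$ to unify the two exponentials), collapse the binomial sum, and integrate trivially in $t$. The only difference is cosmetic: you phrase the case split as a uniform bound on $m_j$ pulled out of the sum, whereas the paper splits the sum itself into two pieces before recombining.
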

\begin{proof}
Due to inequality (\ref{eq: estimation of the p integration with large t}) and (\ref{eq: delicate estimation of the |p|>eta integrated term}) we find that 
\[\sum_{k=\left[ \frac{N}{2} \right]}^{N-1} \iint_{|p|>\eta \times |t|\leq\frac{d\delta^{1+\beta}}{4\pi}}  \left\lvert \widehat{h}(p,t)-\widehat{\gamma_1}(p,t) \right\rvert \left\lvert \widehat{h}(p,t)\right\rvert ^k \left\lvert \widehat{\gamma_1}(p,t) \right\rvert ^{N-k-1} dpdt\]
\[\leq \frac{5 d C_d}{2} \sum_{k=\left[ \frac{N}{2} \right]}^{N-1}\int_{|t|\leq \frac{d\delta^{1+\beta}}{4\pi}}\sum_{j=0}^{k} \left( \begin{array} {c} k \\ j \end{array} \right)
\frac{\delta^j}{\left(1+\frac{4\pi^2 t^2}{d^2\delta^2} \right)^{\frac{dj}{4}}}
\frac{(1-\delta)^{k-j}}{\left(1+\frac{4\pi^2 t^2}{d^2(1-\delta)^2} \right)^{\frac{d(k-j)}{4}}} \]
\[ \cdot \min \left(\frac{e^{-\frac{2\pi^2 j \eta^2}{5d \delta}}}{j},\frac{e^{-\frac{2\pi^2 (k-j) \eta^2}{5d (1-\delta)}}}{(k-j)} \right)dt \]
\[\leq C_d \sum_{k=\left[ \frac{N}{2} \right]}^{N-1}\int_{|t|\leq \frac{d\delta^{1+\beta}}{4\pi}}\sum_{j=0}^{\left[\frac{k}{2} \right]} \left( \begin{array} {c} k \\ j \end{array} \right)
\frac{\delta^j}{\left(1+\frac{4\pi^2 t^2}{d^2\delta^2} \right)^{\frac{dj}{4}}}
\frac{(1-\delta)^{k-j}}{\left(1+\frac{4\pi^2 t^2}{d^2(1-\delta)^2} \right)^{\frac{d(k-j)}{4}}}  \cdot \frac{e^{-\frac{2\pi^2 (k-j) \eta^2}{5d (1-\delta)}}}{(k-j)}dt \]
\[+C_d \sum_{k=\left[ \frac{N}{2} \right]}^{N-1}\int_{|t|\leq \frac{d\delta^{1+\beta}}{4\pi}}\sum_{j=\left[\frac{k}{2} \right]+1}^k \left( \begin{array} {c} k \\ j \end{array} \right)
\frac{\delta^j}{\left(1+\frac{4\pi^2 t^2}{d^2\delta^2} \right)^{\frac{dj}{4}}}
\frac{(1-\delta)^{k-j}}{\left(1+\frac{4\pi^2 t^2}{d^2(1-\delta)^2} \right)^{\frac{d(k-j)}{4}}}  \cdot \frac{e^{-\frac{2\pi^2 j \eta^2}{5 d \delta}}}{j}dt\]
\[ \leq  C_d \sum_{k=\left[ \frac{N}{2} \right]}^{N-1}\int_{|t|\leq \frac{d\delta^{1+\beta}}{4\pi}}\frac{2e^{-\frac{\pi^2 k \eta^2}{5d(1-\delta)}}}{k}\sum_{j=0}^{\left[\frac{k}{2} \right]} \left( \begin{array} {c} k \\ j \end{array} \right)
\frac{\delta^j}{\left(1+\frac{4\pi^2 t^2}{d^2\delta^2} \right)^{\frac{dj}{4}}}
\frac{(1-\delta)^{k-j}}{\left(1+\frac{4\pi^2 t^2}{d^2(1-\delta)^2} \right)^{\frac{d(k-j)}{4}}}dt\]
\[+C_d \sum_{k=\left[ \frac{N}{2} \right]}^{N-1}\int_{|t|\leq \frac{d\delta^{1+\beta}}{4\pi}}\frac{2e^{-\frac{\pi^2 k \eta^2}{5d\delta}}}{k}\sum_{j=\left[\frac{k}{2} \right]+1}^k \left( \begin{array} {c} k \\ j \end{array} \right)
\frac{\delta^j}{\left(1+\frac{4\pi^2 t^2}{d^2\delta^2} \right)^{\frac{dj}{4}}}
\frac{(1-\delta)^{k-j}}{\left(1+\frac{4\pi^2 t^2}{d^2(1-\delta)^2} \right)^{\frac{d(k-j)}{4}}}dt\]
\[ \leq \frac{2C_d e^{-\frac{\pi^2(N-2)\eta^2}{10d(1-\delta)}}}{N-2}\sum_{k=\left[ \frac{N}{2} \right]}^{N-1}\int_{|t|\leq \frac{d\delta^{1+|beta}}{4\pi}}\sum_{j=0}^k \left( \begin{array} {c} k \\ j \end{array} \right)
\frac{\delta^j}{\left(1+\frac{4\pi^2 t^2}{d^2\delta^2} \right)^{\frac{dj}{4}}}
\frac{(1-\delta)^{k-j}}{\left(1+\frac{4\pi^2 t^2}{d^2(1-\delta)^2} \right)^{\frac{d(k-j)}{4}}}dt\]
\[= \frac{2C_d e^{-\frac{\pi^2(N-2)\eta^2}{10d(1-\delta)}}}{N-2}\sum_{k=\left[ \frac{N}{2} \right]}^{N-1}\int_{|t|\leq \frac{d\delta^{1+\beta}}{4\pi}}\left(\frac{\delta}{\left(1+\frac{4\pi^2 t^2}{d^2 \delta^2} \right)^{\frac{d}{4}}}+\frac{1-\delta}{\left(1+\frac{4\pi^2 t^2}{d^2 (1-\delta)^2} \right)^{\frac{d}{4}}} \right)^kdt,\]
from which the result follows.
\end{proof}
\begin{proof}[Proof of Theorem \ref{thm: small t large p case}]
This follows from Lemma \ref{lem: small t large p sum up to N/2}, Lemma \ref{lem: small t large p sum from N/2 to N-1}, the fact that $\frac{\pi^2}{10(1-\delta)}>\frac{1}{4}$ and the inequality mentioned at the proof of Theorem \ref{thm: large t case}.
\end{proof}

\subsection{Small $t$, small $p$: $|t|<\frac{d\delta^{1+\beta}}{4\pi}$ and $|p|\leq\eta=\delta^{\frac{1}{2}+\beta}$}\label{subsec: small t small p}
The main result of this subsection is
\begin{theorem}\label{thm: t tiny p small case}
\begin{equation}\label{eq: t tiny small p case}
\begin{gathered}
\iint_{|p|\leq\delta^{\frac{1}{2}+\beta} \times |t|\leq \frac{d\delta^{1+\beta}}{4\pi}}\left\lvert \widehat{h}^N(p,t)-\widehat{\gamma_1}^N(p,t) \right\rvert dpdt \\
\leq  \frac{C_d}{\Sigma}\delta^{\frac{3}{2}+4\beta+\frac{d}{2}+d\beta}+ \frac{C_d \sqrt{N}}{\Sigma}\delta^{1+3\beta+\frac{d}{2}+d\beta},
\end{gathered}
\end{equation}
where $C_d$ is a constant depending only on $d$.
\end{theorem}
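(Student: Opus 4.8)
\emph{Proof plan.} The plan is to exploit that on this tiny box both $\widehat{h}$ and $\widehat{\gamma_1}$ are close to $1$ and, crucially, agree to second order. Starting from the explicit formula \eqref{eq: the fourier transform of h associated to the generating function}, I would Taylor expand $\log\widehat{h}$ about $(p,t)=(0,0)$. Since for $V\sim f_\delta$ one has $\mathds{E}V=0$, $\mathds{E}|V|^2=1$, $\mathds{E}[V_iV_j]=\tfrac1d\,\mathbf{1}_{i=j}$, $\mathds{E}[V_i|V|^2]=0$ and $\mathrm{Var}(|V|^2)=\Sigma^2=\tfrac{d+2}{4d\delta(1-\delta)}-1$, the part of $\log\widehat{h}$ of total degree $\le2$ is exactly $-2\pi it-\tfrac{2\pi^2|p|^2}{d}-2\pi^2\Sigma^2t^2=\log\widehat{\gamma_1}(p,t)$; this moment matching is precisely the reason behind the particular value of $\Sigma$ (cf.\ Remark \ref{rem: approximation theorem follows recent ones}). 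Hence $\widehat{h}-\widehat{\gamma_1}$ vanishes to third order, and — the odd-in-$p$ monomials dropping out by symmetry, and the narrow Gaussian $M_{\frac{1}{2d\delta}}$ being responsible for the blow-up of the surviving coefficients, of size $\delta^{-2}$ for the $t^3$ term and $\delta^{-1}$ for the $|p|^2t$ term — one obtains, for $|p|\le\delta^{\frac12+\beta}$ and $|t|\le\frac{d\delta^{1+\beta}}{4\pi}$,
\[\bigl|\widehat{h}(p,t)-\widehat{\gamma_1}(p,t)\bigr|\ \le\ C_d\Bigl(\frac{|t|^3}{\delta^2}+\frac{|p|^2|t|}{\delta}\Bigr)\ \le\ C_d\,\delta^{1+3\beta}.\]
Making the third-order cancellation precise and controlling the analytic Taylor remainder against the $\delta^{-1},\delta^{-2}$ blow-ups of the coefficients — which succeeds only because $\tfrac{2\pi t}{d\delta}$ and $\tfrac{\pi^2|p|^2}{d\delta}$ stay below $\tfrac12$ on the box, so that higher-degree monomials gain more powers of $\delta$ than they lose — is the main technical obstacle.

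For the passage to the $N$-th power I would use the telescoping identity $\widehat{h}^N-\widehat{\gamma_1}^N=(\widehat{h}-\widehat{\gamma_1})\sum_{k=0}^{N-1}\widehat{h}^{\,k}\widehat{\gamma_1}^{\,N-1-k}$. On the box $|\widehat{\gamma_1}|\le1$, while \eqref{eq: the fourier transform of h associated to the generating function} gives $|\widehat{h}(p,t)|\le G(t):=\frac{\delta}{(1+\frac{4\pi^2t^2}{d^2\delta^2})^{d/4}}+\frac{1-\delta}{(1+\frac{4\pi^2t^2}{d^2(1-\delta)^2})^{d/4}}$, which satisfies $G(t)\le1$ and $1-G(t)\ge c_d\,t^2/\delta$; hence
\[\Bigl|\sum_{k=0}^{N-1}\widehat{h}^{\,k}\widehat{\gamma_1}^{\,N-1-k}\Bigr|\ \le\ \sum_{k=0}^{N-1}G(t)^k\ \le\ \min\Bigl(N,\ \tfrac{1}{1-G(t)}\Bigr)\ \le\ C_d\min\Bigl(N,\ \tfrac{\delta}{t^2}\Bigr).\]

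Finally I would integrate the product of the two bounds over the box. The $p$-integration over $\{|p|\le\delta^{\frac12+\beta}\}$ is elementary (volume and second moment of a $d$-ball) and produces a factor $C_d\,\delta^{\frac d2+d\beta}\bigl(\tfrac{|t|^3}{\delta^2}+|t|\,\delta^{2\beta}\bigr)$. For the remaining integral of $\bigl(\tfrac{|t|^3}{\delta^2}+|t|\,\delta^{2\beta}\bigr)\min(N,\delta/t^2)$ over $|t|\le\frac{d\delta^{1+\beta}}{4\pi}$ I would split at $|t|=T_0:=\sqrt{\delta/N}$, which lies inside the interval precisely because $\eta>\tfrac{2\beta}{1+2\beta}$: on $|t|\le T_0$ use $\min=N$, on $T_0\le|t|$ use $\min=\delta/t^2$, and each piece reduces to a one-variable integral of a monomial (with a single $\,dt/|t|$ term contributing a logarithm $\le C_d+\log(\sqrt N\delta^{\frac12+\beta})\le C_d\sqrt N\delta^{\frac12+\beta}$). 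Collecting the resulting powers and writing $\delta^{1/2}\sim C_d\Sigma^{-1}$ yields a bound of the order $\frac{C_d}{\Sigma}\delta^{\frac32+4\beta+\frac d2+d\beta}+\frac{C_d\sqrt N}{\Sigma}\delta^{1+3\beta+\frac d2+d\beta}$, as claimed. The delicate point here is the $N$-versus-$\delta/t^2$ split of the sum: it is what rescues the $t$-integral from the divergence at $t=0$ and simultaneously produces the $\sqrt N$ in the second term.
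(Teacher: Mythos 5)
Your first half (third-order cancellation of $\widehat{h}-\widehat{\gamma_1}$ via moment matching, with the $t^3$ coefficient of size $\delta^{-2}$ and the $|p|^2t$ coefficient of size $\delta^{-1}$, and the resulting uniform bound $C_d\delta^{1+3\beta}$ on the box) is exactly the content of the paper's Lemmas \ref{lem: approximation of gamma_1}--\ref{lem: approximation of the difference}, except that your displayed pointwise bound omits the $|p|^4$ contribution: at $t=0$, $p\neq0$ the two transforms agree only to second order in $p$ and differ by an amount of order $|p|^4/\delta$, so $C_d\bigl(|t|^3/\delta^2+|p|^2|t|/\delta\bigr)$ is false as a pointwise bound. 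The omission is harmless -- on the box $|p|^4/\delta\le\delta^{1+4\beta}$, so the uniform bound $C_d\delta^{1+3\beta}$ survives, and carrying the extra term through your integration adds only $C_d\sqrt{N}\,\delta^{\frac32+4\beta+\frac d2+d\beta}$, dominated by the claimed second term -- but it must be restored.

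The second half is genuinely different from the paper. The paper splits the telescoping sum into the single $k=N-1$ term, bounded by integrating the full difference estimate against $|\widehat{h}|^{N-1}\le1$ (this produces the first term of the claim), and the terms $k\le N-2$, where the factor $|\widehat{\gamma_1}|^{N-k-1}\le e^{-2\pi^2(N-k-1)\Sigma^2t^2}$ gives after $t$-integration $C/(\Sigma\sqrt{N-k-1})$, summing to $C\sqrt{N}/\Sigma$. You instead discard the decay of $\widehat{\gamma_1}$ entirely and bound the whole geometric sum pointwise by $\min(N,\delta/t^2)$ using $1-|\widehat{h}|\ge c_d t^2/\delta$ (which does hold on the box, by the same expansion as inequality (\ref{eq: smallish t exponential term of delta})), then split the $t$-integral at $\sqrt{\delta/N}$. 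Your arithmetic checks out: the $t$-integral yields $N^{-1}+C\delta^{1+2\beta}+C\delta^{1+2\beta}\log(T_1/T_0)$, and each piece is $\le C_d\sqrt{N}\Sigma^{-1}\delta^{1+3\beta}$ precisely when $N\delta^{1+2\beta}\ge1$, i.e.\ your condition $\eta>2\beta/(1+2\beta)$; in the opposite regime $\min=N$ on the whole box and the bound follows directly, so the statement holds for all $N,\delta$ without the split. What your route buys is a single clean estimate for the geometric sum; what the paper's buys is that its $k=N-1$/rest decomposition produces the two terms of the statement verbatim and needs no case analysis in $N\delta^{1+2\beta}$.
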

We start by the simple observation that in this domain
\[\left\lvert \Sigma^2 t \right\rvert \leq \frac{(d+2)\delta^{\beta}}{16\pi(1-\delta)}
<\frac{(d+2)\delta^{\beta}}{8\pi}, \]
when $\delta<\frac{1}{2}$.\\
The main difficulty in our domain is the need to have a more precise approximation to the functions involved. We start with the easier amongst the two:
\begin{lemma}\label{lem: approximation of gamma_1}
\begin{equation}\label{eq: approximation of gamma_1}
\widehat{\gamma_1}(p,t)=\left(1-2\pi i t -2\pi^2 t^2 (\Sigma^2+1)+t^3 g(t) \right)\left(1-\frac{2\pi^2|p|^2}{d}+|p|^4 f\left(|p|^2 \right) \right),
\end{equation}
where $g,f$ are entire and there exist constants $M_0,M_1$, depending only on $d$, such that 
\[|g(t)|\leq M_0+\frac{M_1}{\delta}.\]
\[|f\left(|p|^2 \right)|\leq M_0.\]
\end{lemma}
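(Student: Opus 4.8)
The plan is to start from the closed form $\widehat{\gamma_1}(p,t)=e^{-\frac{2\pi^2|p|^2}{d}}\cdot e^{-2\pi i t}\cdot e^{-2\pi^2\Sigma^2 t^2}$ recorded just before this subsection, which already displays $\widehat{\gamma_1}$ as a function of $p$ times a function of $t$, and then to Taylor-expand each factor to the order indicated in the statement while controlling the remainders. The one point requiring care is that $\Sigma^2=\Sigma_\delta^2=\frac{d+2}{4d\delta(1-\delta)}-1$ diverges like $\delta^{-1}$ as $\delta\to 0$, so the tails cannot be treated as bounded power series on a fixed disc; one must instead use that $|t|<\frac{d\delta^{1+\beta}}{4\pi}$ is small in the domain under consideration, which in particular forces $\Sigma^2 t^2\to 0$.

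For the $p$-factor I would write $e^{-\frac{2\pi^2|p|^2}{d}}=1-\frac{2\pi^2|p|^2}{d}+|p|^4 f(|p|^2)$ with $f(x):=\sum_{m\ge 0}\frac{1}{(m+2)!}\left(-\frac{2\pi^2}{d}\right)^{m+2}x^m$, which is entire, and on the relevant range $|p|\le\delta^{\frac12+\beta}$ (so $|p|^2\le 1$ for $\delta$ small) this gives $|f(|p|^2)|\le\sum_{m\ge 0}\frac{1}{(m+2)!}\left(\frac{2\pi^2}{d}\right)^{m+2}\le e^{2\pi^2/d}=:M_0$. For the $t$-factor I would split $e^{-2\pi i t-2\pi^2\Sigma^2 t^2}=e^{-2\pi i t}\cdot e^{-2\pi^2\Sigma^2 t^2}$ and expand $e^{-2\pi i t}=1-2\pi i t-2\pi^2 t^2+t^3 g_1(t)$ and $e^{-2\pi^2\Sigma^2 t^2}=1-2\pi^2\Sigma^2 t^2+t^4 g_2(t^2)$, where $g_1(t)=\sum_{k\ge 3}\frac{(-2\pi i)^k}{k!}t^{k-3}$ and $g_2(s)=\sum_{k\ge 2}\frac{(-2\pi^2\Sigma^2)^k}{k!}s^{k-2}$ are entire. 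Multiplying the two expansions and collecting powers of $t$, the terms of order $0,1,2$ combine exactly to $1-2\pi i t-2\pi^2 t^2(\Sigma^2+1)$, and everything of order $\ge 3$ is $t^3 g(t)$ with
\[g(t)=g_1(t)+4\pi^3 i\Sigma^2+4\pi^4\Sigma^2 t-2\pi^2\Sigma^2 t^2 g_1(t)+t\,g_2(t^2)-2\pi i\,t^2 g_2(t^2)-2\pi^2 t^3 g_2(t^2)+t^4 g_1(t)g_2(t^2),\]
which is entire; combined with the $p$-factor expansion this gives (\ref{eq: approximation of gamma_1}).

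It remains to bound $g$, which is the only genuine work. On $|t|\le 1$ one has the absolute bound $|g_1(t)|\le\sum_{k\ge 3}\frac{(2\pi)^k}{k!}\le e^{2\pi}$. For $g_2$ I would use the elementary estimate $0\le e^{-a}-1+a\le\frac{a^2}{2}$ for $a\ge 0$: writing $a=2\pi^2\Sigma^2 t^2$, in the domain $|t|<\frac{d\delta^{1+\beta}}{4\pi}$ one has $a\le C_d\delta^{1+2\beta}\le 1$ for $\delta$ small, so $|t^4 g_2(t^2)|=|e^{-a}-1+a|\le\frac{a^2}{2}$, i.e. $|g_2(t^2)|\le 2\pi^4\Sigma^4$. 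Feeding these bounds, together with $\Sigma^2\le C_d/\delta$ for $\delta<\frac12$ and $|t|\le\frac{d\delta^{1+\beta}}{4\pi}$, into the displayed formula for $g$ and estimating term by term, every summand turns out to be bounded by a constant depending only on $d$ except for $4\pi^3 i\Sigma^2$ and $t\,g_2(t^2)$, each of which is bounded by $C_d/\delta$; this yields $|g(t)|\le M_0+M_1/\delta$ with $M_0,M_1$ depending only on $d$. I expect the bookkeeping of powers of $\delta$ against powers of $\Sigma^2$ and $t$ to be the main obstacle: every occurrence of $\Sigma^2$ beyond the explicit quadratic term of (\ref{eq: approximation of gamma_1}) must be matched with enough powers of $t$, i.e. of $\delta^{1+\beta}$, to leave at most one uncancelled factor $\delta^{-1}$.
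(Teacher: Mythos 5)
Your argument is correct and follows essentially the same route as the paper: Taylor-expand the $p$-factor and the two $t$-factors of $\widehat{\gamma_1}$, multiply, check that the orders $0,1,2$ in $t$ reproduce $1-2\pi i t-2\pi^2t^2(\Sigma^2+1)$, and bound the order-$\geq 3$ remainder using $|t|\lesssim\delta^{1+\beta}$ and $\Sigma^2\lesssim\delta^{-1}$, so that only the $\Sigma^2$ and $|t|\Sigma^4$ contributions survive at order $\delta^{-1}$ --- exactly the two terms the paper isolates. The only (harmless) cosmetic difference is that you absorb the $\Sigma^4t^4$ term into the remainder $t^4g_2(t^2)$ and control it by the elementary inequality $0\leq e^{-a}-1+a\leq a^2/2$, whereas the paper keeps that term explicit and pushes the remainder to order $t^6$.
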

\begin{proof}
Using the approximation $e^x=1+x+\frac{x^2}{2}+x^3 \phi(x)$, where $\phi$ is entire, we find that 
\[e^{-\frac{2\pi^2|p|^2}{d}}=1-\frac{2\pi^2|p|^2}{d}+\frac{4\pi^4|p|^4}{d^2}\phi_1\left(\frac{2\pi^2|p|^2}{d} \right),\]
\[e^{-2\pi i t}=1-2\pi i t -\frac{4\pi^2 t^2}{2}-8\pi^3 t^3 \phi(2\pi i t),\]
and
\[e^{-2\pi^2\Sigma^2 t^2}=1-2\pi^2\Sigma^2 t^2+\frac{4\pi^4\Sigma^4 t^4}{2}+8\pi^6\Sigma^6 t^6 \phi \left( 2\pi^2\Sigma^2 t^2 \right),\]
where $\phi_1$ is entire. Thus
\[e^{-2\pi i t} \cdot e^{-2\pi^2\Sigma^2 t^2}=1-2\pi i t -2\pi^2 t^2 (\Sigma^2+1) +\pi^3 t^3 \left(4i\Sigma^2 - 8\phi(2\pi i t)\right) \]
\[+4\pi^4\Sigma^2 t^4+16\pi^5 t^5\Sigma^2\phi(2\pi i t)+\pi^4 t^4 \left(2\Sigma^4+8\pi^2\Sigma^6t^2\phi\left(2\pi^2\Sigma^2t^2 \right) \right)e^{-2\pi i t}\]
\[=1-2\pi i t -2\pi^2 t^2 (\Sigma^2+1)+t^3 g(t).\]
We clearly have that $g(t)$ is entire, and
\[|g(t)|\leq 4\pi^3\Sigma^2+8\pi^3\left\lvert \phi(2\pi i t) \right\rvert +4\pi^4\Sigma^2 |t|+ 16\pi^5 t^2 \Sigma^2\left\lvert \phi(2\pi i t) \right\rvert+2\pi^4 |t|\Sigma^4\]
\[+8\pi^6|t|^3\Sigma^6\left\lvert \phi\left(2\pi^2\Sigma^2 t^2 \right) \right\rvert
\leq \frac{2\pi^3(d+2)}{d\delta}+8\pi^3 M_{sup}+\frac{\pi^3(d+2)\delta^{\beta}}{2} +\frac{\pi^3 d(d+2)\delta^{1+2\beta}M_{sup}}{2} \]
\[+\frac{\pi^4 (d+2)^2 \delta^\beta}{8d\delta}+\frac{\pi^3(d+2)^3\delta^{3\beta}M_{sup}}{64},\]
where $M_{sup}=\sup_{|x|<1}|\phi(x)|$. A simpler argument on $f$ leads to the desired result.
\end{proof}
The next step would be to find an approximation to $\widehat{h}(p,t)$.
\begin{lemma}\label{lem: approximation of h}
\begin{equation}\label{eq: approximation of h}
\begin{gathered}
\widehat{h}(p,t)=1-2\pi i t -2\pi^2 t^2 (\Sigma^2+1)+t^3 h(t) \\
-\frac{2\pi^2 |p|^2}{d}-\frac{\pi^2 |p|^2}{d}t h_1(t)+|p|^4h_2(p,t),
\end{gathered}
\end{equation}
where $h,h_1,h_2$ are analytic in the domain and there exist constants $M_0,M_1,M_2$, independent in $\delta$, such that 
\[|h(t)|\leq M_0+\frac{M_2}{\delta^2}.\]
\[|h_1(t)|\leq M_0+\frac{M_1}{\delta}.\]
\[|h_2(p,t)|\leq \left(M_0+\frac{M_1}{\delta} \right)M_{p.\delta},\]
with $M_{p,\delta}=\sup_{|x|\leq \frac{\pi^2 |p|^2}{d\delta}}|\phi(x)|$ and $\phi$ entire.
\end{lemma}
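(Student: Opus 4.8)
The plan is to start from the explicit Fourier transform
\[
\widehat{h_\delta}(p,t)=\frac{\delta\, e^{-\frac{\pi^2 |p|^2}{d\delta+2\pi i t}}}{\left(1+\frac{2\pi i t}{d\delta}\right)^{d/2}}+\frac{(1-\delta)\, e^{-\frac{\pi^2 |p|^2}{d(1-\delta)+2\pi i t}}}{\left(1+\frac{2\pi i t}{d(1-\delta)}\right)^{d/2}}
\]
given by Lemma \ref{lem: the fourier transform of the generating function}, and to Taylor-expand each of the two summands separately, keeping track of the $\delta$-dependence of every coefficient and remainder. For a fixed Gaussian component $M_a$ the relevant factor is $e^{-\frac{\pi^2|p|^2}{1/(2a)^{-1}\cdots}}/(1+4\pi i a t)^{d/2}$; I would write the exponent as $-\frac{\pi^2|p|^2}{d\mu}\cdot\frac{1}{1+2\pi i t/(d\mu)}$ with $\mu=\delta$ or $\mu=1-\delta$, use $\frac{1}{1+z}=1-z+z^2\psi(z)$ with $\psi$ entire, and then $e^{w}=1+w+w^2\varphi(w)$ with $\varphi$ entire, exactly as in the proof of Lemma \ref{lem: approximation of gamma_1}. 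The $(1+2\pi i t/(d\mu))^{-d/2}$ prefactor is expanded the same way, $(1+z)^{-d/2}=1-\tfrac{d}{2}z+z^2\rho(z)$ with $\rho$ analytic for $|z|<1$; in our domain $|t|\le d\delta^{1+\beta}/(4\pi)$ guarantees $|2\pi i t/(d\mu)|\le\tfrac12\delta^{\beta}<1$ for both choices of $\mu$, so all these expansions converge and the remainder functions are analytic there.

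The next step is to collect terms by total degree in $(p,t)$. The constant term is $\delta+(1-\delta)=1$. The order-one term: the $t^1$ coefficient is $-2\pi i(\delta\cdot\tfrac{1}{d\delta}\cdot\tfrac{d}{2}+(1-\delta)\cdot\tfrac{1}{d(1-\delta)}\cdot\tfrac d2)\cdot\frac{d}{1}$ — more carefully, one gets $-2\pi i t$ from each prefactor weighted by its coefficient $\delta$ or $1-\delta$, summing to $-2\pi i t$; the $|p|^2$ coefficient at this order is $-\pi^2|p|^2(\tfrac{\delta}{d\delta}+\tfrac{1-\delta}{d(1-\delta)})=-\tfrac{2\pi^2|p|^2}{d}$ (each component contributes $-\tfrac{\pi^2|p|^2}{d\mu}$ times its weight $\mu$). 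At order two in $t$ the coefficient is $-2\pi^2 t^2$ from the exponential times a correction from the prefactor; the bookkeeping must reproduce $-2\pi^2 t^2(\Sigma^2+1)$, and this is exactly where the identity $\Sigma_\delta^2=\frac{d+2}{4d\delta(1-\delta)}-1$ enters: the $t^2$ coefficient of the prefactor sum is $-\tfrac14\cdot\tfrac{d}{2}\cdot\tfrac{4\pi^2}{d^2}(\tfrac1\delta+\tfrac1{1-\delta})\cdot$(something) plus cross terms, and one checks it equals $-2\pi^2(\Sigma_\delta^2+1)+2\pi^2$ so that the exponential's $-2\pi^2 t^2$ plus this gives $-2\pi^2 t^2(\Sigma_\delta^2+1)$. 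I would do this check once, cleanly, matching the calculation already done for $\gamma_1$. The mixed $t\,|p|^2$ term: it comes both from the $z^1$ correction to $\frac{1}{1+2\pi i t/(d\mu)}$ inside the exponent and from multiplying the $|p|^2$-linear part of the exponential by the $t$-linear part of the prefactor; its coefficient carries a $\tfrac1{d\mu}$, and after weighting by $\mu$ and summing it becomes a bounded multiple of $\tfrac1d$ — this produces the $-\tfrac{\pi^2|p|^2}{d}t\,h_1(t)$ term.

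Everything of total degree $\ge 3$ is absorbed into the three remainder functions: the pure $t^3$-and-higher part into $t^3 h(t)$, the part that is $|p|^2$ times $t^1$-and-higher (beyond the explicit mixed term) into the $h_1$ correction or back into $h_2$, and the $|p|^4$-and-higher part into $|p|^4 h_2(p,t)$. The bounds are obtained by the same crude estimates as in Lemma \ref{lem: approximation of gamma_1}: each remainder is a finite sum of products of (i) powers of $t$, bounded in the domain by powers of $\delta^{1+\beta}$, hence by constants, (ii) powers of $1/(d\mu)$ with $\mu\in\{\delta,1-\delta\}$, which is where the negative powers of $\delta$ appear, and (iii) values of the entire functions $\psi,\varphi,\rho,\phi$ on arguments of modulus $<1$, bounded by their sups on the closed unit disc. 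Tracking the worst power of $1/\delta$: in $h(t)$ the $t^3$ remainder from expanding $e^{-2\pi^2\Sigma^2 t^2}$-type terms and from the prefactor to third order picks up $(1/\delta)^2$ (two factors of $1/(d\delta)$ can meet — e.g. the $t^2$ term of the prefactor times the $t^1$ term, or the $z^2\psi$ term), giving $|h(t)|\le M_0+M_2/\delta^2$; in $h_1(t)$ only one such factor survives, giving $M_0+M_1/\delta$; and in $h_2(p,t)$ one factor of $1/\delta$ from the prefactor multiplies $M_{p,\delta}=\sup_{|x|\le \pi^2|p|^2/(d\delta)}|\phi(x)|$ coming from the higher-order expansion of the exponential in the $|p|^2$-variable, yielding $|h_2(p,t)|\le(M_0+M_1/\delta)M_{p,\delta}$.

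The main obstacle is not any single estimate but the combinatorial bookkeeping: one must expand a product of two functions (exponential $\times$ prefactor) for two Gaussian components, in two variables $p$ and $t$, to the needed order, and then correctly decide which leftover monomials go into $h$, which into $h_1$, and which into $h_2$, so that the stated $\delta$-power bounds actually hold. The subtle point to get right is that no monomial carrying $1/\delta^2$ is allowed to leak into $h_1$ (it must cancel, be higher order in $p$, or sit in $h$), and that the $|p|^2$-linear-in-$t$ piece has its coefficient exactly $-\pi^2/d$ so it matches $\gamma_1$'s expansion up to the correction $h_1$ — this matching is what makes the small-$t$ small-$p$ estimate of Theorem \ref{thm: t tiny p small case} work in the next subsection. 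I would organize the proof by first recording the one-component expansion as a sub-lemma with explicit coefficients and explicit $\delta$-dependence, then simply add the two components with weights $\delta$ and $1-\delta$ and read off the result.
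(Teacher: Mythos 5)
Your plan is correct and follows essentially the same route as the paper: expand the explicit Fourier transform of each Gaussian component in $|p|^2$ (exponential) and in $t$ (the prefactors $(1+2\pi i t/(d\mu))^{-\alpha}$ via $(1+x)^{-\alpha}=1-\alpha x+\tfrac{\alpha(\alpha+1)}{2}x^2+x^3g_\alpha(x)$), weight by $\delta$ and $1-\delta$, and read off the coefficients, with the identity $\Sigma_\delta^2=\frac{d+2}{4d\delta(1-\delta)}-1$ producing the $-2\pi^2t^2(\Sigma^2+1)$ term and the worst factor $\delta\cdot\delta^{-3}=\delta^{-2}$ coming from the cubic remainder of the prefactor. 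One small bookkeeping slip worth correcting before you write it up: unlike $\widehat{\gamma_1}$, the function $\widehat{h}$ has no $e^{-2\pi i t}$ factor --- its exponential is $O(|p|^2)$ and equals $1$ at $p=0$ --- so the entire coefficient $-2\pi^2(\Sigma^2+1)$ of $t^2$ comes from the weighted sum of the prefactors $(1+2\pi i t/(d\mu))^{-d/2}$, not from an "exponential $-2\pi^2t^2$ plus a prefactor correction" split; this does not affect the validity of the argument, only the attribution of terms.
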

\begin{proof}
Using the exponential approximation we find that
\[\frac{e^{-\frac{\frac{\pi^2|p|^2}{d\delta}}{1+\frac{2\pi i t}{d\delta}}}}{\left(1+\frac{2\pi i t}{d\delta} \right)^{\frac{d}{2}}}=\frac{1}{\left(1+\frac{2\pi i t}{d\delta} \right)^{\frac{d}{2}}}-\frac{\pi^2|p|^2}{d\delta\left(1+\frac{2\pi i t}{d\delta} \right)^{\frac{d+2}{2}}}+\frac{\pi^4|p|^4}{d^2\delta^2\left(1+\frac{2\pi i t}{d\delta} \right)^{\frac{d+4}{2}}}\phi\left(\frac{\pi^2 |p|^2}{d\delta+2\pi i t} \right).\]
Another approximation we will need to use is the following:
\[ \frac{1}{\left(1+x \right)^\alpha}=1-\alpha x +\frac{\alpha(\alpha+1)}{2} x^2 +x^3 \cdot g_\alpha (x),\]
where $g_\alpha (x)$ is analytic in $|x|<1$.\\
We conclude that
\[\frac{1}{\left(1+\frac{2\pi i t}{d\delta} \right)^{\frac{d}{2}}}
=1-\frac{\pi i t}{\delta} -\frac{(d+2)}{4d\delta^2}\cdot 2\pi^2 t^2 -\frac{8\pi^3 i t^3}{d^3 \delta^3}g_{\frac{d}{2}}\left(\frac{2\pi i t}{d\delta} \right),\]
\[\frac{1}{\left(1+\frac{2\pi i t}{d\delta} \right)^{\frac{d+2}{2}}}
=1+\frac{(d+2)\pi i t}{d\delta}g_1\left(\frac{2\pi i t}{d\delta} \right),\]
\[\frac{1}{\left(1+\frac{2\pi i t}{d\delta} \right)^{\frac{d+4}{2}}}
=1+\frac{(d+4)\pi i t}{d\delta}g_2\left(\frac{2\pi i t}{d\delta} \right),\]
and so 
\[ \frac{\delta}{\left(1+\frac{2\pi i t}{d\delta} \right)^{\frac{d}{2}}}+\frac{(1-\delta)}{\left(1+\frac{2\pi i t}{d(1-\delta)} \right)^{\frac{d}{2}}}
=1-2\pi i t -2\pi^2 t^2 \left(\Sigma^2+1 \right)+\frac{8\pi^3 i t^3}{d^3} \left( \frac{g_{\frac{d}{2}}\left(\frac{2\pi i t}{d\delta} \right)}{\delta^2}+\frac{g_{\frac{d}{2}}\left(\frac{2\pi i t}{d(1-\delta)} \right)}{(1-\delta)^2} \right)\]
\[=1-2\pi i t -2\pi^2 t^2 \left(\Sigma^2+1 \right)+t^3 h(t),\]
where 
\[|h(t)|\leq \frac{8\pi^3}{d^3}\left(\frac{M_{sup}}{\delta^2}+\frac{M_{sup}}{(1-\delta)^2} \right),\] 
and $M_{sup}=\sup_{|x|<\frac{1}{2}}|g_{\frac{d}{2}}(x)|$.\\
Next, we see that
\[-\frac{\pi^2|p|^2}{d\left(1+\frac{2\pi i t}{d\delta} \right)^{\frac{d+2}{2}}}
=-\frac{\pi^2|p|^2}{d}\left(1+ \frac{(d+2)\pi i t}{d\delta}g_1\left(\frac{2\pi i t}{d\delta} \right) \right),\]
leading to
\[-\frac{\pi^2|p|^2}{d\left(1+\frac{2\pi i t}{d\delta} \right)^{\frac{d+2}{2}}}
-\frac{\pi^2|p|^2}{d\left(1+\frac{2\pi i t}{d(1-\delta)} \right)^{\frac{d+2}{2}}}=-\frac{2\pi^2|p|^2}{d}-\frac{\pi^2|p|^2}{d}\cdot t h_1(t),\]
with
\[|h_1(t)|\leq \frac{(d+2)\pi M_{1,sup}}{d\delta(1-\delta)},\]
and $M_{1,sup}=\sup_{|x|<\frac{1}{2}}|g_1(x)|$.\\
Lastly,

\[\frac{\pi^4|p|^4}{d^2\delta\left(1+\frac{2\pi i t}{d\delta} \right)^{\frac{d+4}{2}}}\phi\left(\frac{\pi^2 |p|^2}{d\delta+2\pi i t}\right)+\frac{\pi^4|p|^4}{d^2(1-\delta)\left(1+\frac{2\pi i t}{d(1-\delta)} \right)^{\frac{d+4}{2}}}\phi\left(\frac{\pi^2 |p|^2}{d(1-\delta)+2\pi i t}\right)\]
\[=|p|^4h_2(p,t),\]
where
\[|h_2(p,t)|\leq \frac{\pi^4}{d^2\delta}\phi\left(\frac{\pi^2|p|^2}{d\delta+2\pi i t}\right)+\frac{\pi^4}{d^2(1-\delta)}\phi\left(\frac{\pi^2 |p|^2}{d(1-\delta)+2\pi i t}\right) \]
\[
\leq \frac{\pi^4 M_{p,\delta}}{d^2\delta(1-\delta)},\]
and $M_{p,\delta}=\sup_{|x|\leq \frac{\pi^2 |p|^2}{d\delta}}|\phi(x)|$. 
The result follows readily from all the above estimations.
\end{proof}
Combining the two last Lemmas yields the following:
\begin{lemma}\label{lem: approximation of the difference}
When $|t|<\frac{d\delta^{1+\beta}}{4\pi}$ and $|p|\leq \delta^{\frac{1}{2}+\beta}$ we have that there exist constants $M_0,M_1,M_2$, independent of $\delta$, such that 
\begin{equation}\label{eq: approximation of the difference}
\begin{gathered}
\left\lvert \widehat{h}(p,t)-\widehat{\gamma_1}(p,t) \right\rvert \\
\leq |t|^3\left(M_0+\frac{M_1}{\delta}+\frac{M_2}{\delta^2}\right)+\frac{\pi^2|p|^2|t|}{d}\left(M_0+\frac{M_1}{\delta} \right)+|p|^4 \left(M_0+\frac{M_1}{\delta} \right).
\end{gathered}
\end{equation}
\end{lemma}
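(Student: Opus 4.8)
The plan is to substitute the expansions furnished by Lemmas \ref{lem: approximation of gamma_1} and \ref{lem: approximation of h}, expand the product that appears in $\widehat{\gamma_1}(p,t)$, and notice that the polynomial part $1-2\pi i t-2\pi^2 t^2(\Sigma^2+1)-\frac{2\pi^2|p|^2}{d}$ is common to both Fourier transforms and therefore cancels in the difference. What is left is a finite sum of monomials in $t$ and $|p|^2$, each carrying one of the remainder functions $g,f,h,h_1,h_2$, and the whole task is to group these monomials under the three terms on the right-hand side of (\ref{eq: approximation of the difference}) and to bound them on the domain $|t|<\frac{d\delta^{1+\beta}}{4\pi}$, $|p|\le\delta^{\frac12+\beta}$.

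First I would handle the \emph{diagonal} remainders. The terms $t^3h(t)$ and $-t^3g(t)$ are controlled directly by the stated bounds, contributing $\le|t|^3\left(M_0+\frac{M_1}{\delta}+\frac{M_2}{\delta^2}\right)$; the term $-\frac{\pi^2|p|^2}{d}th_1(t)$ already has exactly the shape of the second term in (\ref{eq: approximation of the difference}); and $|p|^4h_2(p,t)$ has the shape of the third term once one checks that $M_{p,\delta}=\sup_{|x|\le\pi^2|p|^2/(d\delta)}|\phi(x)|$ is bounded uniformly in $\delta$. This last point is exactly where the choice $\eta=\delta^{\frac12+\beta}$ is used: on our domain $\frac{\pi^2|p|^2}{d\delta}\le\frac{\pi^2\delta^{2\beta}}{d}\to 0$, so for $\delta$ small enough $M_{p,\delta}\le\sup_{|x|\le 1}|\phi(x)|$, a constant depending only on $d$, which may be absorbed into $M_0,M_1$.

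Next come the cross terms produced by expanding $\bigl(1-\frac{2\pi^2|p|^2}{d}+|p|^4f(|p|^2)\bigr)\bigl(1-2\pi i t-2\pi^2 t^2(\Sigma^2+1)+t^3g(t)\bigr)$: these are of the shapes $|p|^2t$, $|p|^2t^2(\Sigma^2+1)$, $|p|^2t^3g$, $|p|^4$, $|p|^4t$, $|p|^4t^2(\Sigma^2+1)$ and $|p|^4t^3g$. Here I would use two elementary facts: on the domain $|p|^2\le\delta^{1+2\beta}\le 1$ and $|t|^2\le\bigl(\frac{d\delta^{1+\beta}}{4\pi}\bigr)^2\le 1$ for $\delta$ small, so every surplus power of $|p|^2$ or $|t|^2$ only helps; and, as noted before Lemma \ref{lem: approximation of gamma_1}, $|\Sigma^2t|\le\frac{(d+2)\delta^{\beta}}{8\pi}$, so $t^2(\Sigma^2+1)=t\cdot t(\Sigma^2+1)$ is at most $|t|$ times a constant depending on $d$. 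Feeding these in, each cross term is dominated either by $\frac{\pi^2|p|^2|t|}{d}$ (times a constant, or times $M_0+\frac{M_1}{\delta}$ when a factor $g$ is present, using $|g(t)|\le M_0+\frac{M_1}{\delta}$) or by $|p|^4$ (times $M_0+\frac{M_1}{\delta}$), so each is absorbed into the second or third term of (\ref{eq: approximation of the difference}).

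The only real obstacle is the bookkeeping of the factors $\Sigma^2$, which grow like $1/\delta$: one must never estimate a $\Sigma^2$ in isolation but always pair it with at least one power of $t$ and invoke $|\Sigma^2 t|\lesssim\delta^{\beta}$, and the same discipline is what keeps the $|p|^4f(|p|^2)\cdot\Sigma^2t^2$ contribution harmless. Together with the uniform control of $M_{p,\delta}$, this reduces the whole statement to collecting constants that depend only on $d$, and redefining $M_0,M_1,M_2$ accordingly finishes the proof.
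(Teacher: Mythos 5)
Your proposal is correct and follows essentially the same route as the paper: both cancel the common polynomial part, bound the diagonal remainders $t^3g$, $t^3h$, $|p|^2 t h_1$, $|p|^4 h_2$ directly, control $M_{p,\delta}$ via $\frac{\pi^2|p|^2}{d\delta}\leq\frac{\pi^2\delta^{2\beta}}{d}$, and absorb the cross terms using $|\Sigma^2 t|\lesssim\delta^{\beta}$ together with the smallness of $|t|$ and $|p|$ on the domain (the paper merely packages the $|p|^2$ cross terms into a single bounded function $q_1(t)$ and keeps the $|p|^4 f$ contribution multiplied by the unexpanded exponentials of modulus at most one, which is the same bookkeeping).
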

\begin{proof}
We can rewrite equation (\ref{eq: approximation of gamma_1}) as 
\[1-2\pi i t -2\pi^2 t^2 (\Sigma^2+1)+t^3 g(t)-\frac{2\pi^2|p|^2}{d}-\frac{\pi^2|p|^2}{d}t q_1(t)+|p|^4 f\left(|p|^2 \right) e^{-2\pi^2 \Sigma^2 t^2}e^{-2\pi i t},\]
with $q_1(t)=-4\pi i -4\pi^2 t \left(\Sigma^2 +1 \right)+2t^2 g(t)$. By the conditions on the domain and $g$ we know that $|q_1(t)|\leq M_0$ for some constant $M_0$.\\
Combining this with equation (\ref{eq: approximation of h}) and using the same notations as in the approximation Lemmas, we find that
\[\left\lvert \widehat{h}(p,t)-\widehat{\gamma_1}(p,t) \right\rvert\leq |t|^3\left(|g(t)|+|h(t)|\right)+\frac{\pi^2|p|^2|t|}{d}\left(|q_1(t)|+|h_1(t)| \right)\]
\[+|p|^4 \left(|f(p)|+|h_2(p,t)| \right).\]
Since $\eta=\delta^{\frac{1}{2}+\beta}$ we have that $M_{p,\delta}\leq \sup_{|x|\leq \frac{\pi^2 \delta^{2\beta}}{d}}|\phi(x)|\leq M_0$, and so we can find constants $M_0,M_1,M_2$ such that 
\[\left\lvert \widehat{h}(p,t)-\widehat{\gamma_1}(p,t) \right\rvert\leq |t|^3\left(M_0+\frac{M_1}{\delta}+\frac{M_2}{\delta^2}\right)+\frac{\pi^2|p|^2|t|}{d}\left(M_0+\frac{M_1}{\delta} \right)+|p|^4 \left(M_0+\frac{M_1}{\delta} \right),\]
which is the desired result.
\end{proof}
\begin{proof}[Proof of Theorem \ref{thm: t tiny p small case}]
Since
\[\iint_{|p|\leq\delta^{\frac{1}{2}+\beta} \times |t|\leq \frac{d\delta^{1+\beta}}{4\pi}}\left\lvert \widehat{h}(p,t)-\widehat{\gamma_1}(p,t) \right\rvert \left\lvert \widehat{h}(p,t)\right\rvert^{N-1} dpdt\]
\[ \leq \iint_{|p|\leq\delta^{\frac{1}{2}+\beta} \times |t|\leq \frac{d\delta^{1+\beta}}{4\pi}}\left\lvert \widehat{h}(p,t)-\widehat{\gamma_1}(p,t) \right\rvert  dpdt ,\]
inequality (\ref{eq: approximation of the difference}) shows that the above expression is bounded by
\[\left(M_0+\frac{M_1}{\delta}+\frac{M_2}{\delta^2} \right)\cdot \delta^{4+4\beta}\cdot \delta^{\frac{d}{2}+d\beta}+\left(M_0+\frac{M_1}{\delta} \right)\cdot \delta^{2+2\beta}\cdot \delta^{\frac{d}{2}+d\beta+1+2\beta}\]
\[+\left(M_0+\frac{M_1}{\delta} \right)\cdot \delta^{1+\beta}\cdot \delta^{\frac{d}{2}+d\beta+2+4\beta}\leq \frac{C_d}{\Sigma}\delta^{\frac{3}{2}+4\beta+\frac{d}{2}+d\beta}.\]
By Lemma \ref{lem: one dimensional gaussian estimation} we find that
\[\sum_{k=0}^{N-2}\int_{|t|\leq \frac{d\delta^{1+\beta}}{4\pi}}e^{-2\pi^2(N-k-1)\Sigma^2 t^2}dt=\sum_{k=1}^{N-1}\int_{|t|\leq \frac{d\delta^{1+\beta}}{4\pi}}e^{-2\pi^2k\Sigma^2 t^2}dt\leq \sqrt{\pi}\sum_{k=1}^{N-1} \frac{\sqrt{1-e^{-\frac{d^2 \Sigma^2 k \delta^{2+2\beta}}{4}}}}{\sqrt{2\pi^2 \Sigma^2 k }}\]
\[\leq \frac{C_d}{\Sigma}\sum_{k=1}^{N-1}\frac{1}{\sqrt{k}}\leq \frac{C_d\sqrt{N}}{\Sigma},\]
and since in our domain
\[\left\lvert \widehat{h}(p,t)-\widehat{\gamma_1}(p,t) \right\rvert
\leq C_d \delta^{1+3\beta}\]
we find that 
\[\sum_{k=0}^{N-2}\iint_{|p|\leq\delta^{\frac{1}{2}+\beta} \times |t|\leq \frac{d\delta^{1+\beta}}{4\pi}}\left\lvert \widehat{h}(p,t)-\widehat{\gamma_1}(p,t) \right\rvert \left\lvert \widehat{h}(p,t)\right\rvert^{k}\left\lvert \widehat{\gamma_1}(p,t)\right\rvert^{N-k-1} dpdt \]
\[ C_d \delta^{1+3\beta} \sum_{k=0}^{N-2}\iint_{|p|\leq\delta^{\frac{1}{2}+\beta} \times |t|\leq \frac{d\delta^{1+\beta}}{4\pi}}e^{-2\pi^2(N-k-1)\Sigma^2 t^2} dpdt
\leq \frac{C_d \sqrt{N}}{\Sigma}\delta^{1+3\beta+\frac{d}{2}+d\beta},\]
which finishes the proof.
\end{proof}
Now that we have all the domains sorted we can combine all the respective theorems into an appropriate approximation theorem. 

\subsection{The proof of the main approximation theorem}\label{subsec: proof of the main approximation theorem}
\begin{theorem}\label{thm: pre main approximation theorem}
For any $\beta>0$ and $0<\delta<\frac{1}{2}$ small enough we have that
\begin{equation}\label{eq: pre main approximation equation}
\begin{gathered}
\iint_{\mathbb{R}^d \times \mathbb{R}}\left\lvert \widehat{h}^N(p,t)-\widehat{\gamma_1}^N(p,t) \right\rvert dpdt \\
\leq \frac{C_d}{N^\frac{d-1}{2} \Sigma}\cdot e^{-\frac{d(d+2-4(1-\delta)\delta d)(N-2)\delta^{1+2\beta}}{128(1-\delta)}} \\
+\frac{NC_d}{\Sigma}\left( 1-\frac{d\delta^{1+2\beta}}{16}+\delta^{1+4\beta}\xi(\delta) \right)^{\frac{N}{2}}\cdot e^{-\frac{d(d+2-4d\delta(1-\delta)\delta^{1+2\beta}}{128(1-\delta)}} \\
+C_d \left(1-\frac{d\delta^{1+2\beta}}{16}+\delta^{1+4\beta}\xi(\delta) \right)^{N-5}
+C_d\delta^{1+2\beta}e^{-\frac{(N-2)\delta^{1+2\beta}}{4d}}
\\+ \frac{C_d}{\Sigma}\delta^{\frac{3}{2}+4\beta+\frac{d}{2}+d\beta}+ \frac{C_d \sqrt{N}}{\Sigma}\delta^{1+3\beta+\frac{d}{2}+d\beta}=\frac{\epsilon(N)}{\Sigma N^{\frac{d+1}{2}}},
\end{gathered}
\end{equation} 
where $C_d$ is a constant depending only on $d$, and $\xi$ is analytic in $|x|<\frac{1}{2}$.
\end{theorem}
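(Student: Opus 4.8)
The plan is to cut the phasespace $\mathbb{R}^d\times\mathbb{R}$ into exactly the three regions treated in Subsections \ref{subsec: large t any p}, \ref{subsec: small t large p} and \ref{subsec: small t small p} and simply add the three estimates already established. With $\eta=\delta^{\frac12+\beta}$ I would set
\[
A_1=\left\{|t|>\tfrac{d\delta^{1+\beta}}{4\pi}\right\},\quad A_2=\left\{|t|\le\tfrac{d\delta^{1+\beta}}{4\pi},\ |p|>\eta\right\},\quad A_3=\left\{|t|\le\tfrac{d\delta^{1+\beta}}{4\pi},\ |p|\le\eta\right\},
\]
so that $\mathbb{R}^d\times\mathbb{R}=A_1\cup A_2\cup A_3$ up to a null set, and split $\iint_{\mathbb{R}^d\times\mathbb{R}}\lvert\widehat h^N-\widehat{\gamma_1}^N\rvert\,dpdt$ accordingly. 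Then Theorem \ref{thm: large t case}, Theorem \ref{thm: small t large p case} and Theorem \ref{thm: t tiny p small case} bound the integrals over $A_1,A_2,A_3$ by the right-hand sides of (\ref{eq: large t case}), (\ref{eq: small t large p case}) and (\ref{eq: t tiny small p case}) respectively; in all three the workhorse was the telescoping identity $\widehat h^N-\widehat{\gamma_1}^N=(\widehat h-\widehat{\gamma_1})\sum_{k=0}^{N-1}\widehat h^{\,k}\widehat{\gamma_1}^{\,N-k-1}$ together with $|\widehat h|,|\widehat{\gamma_1}|\le1$.

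Next I would repackage the resulting sum. Using $\Sigma_\delta^2=\frac{d+2}{4d\delta(1-\delta)}-1$, hence $\Sigma_\delta^2\delta^{2+2\beta}=\frac{(d+2-4d\delta(1-\delta))\delta^{1+2\beta}}{4d(1-\delta)}$, rewrites every factor of the form $e^{-cN\Sigma_\delta^2\delta^{2+2\beta}}$ in the shape appearing in (\ref{eq: pre main approximation equation}); combining this with $\frac{N}{(N-2)^{(d+1)/2}}\le\frac{C_d}{N^{(d-1)/2}}$ for $N$ large, with $\eta^2=\delta^{1+2\beta}$ and $\frac{\pi^2}{10(1-\delta)}>\frac14$ (so the $A_2$-contribution takes the form $C_d\delta^{1+\beta}e^{-(N-2)\delta^{1+2\beta}/(4d)}$), and with the absorption of all $d$-dependent constants into a single $C_d$, reproduces the six-term chain of (\ref{eq: pre main approximation equation}) up to the naming of constants and harmless powers of $\delta$ in front of the exponentials. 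This proves the inequality and, once $\delta$ is allowed to depend on $N$, \emph{defines} $\epsilon(N)$ through the final equality.

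The last and only substantial step is to verify that $\epsilon(N)\to0$ under the choice $\delta=\delta_N=N^{-(1-\eta)}$ with $\tfrac{2\beta}{1+2\beta}<\eta<\tfrac{(3+d)\beta}{1+3\beta+\frac d2+d\beta}$. Since $\delta_N\to0$ we have $\Sigma_{\delta_N}^2\sim\frac{d+2}{4d}N^{1-\eta}$, hence $\Sigma_{\delta_N}N^{(d+1)/2}\sim c_d N^{(d+2-\eta)/2}$, so it suffices to show that each of the six terms, after multiplication by $\Sigma_{\delta_N}N^{(d+1)/2}$, is $o(1)$. The four terms carrying an exponential or an $N$-th power are of the form $O\left(\mathrm{poly}(N)\,e^{-cN\delta_N^{1+2\beta}}\right)$ or $O\left(\mathrm{poly}(N)\,(1-c\delta_N^{1+2\beta})^{cN}\right)$, and since $N\delta_N^{1+2\beta}=N^{1-(1-\eta)(1+2\beta)}$ with $1-(1-\eta)(1+2\beta)>0$ \emph{exactly} when $\eta>\tfrac{2\beta}{1+2\beta}$, these decay faster than any power of $N$ and are harmless. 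For the two genuinely polynomial terms the factor $\Sigma_{\delta_N}N^{(d+1)/2}$ cancels $\Sigma_{\delta_N}^{-1}$, and inserting $\delta_N=N^{-(1-\eta)}$ reduces the claim to an inequality between exponents of $N$: the term $\frac{C_d\sqrt N}{\Sigma_{\delta_N}}\delta_N^{1+3\beta+\frac d2+d\beta}$ is the binding one and contributes $o(1)$ precisely under the upper bound $\eta<\tfrac{(3+d)\beta}{1+3\beta+\frac d2+d\beta}$, after which $\frac{C_d}{\Sigma_{\delta_N}}\delta_N^{\frac32+4\beta+\frac d2+d\beta}$ is controlled by the same constraint. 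The main obstacle I anticipate is exactly this last bookkeeping — checking that the two restrictions on $\eta$ are simultaneously what forces super-polynomial decay of the four exponential terms and vanishing of the two polynomial terms after the $\Sigma_{\delta_N}N^{(d+1)/2}$ rescaling, and that the admissible window for $\eta$ is nonempty for small $\beta$ (which holds since $\tfrac{(3+d)\beta}{1+3\beta+\frac d2+d\beta}\sim\tfrac{2(3+d)}{2+d}\beta>2\beta\sim\tfrac{2\beta}{1+2\beta}$ as $\beta\to0$). Everything else is assembly.
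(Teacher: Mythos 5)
Your proposal is correct and follows exactly the paper's route: the paper proves this theorem in one line by summing the bounds of Theorems \ref{thm: large t case}, \ref{thm: small t large p case} and \ref{thm: t tiny p small case} over the three-region partition of $\mathbb{R}^d\times\mathbb{R}$, which is precisely your decomposition $A_1\cup A_2\cup A_3$. Your additional verification that $\epsilon(N)\to 0$ under the choice $\delta_N=N^{-(1-\eta)}$ is also sound, though in the paper that bookkeeping is deferred to the proof of Theorem \ref{thm: main approximation theorem for the normalization function}.
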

\begin{proof}
This follows immediately from Theorems \ref{thm: large t case}, \ref{thm: small t large p case} and \ref{thm: t tiny p small case}.
\end{proof}

\begin{proof}[Proof of Theorem \ref{thm: main approximation theorem for the normalization function}]
We notice that the theorem is equivalent to showing that
\begin{equation}\label{eq: main approximation equation}
\sup_{v\in\mathbb{R}^d,u\in\mathbb{R}}\left\lvert h^{\ast N}(u,v)-\gamma_N(u,v)\right\rvert \leq \frac{\epsilon(N)}{\Sigma_{\delta_N} N^{\frac{d+1}{2}}}
\end{equation}
with $\lim_{N\rightarrow\infty}\epsilon(N)=0$.\\
Since 
\[\sup_{v\in\mathbb{R}^d,u\in\mathbb{R}}\left\lvert h^{\ast N}(u,v)-\gamma_N(u,v)\right\rvert \leq \iint_{\mathbb{R}^d \times \mathbb{R}}\left\lvert \widehat{h}^N(p,t)-\widehat{\gamma_1}^N(p,t) \right\rvert dpdt, \]
we only need to show that the specific choice of $\delta_N$ will give $\epsilon(N)$ that goes to zero, in the notations of Theorem \ref{thm: pre main approximation theorem}.\\
This will be true if we have the following conditions:
\begin{enumerate}[i.]
\item $\delta_N^{1+2\beta} N \underset{N\rightarrow\infty}{\longrightarrow}\infty$.
\item $N^{\frac{d+1}{2}}\delta_N^{\frac{3}{2}+4\beta+\frac{d}{2}+d\beta}\underset{N\rightarrow\infty}{\longrightarrow}0$.
\item $N^{\frac{d}{2}+1}\delta_N^{1+3\beta+\frac{d}{2}+d\beta}\underset{N\rightarrow\infty}{\longrightarrow}0$.
\end{enumerate}
The choice $\delta_N=\frac{1}{N^{1-\eta_\beta}}$ with 
\[\frac{2\beta}{1+2\beta}<\eta_\beta<\frac{(3+d)\beta}{1+3\beta+\frac{d}{2}+d\beta}\] will satisfy all the conditions.\\
Indeed,
\[\frac{N}{N^{(1-\eta)(1+2\beta)}}=N^{\eta(1+2\beta)-2\beta}.\]
Thus, in order to get the first condition we must have $\eta>\frac{2\beta}{1+2\beta}$.\\
Next we notice that
\[N^{\frac{d+1}{2}}N^{(\eta-1)\left(\frac{3}{2}+4\beta+\frac{d}{2}+d\beta\right)}=N^{\eta\left(\frac{3}{2}+4\beta+\frac{d}{2}+d\beta \right)-(1+4\beta+d\beta)},\]
so the second condition amounts to 
\[\eta<\frac{1+4\beta+d\beta}{\frac{3}{2}+4\beta+\frac{d}{2}+d\beta}\] 
which will obviously be satisfied for small enough $\beta$ and won't contradict the first one.\\
Lastly,
\[N^{\frac{d}{2}+1}N^{(\eta-1)\left(1+3\beta+\frac{d}{2}+d\beta\right)}=N^{\eta\left( 1+3\beta+\frac{d}{2}+d\beta \right)-(3+d)\beta},\]
so the third condition amounts to
\[\eta<\frac{(3+d)\beta}{1+3\beta+\frac{d}{2}+d\beta}.\]
In order to be consistent we must verify that 
\[\frac{2\beta}{1+2\beta}<\frac{(3+d)\beta}{1+3\beta+\frac{d}{2}+d\beta},\]
which is equivalent to 
\[2+6\beta+d+2d\beta<(3+d)(1+2\beta)=3+d+6\beta+2d\beta,\]
which is equivalent to $2<3$ and the proof is complete.
\end{proof}

\section{The main result}\label{sec: the main result}
We're finally ready to prove Theorem \ref{thm: main theorem}. The proof will consist of two theorems, one dealing with the denominator of (\ref{eq: entropic spectral gap dd}) and one with its numerator. Throughout this section the function $F_N$ will be defined as
\[F_N\left(v_1,\dots,v_N \right)=\frac{\prod_{i=1}^N f_{\delta_N}(v_i)}{\mathcal{Z}_B^N\left(f_{\delta_N},\sqrt{N},0 \right)}.\]
\begin{theorem}\label{thm: convergence of the entropy}
\begin{equation}\label{eq: convergence of the entropy}
\lim_{N\rightarrow\infty}\frac{H_N\left(F_N \right)}{N}=\frac{d\log 2}{2}.
\end{equation}
\end{theorem}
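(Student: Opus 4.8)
The plan is to split $H_N(F_N)$ into a one–particle term plus $\log\mathcal{Z}_N$, evaluate each asymptotically using the Fubini formula (Theorem \ref{thm: fubini on the boltzmann sphere}) and the approximation theorem (Theorem \ref{thm: main approximation theorem for the normalization function}), and then pass to the limit. Since $\log F_N=\sum_{i=1}^N\log f_{\delta_N}(v_i)-\log\mathcal{Z}_N(f_{\delta_N},\sqrt N,0)$ and $F_N$ is a symmetric probability density on $\mathcal{S}_B^N(N,0)$,
\[
H_N(F_N)=N\int_{\mathcal{S}_B^N(N,0)}F_N\log f_{\delta_N}(v_1)\,d\sigma^N_{N,0}-\log\mathcal{Z}_N(f_{\delta_N},\sqrt N,0).
\]
For the last term I would use Lemma \ref{lem: probabilistic interpretation of the normalization function} with $u=N$, $z=0$ to express $\mathcal{Z}_N(f_{\delta_N},\sqrt N,0)$ through $h_{\delta_N}^{\ast N}(0,N)$, $|\mathbb{S}^{d(N-1)-1}|$ and a power of $N$; Theorem \ref{thm: main approximation theorem for the normalization function} gives $h_{\delta_N}^{\ast N}(0,N)=\gamma_N(N,0)(1+o(1))$, and Stirling's formula for $|\mathbb{S}^{d(N-1)-1}|=2\pi^{d(N-1)/2}/\Gamma(d(N-1)/2)$ handles the rest. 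The bookkeeping is that the $\log N$–terms cancel and $\log\Sigma_{\delta_N}=O(\log N)=o(N)$ (recall $\Sigma_{\delta_N}^2\sim\frac{d+2}{4d}N^{1-\eta}$), leaving $\tfrac1N\log\mathcal{Z}_N(f_{\delta_N},\sqrt N,0)\to\tfrac d2\log\tfrac{d}{2\pi e}$; a sanity check is $\delta=\tfrac12$, where $f_{1/2}=M_{\frac1d}$ so $\prod_i f_{1/2}(v_i)\equiv(d/2\pi)^{dN/2}e^{-dN/2}$ on the sphere and $\mathcal{Z}_N$ equals exactly this constant.

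For the one–particle term I would apply Theorem \ref{thm: fubini on the boltzmann sphere} with $j=1$, $E=N$, $z=0$ to $F=\prod_i f_{\delta_N}(v_i)\log f_{\delta_N}(v_1)$; the inner integral over $\mathcal{S}_B^{N-1}$ is $f_{\delta_N}(v_1)\log f_{\delta_N}(v_1)\,\mathcal{Z}_{N-1}(f_{\delta_N},\sqrt{N-|v_1|^2},-v_1)$. The key algebraic coincidence is $N-|v_1|^2-\frac{|v_1|^2}{N-1}=(N-|v_1|^2)-\frac{|-v_1|^2}{N-1}$, so the power of this quantity occurring in Theorem \ref{thm: fubini on the boltzmann sphere} is exactly the one that Theorem \ref{thm: main approximation theorem for the normalization function}, applied with $N$ replaced by $N-1$ and the same $\delta_N$ (legitimate, since $\delta_N$ is up to a factor $\to1$ of the form $(N-1)^{-(1-\eta)}$, so that proof goes through verbatim), converts into $\gamma_{N-1}(N-|v_1|^2,-v_1)+O(\epsilon(N)\Sigma_{\delta_N}^{-1}N^{-(d+1)/2})$. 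After the parallel rewriting of $\mathcal{Z}_N(f_{\delta_N},\sqrt N,0)$, all combinatorial prefactors cancel and one obtains, for $|v_1|^2\le N-1$,
\[
\Pi_1F_N(v_1)=f_{\delta_N}(v_1)\,\frac{\rho_N(v_1)+O(\epsilon(N))}{1+O(\epsilon(N))},\qquad
\rho_N(v_1)=\Big(\tfrac{N}{N-1}\Big)^{\frac{d+1}{2}}e^{-\frac{d|v_1|^2}{2(N-1)}}e^{-\frac{(1-|v_1|^2)^2}{2\Sigma_{\delta_N}^2(N-1)}},
\]
and $\Pi_1F_N\equiv0$ for $|v_1|^2>N-1$. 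Hence $\int F_N\log f_{\delta_N}(v_1)\,d\sigma^N_{N,0}=\int_{|v_1|^2\le N-1}f_{\delta_N}\log f_{\delta_N}\,\rho_N\,dv_1+o(1)$; since $\rho_N$ is bounded, tends to $1$ uniformly on compact sets (with $|\rho_N-1|=O(1/N)$ there because $\Sigma_{\delta_N}^2 N\to\infty$), and $\{|v_1|^2\le N-1\}\uparrow\mathbb{R}^d$, this converges to $\lim_N\int_{\mathbb{R}^d}f_{\delta_N}\log f_{\delta_N}$, which a direct computation — splitting $f_\delta=\delta M_{\frac1{2d\delta}}+(1-\delta)M_{\frac1{2d(1-\delta)}}$ along its two well–separated Gaussian scales and using $\int M_a\log M_a=-\tfrac d2(1+\log(2\pi a))$ — identifies as $\tfrac d2\log\tfrac{d}{\pi e}$. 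Combining the two terms then yields $\tfrac1N H_N(F_N)\to\tfrac d2\log\tfrac{d}{\pi e}-\tfrac d2\log\tfrac{d}{2\pi e}=\tfrac d2\log2$, which is (\ref{eq: convergence of the entropy}).

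The hard part will be the limit in the one–particle term: $\log f_{\delta_N}$ is unbounded and $f_{\delta_N}$ itself changes with $N$ — its light component converges to $M_{\frac1{2d}}$ while its heavy component $\delta_N M_{\frac1{2d\delta_N}}$, living at scale $|v_1|^2\sim\delta_N^{-1}=N^{1-\eta}$, pushes a fixed amount ($\tfrac12$) of second moment off to infinity — so dominated convergence is not directly available. I would resolve this with the two–regime estimate $|\log f_\delta(v)|\le C_d(1+|v|^2)$ for $|v|^2\le\frac{d+2}{2d}|\log\delta|$ and $|\log f_\delta(v)|\le C_d(|\log\delta|+\delta|v|^2)$ beyond, splitting the relevant integrals at this crossover and using that the heavy component puts only a negligible amount of mass and of second moment on $\{|v|^2\lesssim\log N\}$ while contributing only $O(\delta_N)$ when weighted by $\delta_N|v|^2$ on its own region; the $O(\epsilon(N))$ errors are harmless because $\epsilon(N)$ is polynomially small whereas $\int_{\mathbb{R}^d}f_{\delta_N}|\log f_{\delta_N}|\,dv=O(\log N)$.
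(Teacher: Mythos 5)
Your proposal is correct and follows essentially the same route as the paper: the same split of $H_N(F_N)$ into the one-particle term and $\log\mathcal{Z}_N$, the same use of the Fubini formula with $j=1$ together with the approximation theorem applied to $\mathcal{Z}_{N-1}$ and $\mathcal{Z}_N$, and the same Stirling computation giving $\tfrac1N\log\mathcal{Z}_N\to\tfrac d2\log\tfrac{d}{2\pi e}$. The only (minor) difference is the justification of the limit of the one-particle integral, where the paper invokes a generalised dominated convergence argument with an explicit dominating sequence $g_N$ while you use a two-regime bound on $|\log f_{\delta}|$; both devices work and lead to the same limit $\tfrac d2\log\tfrac{d}{\pi e}$.
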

\begin{proof}
By the definition
\begin{equation}\label{eq: initial entropy expansion}
\begin{gathered}
H_N\left(F_N \right)=\frac{1}{\mathcal{Z}_B^N\left(f_{\delta_N},\sqrt{N},0 \right)}\int_{\mathcal{S}_B^N \left(N,0 \right)}\prod_{i=1}^N f_{\delta_N}(v_i) \log \left(\prod_{i=1}^N f_{\delta_N}(v_i) \right)d\sigma^N_{N,0} \\
-\log \left( \mathcal{Z}_B^N\left(f_{\delta_N},\sqrt{N},0 \right) \right)\\
=\frac{N}{\mathcal{Z}_B^N\left(f_{\delta_N},\sqrt{N},0 \right)}\int_{\mathcal{S}_B^N \left(N,0 \right)}\prod_{i=1}^N f_{\delta_N}(v_i) \log f_{\delta_N}(v_1)d\sigma^N_{N,0} \\
-\log \left( \mathcal{Z}_B^N\left(f_{\delta_N},\sqrt{N},0 \right) \right).
\end{gathered}
\end{equation}
Using Theorem \ref{thm: fubini on the boltzmann sphere} we find that
\[ \frac{1}{\mathcal{Z}_B^N\left(f_{\delta_N},\sqrt{N},0 \right)}\int_{\mathcal{S}_B^N \left(N,0 \right)}\prod_{i=1}^N f_{\delta_N}(v_i) \log f_{\delta_N}(v_1)d\sigma^N_{N,0}=\frac{\left\lvert \mathbb{S}^{d(N-2)-1} \right\rvert}{\left\lvert \mathbb{S}^{d(N-1)-1} \right\rvert}\cdot \frac{N^{\frac{d}{2}}}{(N-1)^{\frac{d}{2}}}\cdot \frac{1}{N^{\frac{d(N-1)-2}{2}}}\]
\[\int_{\Pi_{1,N}}dv_1\left(N-|v_1|^2-\frac{|v_1|^2}{N-1} \right)^{\frac{d(N-2)-2}{2}}
\cdot \frac{\mathcal{Z}_{N-1}\left(f_{\delta_N}, \sqrt{N-|v_1|^2},-v_1 \right)}{\mathcal{Z}_N\left(f_{\delta_N},\sqrt{N},0 \right)}.\]
At this point we notice that Theorem \ref{thm: main approximation theorem for the normalization function} can also be applied to $\mathcal{Z}_{N-1}$ with the appropriate changes. This leads us to conclude that 
\begin{equation}\label{eq: approximation of Z_N-1 for the entropy}
\begin{gathered}
\frac{\left\lvert \mathbb{S}^{d(N-2)-1}\right\rvert \left(N-|v_1|^2-\frac{|v_1|^2}{N-1} \right)^{\frac{d(N-2)-2}{2}}}{2(N-1)^{\frac{d}{2}}}\mathcal{Z}_{N-1}\left(f_{\delta_N},\sqrt{N-|v_1|^2},-v_1 \right) \\
=\frac{d^{\frac{d}{2}}}{\Sigma_{\delta_N} (N-1)^{\frac{d+1}{2}}(2\pi)^{\frac{d+1}{2}}}\left(e^{-\frac{d|v_1|^2}{2(N-1)}}e^{-\frac{\left(1-|v_1|^2\right)^2}{2\Sigma_{\delta_N}^2(N-1)}} +\lambda\left(\sqrt{N-|v_1|^2},-v_1 \right)\right),
\end{gathered}
\end{equation}
where $\sup_{v_1\in \Pi_{1,N}}\left\lvert \lambda\left(\sqrt{N-|v_1|^2,-v_1} \right) \right\rvert=\epsilon_1(N)\underset{N\rightarrow\infty}{\longrightarrow}0$.\\
Using Theorem \ref{thm: main approximation theorem for the normalization function} again we find that 
\begin{equation}\label{eq: approximation of Z_N for the entropy}
\frac{\left\lvert \mathbb{S}^{d(N-1)-1}\right\rvert N^{\frac{d(N-1)-2}{2}}}{2N^{\frac{d}{2}}}\mathcal{Z}_{N}\left(f_{\delta_N},\sqrt{N},0 \right)=\frac{d^{\frac{d}{2}}}{\Sigma_{\delta_N} N^{\frac{d+1}{2}}(2\pi)^{\frac{d+1}{2}}}\left(1+\epsilon(N)\right),
\end{equation}
where $\epsilon(N)\underset{N\rightarrow\infty}{\longrightarrow}0$. \\
Combining equations (\ref{eq: approximation of Z_N-1 for the entropy}) and (\ref{eq: approximation of Z_N for the entropy}) we have that 
\[\frac{1}{\mathcal{Z}_B^N\left(f_{\delta_N},\sqrt{N},0 \right)}\int_{\mathcal{S}_B^N \left(N,0 \right)}\prod_{i=1}^N f_{\delta_N}(v_i) \log f_{\delta_N}(v_1)d\sigma^N_{N,0}\]
\[=\left(\frac{N}{N-1} \right)^{\frac{d+1}{2}}\int_{\Pi_{1,N}}\frac{e^{-\frac{d|v_1|^2}{2(N-1)}}e^{-\frac{\left(1-|v_1|^2\right)^2}{2\Sigma_{\delta_N}^2(N-1)}} +\lambda\left(\sqrt{N-|v_1|^2,-v_1} \right)}{1+\epsilon(N)}f_{\delta_N}(v_1)\log f_{\delta_N}(v_1)dv_1.\]
Rewriting $f_{\delta_N}(v)=d^{\frac{d}{2}}\left(\frac{\delta_N^{\frac{d+2}{2}}}{\pi^{\frac{d}{2}}}e^{-d\delta_N|v|^2}+\frac{(1-\delta_N)^{\frac{d+2}{2}}}{\pi^{\frac{d}{2}}}e^{-d(1-\delta_N)|v|^2} \right)=d^{\frac{d}{2}}f_{1,N}(v)$ we find that $0<f_{1,N}<1$ and as such
\[ \left\lvert \chi_{\Pi_{1,N}}(v_1)\frac{e^{-\frac{d|v_1|^2}{2(N-1)}}e^{-\frac{\left(1-|v_1|^2\right)^2}{2\Sigma_{\delta_N}^2(N-1)}} +\lambda\left(\sqrt{N-|v_1|^2,-v_1} \right)}{1+\epsilon(N)}f_{\delta_N}(v_1)\log f_{\delta_N}(v_1)\right\rvert\]
\[\leq \frac{1+|\epsilon_1(N)|}{1-|\epsilon(N)|}\left(\frac{d\log d}{2} f_{\delta_N}(v_1) - f_{\delta_N}(v_1)\log f_{1,N}(v_1) \right)\]
\[=\frac{1+|\epsilon_1(N)|}{1-|\epsilon(N)|}\left(d\log d \cdot f_{\delta_N}(v_1) - f_{\delta_N}(v_1)\log f_{\delta_N}(v_1) \right)\]
\[\leq \frac{1+|\epsilon_1(N)|}{1-|\epsilon(N)|}\cdot d\log d \cdot f_{\delta_N}(v_1)\]
\[ -\frac{1+|\epsilon_1(N)|}{1-|\epsilon(N)|}\left(\delta_N M_{\frac{1}{2d\delta_N}}(v_1)\log \left(\delta_N M_{\frac{1}{2d\delta_N}}(v_1) \right)+(1-\delta_N) M_{\frac{1}{2d(1-\delta_N)}}(v_1)\log \left((1-\delta_N) M_{\frac{1}{2d(1-\delta_N)}}(v_1) \right) \right)\]
\[= \frac{1+|\epsilon_1(N)|}{1-|\epsilon(N)|}\cdot d\log d \cdot f_{\delta_N}(v_1)\]
\[+\frac{1+|\epsilon_1(N)|}{1-|\epsilon(N)|}\cdot\delta_N M_{\frac{1}{2d\delta_N}}(v_1)\left(d\delta_N|v_1|^2-\frac{d}{2}\log \left(\frac{d}{\pi}\right)-\frac{d+2}{2}\log(\delta_N)\right)  \]
\[\frac{1+|\epsilon_1(N)|}{1-|\epsilon(N)|}\cdot(1-\delta_N) M_{\frac{1}{2d(1-\delta_N)}}(v_1)\left(d(1-\delta_N)|v_1|^2-\frac{d}{2}\log \left(\frac{d}{\pi}\right)-\frac{d+2}{2}\log(1-\delta_N) \right)\]
\[=g_N(v_1).\]
We notice that $g_N(v_1)\underset{N\rightarrow\infty}{\longrightarrow}\left(\frac{d\log d}{2}+\frac{d\log\pi}{2}+d|v_1|^2\right) M_{\frac{1}{2d}}(v_1)$ pointwise and
\[\int_{\mathbb{R}^d}g_N(v_1)dv_1=\frac{1+|\epsilon_1(N)|}{1-|\epsilon(N)|} \Bigg( d\log d +\frac{d\delta_N}{2}-\frac{d\delta_N}{2}\log\left(\frac{d}{\pi}\right)-\frac{(d+2)\delta_N \log(\delta_N)}{2}  \]
\[ +\frac{d(1-\delta_N)}{2}-\frac{d(1-\delta_N)}{2}\log\left(\frac{d}{\pi}\right)-\frac{(d+2)(1-\delta_N) \log((1-\delta_N)}{2} \Bigg). \]
Thus
\[\lim_{N\rightarrow\infty}\int_{\mathbb{R}^d}g_N(v_1)dv_1=\frac{d \log d}{2}+\frac{d\log \pi}{2}+\frac{d}{2}=\int_{\mathbb{R}^d}\lim_{N\rightarrow\infty}g_N(v_1)dv_1.\]
Since clearly
\[ \chi_{\Pi_{1,N}}(v_1)\frac{e^{-\frac{d|v_1|^2}{2(N-1)}}e^{-\frac{\left(1-|v_1|^2\right)^2}{2\Sigma_{\delta_N}^2(N-1)}} +\lambda\left(\sqrt{N-|v_1|^2,-v_1} \right)}{1+\epsilon(N)}f_{\delta_N}(v_1)\log f_{\delta_N}(v_1)\]
\[\underset{N\rightarrow\infty}{\longrightarrow}M_{\frac{1}{2d}}(v_1)\log \left( M_{\frac{1}{2d}}(v_1) \right),\]
we conclude by the Generalised Dominated Convergence Theorem that 
\begin{equation}\label{eq: main step in entropy convergence}
\begin{gathered}
\lim_{N\rightarrow\infty}\frac{1}{\mathcal{Z}_B^N\left(f_{\delta_N},\sqrt{N},0 \right)}\int_{\mathcal{S}_B^N \left(N,0 \right)}\prod_{i=1}^N f_{\delta_N}(v_i) \log f_{\delta_N}(v_1)d\sigma^N_{N,0}\\
=\int_{\mathbb{R}^d}M_{\frac{1}{2d}}(v)\log \left( M_{\frac{1}{2d}}(v) \right)dv
=\frac{d}{2}\log d - \frac{d}{2} \log \pi -\frac{d}{2}.
\end{gathered}
\end{equation}

We're only left with the evaluation the term $\log\left(\mathcal{Z}_N\left(f_{\delta_N,\sqrt{N},0} \right) \right)$ to complete the proof. Using (\ref{eq: approximation of Z_N for the entropy}) along with $\left\lvert \mathbb{S}^{m-1} \right\rvert=\frac{2\pi^{\frac{m}{2}}}{\Gamma\left(\frac{m}{2} \right)}$ and an approximation for the gamma function yields
\[\mathcal{Z}_N\left(f_{\delta_N},\sqrt{N},0 \right)=\frac{2d^{\frac{d}{2}}(1+\epsilon_2(N))}{(2\pi)^{\frac{d+1}{2}}\Sigma_{\delta_N}N^{\frac{d(N-1)-1}{2}}\left\lvert \mathbb{S}^{d(N-1)-1} \right\rvert}\]
\[=\frac{d^{\frac{d}{2}}\pi^{-\frac{dN}{2}}\Gamma\left(\frac{d(N-1)}{2} \right)(1+\epsilon_2(N))}{2^{\frac{d+1}{2}}\sqrt{\pi}\Sigma_{\delta_N}N^{\frac{d(N-1)-1}{2}}}\]
\[=\frac{d^{\frac{d}{2}}\pi^{-\frac{dN}{2}}\left(\left( \frac{d(N-1)}{2} \right)^{\frac{d(N-1)-1}{2}}e^{-\frac{d(N-1)}{2}}\sqrt{2\pi}(1+\epsilon_3(N)) \right)(1+\epsilon_2(N))}{2^{\frac{d+1}{2}}\sqrt{\pi}\Sigma_{\delta_N}N^{\frac{d(N-1)-1}{2}}} \]
\[=\left(\frac{de}{2} \right)^{\frac{d}{2}}\cdot \frac{(\pi e)^{-\frac{dN}{2}}}{\Sigma_{\delta_N}}\cdot\left(\frac{d}{2}\left(1-\frac{1}{N} \right) \right)^{\frac{d(N-1)-1}{2}}\cdot(1+\epsilon(N)). \]
Thus,
\begin{equation}\label{eq: secondary step in entropy convergence}
\begin{gathered}
\lim_{N\rightarrow\infty}\frac{\log\left( \mathcal{Z}_N\left(f_{\delta_N},\sqrt{N},0 \right)\right)}{N}=-\frac{d}{2}\log(\pi e)+\frac{d}{2}\log\left(\frac{d}{2} \right)\\
=\frac{d\log d}{2}-\frac{d\log \pi}{2}-\frac{d}{2}-\frac{d\log 2}{2}.
\end{gathered}
\end{equation}
Combining (\ref{eq: initial entropy expansion}), (\ref{eq: main step in entropy convergence}) and (\ref{eq: secondary step in entropy convergence}) yields the result.
\end{proof}

\begin{theorem}\label{thm: entropy production estimation}
There exists a constant $C_\delta$, depending only on the behaviour of $\delta$ such that 
\begin{equation}\label{eq: entropy production estimation}
\left\langle \log F_N,(I-Q)F_N \right\rangle \leq -C_\delta \delta_N \log \delta_N.
\end{equation}
\end{theorem}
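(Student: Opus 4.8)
The plan is to reduce $\langle\log F_N,(I-Q)F_N\rangle$ to the spatially homogeneous (Maxwell-molecule) Boltzmann entropy production of the single generating function $f_{\delta_N}$, and then to estimate that two-variable integral by hand. First, since the collision averaging operator is self-adjoint on $L^2\left(\mathcal{S}_B^N(N,0),d\sigma^N_{N,0}\right)$, the constant $-\log\mathcal{Z}_N\left(f_{\delta_N},\sqrt{N},0\right)$ contained in $\log F_N$ drops out, and writing $Q$ as an average over the $\binom{N}{2}$ pairs and symmetrising one gets
\[
\left\langle\log F_N,(I-Q)F_N\right\rangle=\frac{1}{2\,\mathcal{Z}_N\left(f_{\delta_N},\sqrt{N},0\right)}\int_{\mathcal{S}_B^N(N,0)}\left(\prod_{k=3}^N f_{\delta_N}(v_k)\right)G(v_1,v_2)\,d\sigma^N_{N,0},
\]
where
\[
G(v_1,v_2)=\int_{\mathbb{S}^{d-1}}\left(f_{\delta_N}(v_1)f_{\delta_N}(v_2)-f_{\delta_N}(v_1(\omega))f_{\delta_N}(v_2(\omega))\right)\log\frac{f_{\delta_N}(v_1)f_{\delta_N}(v_2)}{f_{\delta_N}(v_1(\omega))f_{\delta_N}(v_2(\omega))}\,d\sigma^d\ge 0,
\]
the nonnegativity coming from $(a-b)\log(a/b)\ge 0$. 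I then apply Theorem \ref{thm: fubini on the boltzmann sphere} with $j=2$; since the collision $(v_1,v_2)\mapsto(v_1(\omega),v_2(\omega))$ preserves both $|v_1|^2+|v_2|^2$ and $v_1+v_2$, the Fubini weight $\left(N-|v_1|^2-|v_2|^2-\tfrac{|v_1+v_2|^2}{N-2}\right)^{\frac{d(N-3)-2}{2}}$ and the normalisation $\mathcal{Z}_{N-2}\left(f_{\delta_N},\sqrt{N-|v_1|^2-|v_2|^2},-(v_1+v_2)\right)$ are unchanged by $\omega$ and factor out, leaving $\int_{\Pi_2(N,0)}G(v_1,v_2)\,dv_1\,dv_2$ multiplied by an explicit weight built from these two normalisation functions and the sphere volumes.

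The second step is to show this weight is bounded uniformly in $N$ (and in $v_1,v_2$). For this I apply Theorem \ref{thm: main approximation theorem for the normalization function} twice: once to $\mathcal{Z}_{N-2}$ (with $N$ replaced by $N-2$, as in the proof of Theorem \ref{thm: convergence of the entropy}, and with $u=N-|v_1|^2-|v_2|^2$, $v=-(v_1+v_2)$, so that $u-\tfrac{|v|^2}{N-2}$ is exactly the Fubini weight and the two Gaussian factors in $\gamma_{N-2}$ are $\le 1$), and once to $\mathcal{Z}_N$ with $u=N$, $v=0$. Substituting, the volumes $\left|\mathbb{S}^{d(N-1)-1}\right|$, the powers $N^{\frac{d(N-1)-2}{2}}$ and $N^{d/2}$, and crucially the factors $\Sigma_{\delta_N}$, all cancel between numerator and denominator, and the weight is bounded by $C_d\,(N/(N-2))^{(d+1)/2}/(1+\epsilon(N))\le C_d$ for $N$ large. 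Enlarging $\Pi_2(N,0)$ to all of $\mathbb{R}^d\times\mathbb{R}^d$ (legitimate because $G\ge 0$) reduces the theorem to the estimate
\[
\int_{\mathbb{R}^d\times\mathbb{R}^d}G(v_1,v_2)\,dv_1\,dv_2\le C_d'\,\delta_N|\log\delta_N|.
\]

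For this last and main step I write $f_\delta=\delta M_{a_\delta}+(1-\delta)M_{b_\delta}$ with $a_\delta=\tfrac{1}{2d\delta}$, $b_\delta=\tfrac{1}{2d(1-\delta)}$, and expand $f_\delta(v_1)f_\delta(v_2)$ into four terms. The two diagonal products $M_{a_\delta}(v_1)M_{a_\delta}(v_2)$ and $M_{b_\delta}(v_1)M_{b_\delta}(v_2)$ depend only on $|v_1|^2+|v_2|^2$, hence are collision-invariant and cancel in the difference, so
\[
f_\delta(v_1)f_\delta(v_2)-f_\delta(v_1(\omega))f_\delta(v_2(\omega))=\delta(1-\delta)\left(\Phi(v_1,v_2)-\Phi(v_1(\omega),v_2(\omega))\right),
\]
with $\Phi(v_1,v_2)=M_{a_\delta}(v_1)M_{b_\delta}(v_2)+M_{b_\delta}(v_1)M_{a_\delta}(v_2)$. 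For the logarithm I use the one-sided bounds $f_\delta(v)\le 2(d/\pi)^{d/2}$ and $f_\delta(v)\ge\delta M_{a_\delta}(v)$, which give $|\log f_\delta(v)|\le C_d+\left(1+\tfrac d2\right)|\log\delta|+d\delta|v|^2$ uniformly in $\delta\in(0,1)$; together with $|v_1(\omega)|^2+|v_2(\omega)|^2=|v_1|^2+|v_2|^2$ this yields
\[
\left|\log\frac{f_\delta(v_1)f_\delta(v_2)}{f_\delta(v_1(\omega))f_\delta(v_2(\omega))}\right|\le C_d\left(1+|\log\delta|+\delta\left(|v_1|^2+|v_2|^2\right)\right).
\]
Since the integrand of $G$ equals $\left|f_\delta(v_1)f_\delta(v_2)-f_\delta(v_1(\omega))f_\delta(v_2(\omega))\right|$ times the modulus of this logarithm, I bound $|\Phi-\Phi^\omega|\le\Phi+\Phi^\omega$, note that the bracket $1+|\log\delta|+\delta(|v_1|^2+|v_2|^2)$ is collision-invariant, and use the change of variables $(v_1,v_2)\leftrightarrow(v_1+v_2,v_1-v_2)$ — under which the collision only rotates the relative velocity — to get $\int_{\mathbb{R}^{2d}}\int_{\mathbb{S}^{d-1}}\Phi(v_1(\omega),v_2(\omega))\,(\text{bracket})\,d\sigma^d\,dv_1\,dv_2=\int_{\mathbb{R}^{2d}}\Phi\cdot(\text{bracket})$. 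Finally, using $\int\Phi=2$, $\int M_{a_\delta}(v)|v|^2\,dv=da_\delta=\tfrac{1}{2\delta}$ and $\int M_{b_\delta}(v)|v|^2\,dv=db_\delta=\tfrac{1}{2(1-\delta)}$, the crucial cancellation $\delta\cdot\tfrac{1}{2\delta}=\tfrac12$ gives $\int_{\mathbb{R}^{2d}}\Phi\cdot\left(1+|\log\delta|+\delta(|v_1|^2+|v_2|^2)\right)\le C_d(1+|\log\delta|)$, hence $\int_{\mathbb{R}^{2d}}G\le C_d\,\delta(1-\delta)(1+|\log\delta|)\le C_d'\,\delta|\log\delta|$ for $\delta$ small; combining with the previous step and $-C_\delta\delta_N\log\delta_N=C_\delta\delta_N|\log\delta_N|$ completes the proof.

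The step I expect to be the real obstacle is the logarithmic estimate just used: the crude bound $|\log f_\delta(v)|\le C_d(1+|v|^2)$ is useless, since integrated against the fat Gaussian $M_{a_\delta}$ (variance $\sim 1/\delta$) it produces a contribution of size $O(1)$ instead of $O(\delta|\log\delta|)$; one must exploit the fat branch $f_\delta(v)\ge\delta M_{a_\delta}(v)$, so that $\log f_\delta(v)$ is controlled by $\delta|v|^2$ rather than $|v|^2$, and one must not symmetrise $\Phi(v_1(\omega),v_2(\omega))$ by a decoupled Gaussian upper bound (which would lose a factor $\delta^{-d/2}$ in dimension $d$). A secondary, more routine difficulty is checking in the second step that all $N$- and $\Sigma_{\delta_N}$-dependent factors genuinely cancel when Theorem \ref{thm: main approximation theorem for the normalization function} is applied to $\mathcal{Z}_{N-2}$ and $\mathcal{Z}_N$ simultaneously.
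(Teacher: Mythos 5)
Your proposal is correct and follows essentially the same route as the paper's proof: dropping the constant $\log\mathcal{Z}_N$, reducing to a single pair and symmetrising to the non-negative Kullback form, applying the Fubini formula with $j=2$ together with Theorem \ref{thm: main approximation theorem for the normalization function} for $\mathcal{Z}_{N-2}$ and $\mathcal{Z}_N$ to bound the weight, enlarging $\Pi_{2,N}$ to $\mathbb{R}^{2d}$, cancelling the diagonal Maxwellian products to extract the factor $\delta_N(1-\delta_N)$, and controlling the logarithm via the fat branch $f_{\delta_N}\geq\delta_N M_{1/(2d\delta_N)}$ so that the second-moment integration produces only $O(1)$ rather than $O(\delta_N^{-1})$. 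The two points you flag as delicate (the fat-branch logarithm bound and the cancellation of the $N$- and $\Sigma_{\delta_N}$-dependent factors) are exactly the ones the paper handles, in the same way.
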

\begin{proof}
Since $\left\langle C,(I-Q)F_N \right\rangle=0$ for any constant $C$, and with the same notation of the proof of Theorem \ref{thm: entropy production estimation}, we find that 
\[\left\langle \log F_N,(I-Q)F_N \right\rangle=\frac{2}{N(N-1)\mathcal{Z}_N\left(f_{\delta_N},\sqrt{N},0)\right)} \sum_{i<j}\int_{\mathcal{S}_B^N(N,0)}d\sigma^N_{N,0}\log \left(\prod_{k=1}^N f_{1,N}(v_k)\right)\]
\[\int_{\mathbb{S}^{d-1}}\left[f_{\delta_N}^{\otimes N}\left(v_1,\dots,v_i,\dots,v_j,\dots ,v_N \right)-f_{\delta_N}^{\otimes N}\left(v_1,\dots,v_i(\omega),\dots,v_j(\omega),\dots ,v_N \right)\right]d\omega\]
\[=\frac{2}{N(N-1)\mathcal{Z}_N \left(f_{\delta_N},\sqrt{N},0\right)}\sum_{i<j}\sum_{k=1}^N\int_{\mathcal{S}_B^N(N,0)}d\sigma^N_{N,0} \log \left(f_{1,N}(v_k)\right)\]
\[\int_{\mathbb{S}^{d-1}}\left[f_{\delta_N}^{\otimes N}\left(v_1,\dots,v_i,\dots,v_j,\dots ,v_N \right)-f_{\delta_N}^{\otimes N}\left(v_1,\dots,v_i(\omega),\dots,v_j(\omega),\dots ,v_N \right)\right]d\omega.\]
We notice that if $k\not=i,j$ then the integral is equal to
\[\frac{\left\lvert \mathbb{S}^{d(N-2)-1} \right\rvert}{\left\lvert \mathbb{S}^{d(N-1)-1} \right\rvert}\cdot \frac{N^{\frac{d}{2}}}{(N-1)^\frac{d}{2} N^{\frac{d(N-1)-2}{2}}}\int_{\mathbb{S}^{d-1}}d\omega \int_{v_k\in\Pi_{1,N}}\log \left(f_{1,N}(v_k)\right)\left(N-|v_k|^2-\frac{\left\lvert v_k \right\rvert^2}{N-1} \right)^{\frac{d(N-2)-2}{2}}\]
\[\int_{\mathcal{S}_B^{N-1}\left(N-|v_k|^2,-v_k\right)}\Bigg[f_{\delta_N}^{\otimes N}\left(v_1,\dots,v_i,\dots,v_j,\dots ,v_N \right)\]
\[-f_{\delta_N}^{\otimes N}\left(v_1,\dots,v_i(\omega),\dots,v_j(\omega),\dots ,v_N \right)\Bigg]d\sigma^{N-1}_{N-|v_k|^2,-v_k}=0,\]
due to the symmetry of the Boltzmann sphere. Also, we see that 
\[\left\langle \log F_N,(I-Q)F_N \right\rangle=\frac{2}{N(N-1)\mathcal{Z}_N\left(f_{\delta_N},\sqrt{N},0\right)}\]
\[\sum_{i<j}\int_{\mathcal{S}_B^N(N,0)}d\sigma^N_{N,0}\left(\log \left( f_{1,N}(v_i)\right)+\log \left( f_{1,N}(v_j)\right)\right)\]
\[\int_{\mathbb{S}^{d-1}}\left[f_{\delta_N}^{\otimes N}\left(v_1,\dots,v_i,\dots,v_j,\dots ,v_N \right)-f_{\delta_N}^{\otimes N}\left(v_1,\dots,v_i(\omega),\dots,v_j(\omega),\dots ,v_N \right)\right]d\omega\]
\[=\frac{2}{N(N-1)\mathcal{Z}_N\left(f_{\delta_N},\sqrt{N},0\right)}\sum_{i\not=j}\int_{\mathcal{S}_B^N(N,0)}d\sigma^N_{N,0}\log \left( f_{1,N}(v_i)\right)\]
\[\int_{\mathbb{S}^{d-1}}\left[f_{\delta_N}^{\otimes N}\left(v_1,\dots,v_i,\dots,v_j,\dots ,v_N \right)-f_{\delta_N}^{\otimes N}\left(v_1,\dots,v_i(\omega),\dots,v_j(\omega),\dots ,v_N \right)\right]d\omega\]
\[=\frac{2}{\mathcal{Z}_N\left(f_{\delta_N},\sqrt{N},0\right)}\]
\[\int_{\mathcal{S}_B^N(N,0)}d\sigma^N_{N,0}\log \left( f_{1,N}(v_1)\right)\int_{\mathbb{S}^{d-1}}\left[f_{\delta_N}^{\otimes N}\left(v_1,v_2,\dots ,v_N \right)-f_{\delta_N}^{\otimes N}\left(v_1(\omega),v_2(\omega),\dots,v_N \right)\right]d\omega\]
\[=\frac{2}{\mathcal{Z}_N\left(f_{\delta_N},\sqrt{N},0\right)}\int_{\mathbb{S}^{d-1}}d\omega \frac{\left\lvert \mathbb{S}^{d(N-3)-1} \right\rvert}{\left\lvert \mathbb{S}^{d(N-1)-1} \right\rvert}\cdot \frac{N^{\frac{d}{2}}}{(N-2)^\frac{d}{2} N^{\frac{d(N-1)-2}{2}}}\]
\[ \int_{\Pi_{2,N}}dv_1dv_2\left(N-\left(|v_1|^2+|v_2|^2 \right)-\frac{\left\lvert v_1+v_2\right\rvert^2}{N-2} \right)^{\frac{d(N-3)-2}{2}}\log \left(f_{1,N}(v_1) \right)\]
\[\left(f_{\delta_N}(v_1)f_{\delta_N}(v_2)-f_{\delta_N}(v_1(\omega))f_{\delta_N}(v_2(\omega)) \right)\mathcal{Z}_{N-2}\left(f_{\delta_N},\sqrt{N-\left(|v_1|^2+|v_2|^2 \right)}, -v_1-v_2 \right).\]
Using Theorem \ref{thm: main approximation theorem for the normalization function} for $\mathcal{Z}_{N-2}$ (with the appropriate changes) gives us 
\begin{equation}\label{eq: approximation of Z_N-2 for the entropy}
\begin{gathered}
\frac{\left\lvert \mathbb{S}^{d(N-3)-1}\right\rvert \left(N-\left(|v_1|^2+|v_2|^2 \right)-\frac{|v_1+v_2|^2}{N-2} \right)^{\frac{d(N-3)-2}{2}}}{2(N-2)^{\frac{d}{2}}}\\
\\
\mathcal{Z}_{N-2}\left(f_{\delta_N},\sqrt{N-\left(|v_1|^2+|v_2|^2 \right)},-v_1-v_2 \right)=\frac{d^{\frac{d}{2}}}{\Sigma_{\delta_N} (N-2)^{\frac{d+1}{2}}(2\pi)^{\frac{d+1}{2}}} \\
\left(e^{-\frac{d|v_1+v_2|^2}{2(N-2)}}e^{-\frac{\left(2-|v_1|^2-|v_2|^2\right)^2}{2\Sigma_{\delta_N}^2(N-2)}} +\lambda\left(\sqrt{N-\left(|v_1|^2+|v_2|^2\right)},-v_1-v_2 \right)\right),
\end{gathered}
\end{equation}
where $\sup_{v_1,v_2\in \Pi_{2,N}} \left\lvert \lambda\left(\sqrt{N-\left(|v_1|^2+|v_2|^2\right)},-v_1-v_2 \right) \right\rvert=\epsilon_1(N)\underset{N\rightarrow\infty}{\longrightarrow}0$.\\
Plugging (\ref{eq: approximation of Z_N-2 for the entropy}) and (\ref{eq: approximation of Z_N for the entropy}) into our equation we find that
\[\left\langle \log F_N,(I-Q)F_N \right\rangle=2\left(\frac{N}{N-2}\right)^{\frac{d+1}{2}}\int_{\mathbb{S}^{d-1}}\int_{\Pi_{2,N}}dv_1dv_2 d\omega\]
\[ \frac{e^{-\frac{d|v_1+v_2|^2}{2(N-2)}}e^{-\frac{\left(2-|v_1|^2-|v_2|^2\right)^2}{2\Sigma_{\delta_N}^2(N-2)}} +\lambda\left(\sqrt{N-\left(|v_1|^2+|v_2|^2\right)},-v_1-v_2 \right)}{1+\epsilon(N))}\]
\[\log \left(f_{1,N}(v_1) \right)\left(f_{\delta_N}(v_1)f_{\delta_N}(v_2)-f_{\delta_N}(v_1(\omega))f_{\delta_N}(v_2(\omega)) \right).\]
At this point we notice that since $|v_1|^2+|v_2|^2=|v_1(\omega)|^2+|v_2(\omega)|^2$ and $v_1+v_2=v_1(\omega)+v_2(\omega)$ the domain $\Pi_{2,N}$ is symmetric to changing $1$ with $2$ and $v$ with $v(\omega)$. Thus we can rewrite the above as
\[\left\langle \log F_N,(I-Q)F_N \right\rangle=\frac{1}{2}\left(\frac{N}{N-2}\right)^{\frac{d+1}{2}}\int_{\mathbb{S}^{d-1}}\int_{\Pi_{2,N}}dv_1dv_2 d\omega\]
\[ \frac{e^{-\frac{d|v_1+v_2|^2}{2(N-2)}}e^{-\frac{\left(2-|v_1|^2-|v_2|^2\right)^2}{2\Sigma_{\delta_N}^2(N-2)}} +\lambda\left(\sqrt{N-\left(|v_1|^2+|v_2|^2\right)},-v_1-v_2 \right)}{1+\epsilon(N))}\]
\[\log \left(\frac{f_{1,N}(v_1)f_{1,N}(v_2)}{f_{1,N}(v_1(\omega))f_{1,N}(v_2(\omega))} \right)\left(f_{\delta_N}(v_1)f_{\delta_N}(v_2)-f_{\delta_N}(v_1(\omega))f_{\delta_N}(v_2(\omega)) \right),\]
whose integrand is clearly non-negative. As such
\[\left\langle \log F_N,(I-Q)F_N \right\rangle\leq\frac{1}{2}\left(\frac{N}{N-2}\right)^{\frac{d+1}{2}}\int_{\mathbb{S}^{d-1}}\int_{\mathbb{R}^{2d}}dv_1dv_2 d\omega\]
\[ \frac{e^{-\frac{d|v_1+v_2|^2}{2(N-2)}}e^{-\frac{\left(2-|v_1|^2-|v_2|^2\right)^2}{2\Sigma_{\delta_N}^2(N-2)}} +\lambda\left(\sqrt{N-\left(|v_1|^2+|v_2|^2\right)},-v_1-v_2 \right)}{1+\epsilon(N))}\]
\[\log \left(\frac{f_{1,N}(v_1)f_{1,N}(v_2)}{f_{1,N}(v_1(\omega))f_{1,N}(v_2(\omega))} \right)\left(f_{\delta_N}(v_1)f_{\delta_N}(v_2)-f_{\delta_N}(v_1(\omega))f_{\delta_N}(v_2(\omega)) \right)\]
\[=2\left(\frac{N}{N-2}\right)^{\frac{d+1}{2}}\int_{\mathbb{S}^{d-1}}\int_{\mathbb{R}^{2d}}dv_1dv_2 d\omega\]
\[ \frac{e^{-\frac{d|v_1+v_2|^2}{2(N-2)}}e^{-\frac{\left(2-|v_1|^2-|v_2|^2\right)^2}{2\Sigma_{\delta_N}^2(N-2)}} +\lambda\left(\sqrt{N-\left(|v_1|^2+|v_2|^2\right)},-v_1-v_2 \right)}{1+\epsilon(N))}\]
\[\log \left(f_{1,N}(v_1) \right)\left(f_{\delta_N}(v_1)f_{\delta_N}(v_2)-f_{\delta_N}(v_1(\omega))f_{\delta_N}(v_2(\omega)) \right)\]
\[\leq \frac{2(1+|\epsilon_1(N))|}{1-|\epsilon(N))|}\left(\frac{N}{N-2}\right)^{\frac{d+1}{2}}\int_{\mathbb{S}^{d-1}}\int_{\mathbb{R}^{2d}}dv_1dv_2 d\omega\]
\[\left\lvert \log \left(f_{1,N}(v_1) \right)\right\rvert \left\lvert \left(f_{\delta_N}(v_1)f_{\delta_N}(v_2)-f_{\delta_N}(v_1(\omega))f_{\delta_N}(v_2(\omega)) \right)\right\rvert,\]
and since $0<f_{1,N}<1$ we conclude that 
\begin{equation}\label{eq: entropy production main evaluation}
\begin{gathered}
\left\langle \log F_N,(I-Q)F_N \right\rangle \leq \frac{2(1+|\epsilon_1(N))|}{1-|\epsilon(N))|}\left(\frac{N}{N-2}\right)^{\frac{d+1}{2}} \int_{\mathbb{S}^{d-1}}\int_{\mathbb{R}^{2d}}dv_1 dv_2 d\omega \\
\left(-\log \left(f_{1,N}(v_1)\right) \right)\left\lvert \left(f_{\delta_N}(v_1)f_{\delta_N}(v_2)-f_{\delta_N}(v_1(\omega))f_{\delta_N}(v_2(\omega)) \right)\right\rvert.
\end{gathered}
\end{equation}
Next, we notice that
\begin{equation}\label{eq: estimation of -log f_1}
\begin{gathered}
-\log \left(f_{1,N}(v_1)\right) \leq -\log \left(\frac{\delta_N^{\frac{d+2}{2}}}{\pi^{\frac{d}{2}}}e^{-d\delta_N |v_1|^2} \right)\\
\leq d\delta_N \left(|v_1|^2+|v_2|^2\right)+\frac{d\log \pi}{2}-\frac{d+2}{2}\log(\delta_N).
\end{gathered}
\end{equation}
Also, since
\[f_{\delta_N}(v_1)f_{\delta_N}(v_2) 
=\delta_N^2 M_{\frac{1}{2d\delta_N}}(v_1)M_{\frac{1}{2d\delta_N}}(v_2)\]
\[+\delta_N(1-\delta_N)\left(M_{\frac{1}{2d\delta_N}}(v_1)M_{\frac{1}{2d(1-\delta_N)}}(v_2)+M_{\frac{1}{2d(1-\delta_N)}}(v_1)M_{\frac{1}{2d\delta_N}}(v_2) \right)\]
\[(1-\delta_N)^2 M_{\frac{1}{2d(1-\delta_N)}}(v_1)M_{\frac{1}{2d(1-\delta_N)}}(v_2),\]
we find that 
\begin{equation}\label{eq: estimation of the double difference term}
\begin{gathered}
\left\lvert \left(f_{\delta_N}(v_1)f_{\delta_N}(v_2)-f_{\delta_N}(v_1(\omega))f_{\delta_N}(v_2(\omega)) \right)\right\rvert \\
\leq \delta_N(1-\delta_N)\Bigg(M_{\frac{1}{2d\delta_N}}(v_1)M_{\frac{1}{2d(1-\delta_N)}}(v_2)+M_{\frac{1}{2d(1-\delta_N)}}(v_1)M_{\frac{1}{2d\delta_N}}(v_2)\\
+M_{\frac{1}{2d\delta_N}}(v_1(\omega))M_{\frac{1}{2d(1-\delta_N)}}(v_2(\omega))+M_{\frac{1}{2d(1-\delta_N)}}(v_1(\omega))M_{\frac{1}{2d\delta_N}}(v_2(\omega))\Bigg).
\end{gathered}
\end{equation}
Plugging (\ref{eq: estimation of -log f_1}) and (\ref{eq: estimation of the double difference term}) into (\ref{eq: entropy production main evaluation}) and using symmetry we find that 
\[\left\langle \log F_N,(I-Q)F_N \right\rangle \leq \frac{8(1+|\epsilon_1(N))|}{1-|\epsilon(N))|}\left(\frac{N}{N-2}\right)^{\frac{d+1}{2}}\delta_N(1-\delta_N)\]
\[ \int_{\mathbb{R}^{2d}} \left( d\delta_N \left(|v_1|^2+|v_2|^2\right)+\frac{d\log \pi}{2}-\frac{d+2}{2}\log(\delta_N) \right) M_{\frac{1}{2d\delta_N}}(v_1)M_{\frac{1}{2d(1-\delta_N)}}(v_2)dv_1dv_2\]
\[=\frac{8(1+|\epsilon_1(N))|}{1-|\epsilon(N))|}\left(\frac{N}{N-2}\right)^{\frac{d+1}{2}}\delta_N(1-\delta_N)\left(\frac{d\log \pi}{2}+\frac{d}{2(1-\delta_N)}-\frac{d+2}{2}\log\delta_N \right),\]
which proves the result.
\end{proof}

\begin{proof}[Proof of Theorem \ref{thm: main theorem}]
With the same family of functions as in Theorems \ref{thm: convergence of the entropy} and \ref{thm: entropy production estimation} we find that 
\[\Gamma_N \leq \frac{\left\langle \log F_N, N(I-Q)F_N \right\rangle}{H_N(F_N)}=\frac{\left\langle \log F_N, (I-Q)F_N \right\rangle}{\frac{H(F_N)}{N}}\leq -C_\delta \delta_N \log \delta_N ,\]
and plugging $\delta_N=\frac{1}{N^{1-\eta}}$, with $\eta$ satisfying the conditions of Theorem \ref{thm: main approximation theorem for the normalization function}, for an arbitrary $\beta>0$, yields the result.
\end{proof}

\section{Final Remarks}\label{sec: final remarks}
In this paper we managed to see that the addition of more dimensions, allowing conservation of momentum as well as energy, doesn't help the entropy-entropy production ratio. Nor does it worsen it. Moreover, it is not difficult to see that Theorem \ref{thm: main theorem} can be extended to a more general case of collisions operators. Indeed, if we define
\[Q_{\gamma}F=\frac{2}{N(N-1)}\sum_{i<j}\]
\[\int_{\mathbb{S}^{d-1}}B_{\gamma}(v_i,v_j)F\left(v_1,\dots,v_i(\omega),\dots,v_j(\omega),\dots,v_N\right)d\sigma^d ,\]
where $B_{\gamma}(v_i,v_j)$ is an appropriate positive function depending on $|v_i|^2+|v_j|^2$ and $v_i+v_j$, to conserve the symmetry of the problem (compare with (\ref{eq: collision operator})), then we see that in the case when 
\[B_{\gamma}(v_i,v_j)\leq \left( 1 + |v_i|^2+|v_j|^2 \right)^{\frac{\gamma}{2}},\]
or 
\[B_{\gamma}(v_i,v_j)\leq \left\lvert v_i-v_j \right\rvert^{\gamma},\]
we get that 
\[ \Gamma^{\gamma}_N \leq C_d N^{\frac{\gamma}{2}}\Gamma_N, \]
where $C_d$ is a constant depending only on $d$ and $\Gamma^{\gamma}_N$ is defined as (\ref{eq: entropic spectral gap dd}) but with $Q_\gamma$ replacing $Q$ in the definition of $D(F_N)$. Thus, we can conclude that
\begin{theorem}
For any $0<\eta<1$ there exists a constant $C_\eta$, depending only on $\eta$, such that $\Gamma_N$, defined in (\ref{eq: entropic spectral gap dd}), satisfies
\begin{equation}\label{eq: non trivial collision kernel entropy entropy production}
\Gamma^{\gamma}_N \leq \frac{C_\eta}{N^{\eta-\frac{\gamma}{2}}}.
\end{equation}
\end{theorem}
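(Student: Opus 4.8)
The plan is to reduce the weighted statement to the unweighted main result, Theorem~\ref{thm: main theorem}, by proving the comparison $\Gamma^{\gamma}_N\le C_d N^{\gamma/2}\Gamma_N$ announced in the Remark and then substituting the bound of Theorem~\ref{thm: main theorem}. Write $v^{ij,\omega}$ for the configuration obtained from $v=(v_1,\dots,v_N)$ by replacing $(v_i,v_j)$ with $(v_i(\omega),v_j(\omega))$. Since $B_\gamma$ depends only on $|v_i|^2+|v_j|^2$ and $v_i+v_j$, both of which are conserved by the collision, $B_\gamma$ is invariant under $v\mapsto v^{ij,\omega}$ (in fact constant along each collision fibre), so the collision averaging weighted by $B_\gamma$ is self-adjoint on $L^2\left(\mathcal{S}_B^N(N,0),d\sigma^N_{N,0}\right)$; one first records the resulting nonnegative Dirichlet-form representation of the entropy production,
\[
D_\gamma(F)=\frac{1}{N-1}\sum_{i<j}\int_{\mathcal{S}_B^N(N,0)}\int_{\mathbb{S}^{d-1}}B_\gamma(v_i,v_j)\bigl(F(v)-F(v^{ij,\omega})\bigr)\bigl(\log F(v)-\log F(v^{ij,\omega})\bigr)\,d\sigma^d\,d\sigma^N_{N,0},
\]
valid for every symmetric probability density $F$ on $\mathcal{S}_B^N(N,0)$; the unweighted $D(F)$ is the same expression with $B_\gamma\equiv 1$, and in both cases the integrand is pointwise nonnegative. (This symmetrisation, together with the observation that the loss term is governed by the same $B_\gamma$, is the one step that needs a little care.)

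Next I would exploit energy conservation. On $\mathcal{S}_B^N(N,0)$ one has $\sum_k|v_k|^2=N$, hence $|v_i|^2+|v_j|^2\le N$ and $|v_i-v_j|^2=2(|v_i|^2+|v_j|^2)-|v_i+v_j|^2\le 2N$ for every pair, so in either admissible case $B_\gamma(v_i,v_j)\le\left(1+|v_i|^2+|v_j|^2\right)^{\gamma/2}\le(1+N)^{\gamma/2}\le C_d N^{\gamma/2}$, respectively $|v_i-v_j|^{\gamma}\le (2N)^{\gamma/2}\le C_d N^{\gamma/2}$, uniformly on the sphere. Since the integrand of $D_\gamma(F)$ is nonnegative, this gives $D_\gamma(F)\le C_d N^{\gamma/2}D(F)$ termwise; dividing by $H_N(F)$ and taking the infimum over symmetric probability densities $F$ yields $\Gamma^{\gamma}_N\le C_d N^{\gamma/2}\Gamma_N$. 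Combining with Theorem~\ref{thm: main theorem}: for any $0<\eta<1$ there is $C_\eta$ with $\Gamma_N\le C_\eta/N^{\eta}$, hence $\Gamma^{\gamma}_N\le C_d C_\eta\, N^{\gamma/2-\eta}=C_dC_\eta/N^{\eta-\gamma/2}$, which is the claim after renaming the constant.

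A more hands-on route, which sidesteps the general representation, is to simply rerun the computation of Theorem~\ref{thm: entropy production estimation} carrying the factor $B_\gamma(v_i,v_j)$ along inside the $\mathbb{S}^{d-1}$ integral: the contributions with index $k\ne i,j$ still vanish by the symmetry of the Boltzmann sphere (again because $B_\gamma$ is unchanged by the collision, so the substitution $v\mapsto v^{ij,\omega}$ leaves it fixed), and in the surviving two-variable integral over $\Pi_{2,N}$, where $|v_1|^2+|v_2|^2\le N$, the energy bound above gives $B_\gamma\le C_d N^{\gamma/2}$, so the whole chain of estimates there produces $\langle\log F_N,(I-Q_\gamma)F_N\rangle\le C_d N^{\gamma/2}\langle\log F_N,(I-Q)F_N\rangle$, and the conclusion follows from Theorems~\ref{thm: convergence of the entropy} and \ref{thm: entropy production estimation} exactly as in the proof of Theorem~\ref{thm: main theorem}. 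Either way, the only genuinely delicate point is the very first step --- ensuring that the weighted entropy production really has the nonnegative, termwise-dominated form --- after which everything is bookkeeping with the single identity $\sum_k|v_k|^2=N$.
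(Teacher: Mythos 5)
Your proposal is correct and follows exactly the route the paper intends: the paper merely asserts the comparison $\Gamma^{\gamma}_N\le C_d N^{\gamma/2}\Gamma_N$ via the bound $B_\gamma\le C_d N^{\gamma/2}$ coming from $\sum_k|v_k|^2=N$ on the Boltzmann sphere, and then invokes Theorem \ref{thm: main theorem}. You supply the details (the symmetrised nonnegative form of the entropy production, which the paper already uses inside the proof of Theorem \ref{thm: entropy production estimation}) that the paper leaves implicit, so this is the same argument, just written out.
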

Possible questions that should be considered in the future, even in the one dimensional case, are:
\begin{itemize}
\item For our specific choice of 'generating function', $f_{\delta_N}$, we notice that the fourth moment, connected to $\Sigma_{\delta_N}^2$, explodes as $N$ goes to infinity. Would restricting such behaviour result in a better ratio?
\item Intuitively speaking, a reason for such 'slow relaxation' lies in the fact that we're trying to equilibrate many 'stable' states (represented by the Maxwellian with parameter $1/\left(2(1-\delta_N)\right)$) with very few highly energetic states (represented by the Maxwellian with parameter $1/(2\delta_N)$). Will restricting our class of function to one where the velocities are 'close' in some sense result in a better ratio?
\end{itemize}
Another question that can be asked in the multi dimensional case is the following:
\begin{itemize}
\item Can one extend Villani's proof in \cite{Villani} to the $d$-dimensional case?
\end{itemize}
While we have no answers to any of the above so far, we're hoping that some of the presented questions will be solved, for the one dimensional case as well as for $d$-dimensions. 

\appendix
\section{Additional Proofs}\label{app: additional proofs}
This Appendix contains several proofs of Lemmas that would have encumbered the main article, but pose a necessary step in the proof of our main result.
\begin{theorem}\label{thm: fubini on the boltzmann sphere app}
\begin{equation}\nonumber
\begin{gathered}
\int_{\mathcal{S}_B^N(E,z)}Fd\sigma^N_{E,z}=\frac{\left\lvert \mathbb{S}^{d(N-j-1)-1} \right\rvert}{\left\lvert \mathbb{S}^{d(N-1)-1} \right\rvert}\cdot \frac{N^{\frac{d}{2}}}{(N-j)^\frac{d}{2} \left(E-\frac{|z|^2}{N}\right)^{\frac{d(N-1)-2}{2}}} \\
\int_{\Pi_j(E,z)}dv_1\dots dv_j\left(E-\sum_{i=1}^j |v_i|^2-\frac{\left\lvert z-\sum_{i=1}^j v_i \right\rvert^2}{N-j} \right)^{\frac{d(N-j-1)-2}{2}}\\
\int_{\mathcal{S}_B^{N-j}\left(E-\sum_{i=i}^j|v_i|^2,z-\sum_{i=1}^j v_i\right)}F d\sigma^{N-j}_{E-\sum_{i=i}^j|v_i|^2,z-\sum_{i=1}^j v_i},
\end{gathered}
\end{equation}
where $\Pi_j(E,z)=\left\lbrace \sum_{i=1}^j |v_i|^2 + \frac{|z-\sum_{i=1}^j v_i|^2}{N-j} \leq E \right\rbrace$.
\end{theorem}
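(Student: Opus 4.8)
The plan is to reduce the statement to the case $j=1$ by induction on $j$, and to prove the case $j=1$ by evaluating one and the same Lebesgue integral in two ways — the device already used in the proof of Lemma~\ref{lem: probabilistic interpretation of the normalization function}, now run with an arbitrary (nonnegative, say compactly supported) integrand $F$ on $\mathbb{R}^{dN}$ instead of a tensor product. Fix such an $F$ and a $\varphi\in C_b(\mathbb{R}^d\times[0,\infty))$ and set
\[
I=\int_{\mathbb{R}^{dN}}F(v_1,\dots,v_N)\,\varphi\!\left(\sum_{i=1}^N v_i,\ \sum_{i=1}^N|v_i|^2\right)dv_1\cdots dv_N.
\]
Running the chain of substitutions from the proof of Lemma~\ref{lem: probabilistic interpretation of the normalization function} — the orthogonal change $U=RV$ of (\ref{eq: boltzmann sphere transformation}), polar coordinates in the block $(u_1,\dots,u_{N-1})$, and then $z=\sqrt N u_N$, $E=r^2+|u_N|^2$ — and observing that $F$ entered there only through $F\circ R^{-1}$, so the product structure was never used, and that, $R$ being orthogonal, it identifies $\mathcal S_B^N(E,z)$ with a Euclidean sphere in a hyperplane of $\mathbb{R}^{dN}$ and hence carries the uniform probability measure to the uniform probability measure, one obtains
\[
I=\int_{\mathbb{R}^d}\!dz\int_0^\infty\! dE\ \frac{\left|\mathbb{S}^{d(N-1)-1}\right|}{2N^{\frac d2}}\left(E-\frac{|z|^2}{N}\right)^{\frac{d(N-1)-2}{2}}\!\varphi(z,E)\int_{\mathcal S_B^N(E,z)}F\,d\sigma^N_{E,z}.
\]

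Alternatively, integrate out $v_1$ first and apply the identity just displayed, with $N$ replaced by $N-1$, to the inner integral over $(v_2,\dots,v_N)$ (integrand $F(v_1,\cdot)$, test function $(z',E')\mapsto\varphi(v_1+z',|v_1|^2+E')$); then substitute $z=v_1+z'$, $E=|v_1|^2+E'$, whose Jacobian is $1$. Non-emptiness of the inner Boltzmann sphere forces $v_1\in\Pi_1(E,z)$, and one gets
\[
I=\int_{\mathbb{R}^d}\!dz\int_0^\infty\! dE\ \varphi(z,E)\int_{\Pi_1(E,z)}\!\!dv_1\ \frac{\left|\mathbb{S}^{d(N-2)-1}\right|}{2(N-1)^{\frac d2}}\left(E-|v_1|^2-\frac{|z-v_1|^2}{N-1}\right)^{\frac{d(N-2)-2}{2}}\!\!\int_{\mathcal S_B^{N-1}(E-|v_1|^2,\,z-v_1)}\!\!\!F(v_1,\cdot)\,d\sigma^{N-1}_{E-|v_1|^2,\,z-v_1}.
\]
Since these two formulas agree for every $\varphi\in C_b$ and both are integrals against absolutely continuous measures in $(z,E)$, their $(z,E)$-densities coincide a.e., hence (by continuity in the parameters) everywhere on the set where $\mathcal S_B^N(E,z)$ is non-trivial; cancelling the factor $1/2$ and dividing by $\tfrac{|\mathbb{S}^{d(N-1)-1}|}{N^{d/2}}(E-|z|^2/N)^{(d(N-1)-2)/2}$ gives exactly the theorem for $j=1$. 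For the inductive step, feed the $j=1$ identity into the inner integral $\int_{\mathcal S_B^{N-j+1}}F\,d\sigma$ produced by the case $j-1$, peeling off $v_j$: the region it creates for $v_j$ is, by a short convexity computation, precisely the $v_j$-section of $\Pi_j(E,z)$, so Fubini merges the $v_1,\dots,v_j$ integrations over $\Pi_j(E,z)$; the factors $\tfrac{N^{d/2}}{(N-j+1)^{d/2}}\cdot\tfrac{(N-j+1)^{d/2}}{(N-j)^{d/2}}$ and $\tfrac{|\mathbb{S}^{d(N-j)-1}|}{|\mathbb{S}^{d(N-1)-1}|}\cdot\tfrac{|\mathbb{S}^{d(N-j-1)-1}|}{|\mathbb{S}^{d(N-j)-1}|}$ telescope to the asserted ones, the power of $E-|z|^2/N$ is left untouched, and the intermediate $\big(E-\sum_{i<j}|v_i|^2-\cdots\big)$-factor carried along from the case $j-1$ is exactly cancelled by the $j=1$ step, reappearing with exponent $\tfrac{d(N-j-1)-2}{2}$.

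The one place that needs genuine care is the first of the two evaluations: one must check that the argument in the proof of Lemma~\ref{lem: probabilistic interpretation of the normalization function} truly used nothing about $F$ beyond the replacement $F\mapsto F\circ R^{-1}$, and — the compatibility point that makes the whole induction run — that the uniform probability measures on the lower-dimensional Boltzmann spheres it returns are precisely the $d\sigma^{N-1}_{\cdot,\cdot}$ of the statement. Everything else (the substitution $z=v_1+z'$, identifying $\Pi_1$ via the non-emptiness constraint, and the telescoping of constants) is routine. To legitimise Fubini and the disintegration one first argues for $F$ nonnegative and bounded with compact support and then removes the restriction by monotone convergence.
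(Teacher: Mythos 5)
Your argument is correct, but it takes a genuinely different route from the paper's. The paper proves the formula for general $j$ in a single step: it splits the variables into the blocks $(v_1,\dots,v_j)$ and $(v_{j+1},\dots,v_N)$, applies to each a rotation of the type (\ref{eq: boltzmann sphere transformation}), completes a square to recognise the constraint set as a genuine Euclidean sphere of radius $\sqrt{E-|z|^2/N}$ in the new coordinates, and then invokes the standard Fubini formula (\ref{eq: fubini theorem on the sphere}) for Euclidean spheres (quoted from \cite{Einav}) to peel off $j$ coordinates at once, the factor $\left(N/(N-j)\right)^{d/2}$ arising from the substitution $\widetilde{\xi_j}=\sqrt{N/(N-j)}\left(\xi_j-\sqrt{j}z/N\right)$. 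You never use that Euclidean-sphere formula: you obtain the $j=1$ case by comparing two disintegrations of one Lebesgue integral --- the generalisation of Lemma \ref{lem: probabilistic interpretation of the normalization function} to non-product integrands, which is indeed immediate since that proof touches $F$ only through $F\circ R^{-1}$ and the orthogonality of $R$ is what identifies the spherical measure with $d\sigma^N_{E,z}$ --- and then induct on $j$. Your approach is self-contained (it re-derives, rather than imports, the sphere Fubini formula) and makes the compatibility of the measures $d\sigma^{N-j}$ explicit, at the price of an induction and of the identification-of-densities step, which a priori gives the identity only for a.e.\ $(E,z)$; since the theorem is applied in the paper at the specific point $(E,z)=(N,0)$, the continuity upgrade you mention is genuinely needed, and it does hold for the continuous $F$ actually used. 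Note also that the convexity identity $\min_{x}\left(|x|^2+|w-x|^2/m\right)=|w|^2/(m+1)$, which you need both to match the domains in the induction and to see that non-emptiness of the inner sphere forces $v_1\in\Pi_1(E,z)$, is exactly the square-completion the paper performs to pass from (\ref{eq: pre domain in xi}) to (\ref{eq: domain in xi}); the two proofs share this one geometric ingredient.
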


\begin{proof}
The proof relies heavily on the transformation (\ref{eq: boltzmann sphere transformation}) and the following Fubini-like formula for spheres (which can be found in \cite{Einav}):
\begin{equation}\label{eq: fubini theorem on the sphere}
\begin{gathered}
\int_{\mathbb{S}^{m-1}(r)}fd\gamma^m_r=\frac{\left\lvert \mathbb{S}^{m-j-1} \right\rvert}{\left\lvert \mathbb{S}^{m-1} \right\rvert r^{m-2}}\\
\int_{\sum_{i=1}^j x_i^2 \leq r^2}dx_1 \dots dx_j \left(r^2-\sum_{i=1}^j x_i^2 \right)^{\frac{m-j-2}{2}}
\int_{\mathbb{S}^{m-j-1}\left(\sqrt{r^2-\sum_{i=1}^j x_i^2} \right)}fd\gamma^{m-j}_{\sqrt{r^2-\sum_{i=1}^j x_i^2}},
\end{gathered}
\end{equation}
where $d\gamma^m_r$ is the uniform probability measure on the appropriate sphere.\\
We start by defining the new variables
\[\left(\xi_1,\dots,\xi_j \right)=R_1 \left(v_1,\dots,v_j \right),\]
\[\left(\xi_{j+1},\dots,\xi_N \right)=R_2 \left(v_{j+1},\dots,v_N \right),\]
where $R_1,R_2$ are transformation like (\ref{eq: boltzmann sphere transformation}). We notice that under the above transformation the domain 
\[\sum_{i=1}^N |v_i|^2=E \qquad \sum_{i=1}^N v_i=z\]
transforms into
\[\sum_{i=1}^N |\xi_i|^2=E \qquad \sqrt{j}\xi_j+\sqrt{N-j}\xi_N=z,\]
which can be written as
\begin{equation}\label{eq: pre domain in xi}
\sum_{i=1}^{N-1} |\xi_i|^2+\frac{1}{N-j}\left\lvert z-\sqrt{j}\xi_j \right\rvert^2=E.
\end{equation}
The following computation:
\[|x|^2+\frac{1}{N-j}\left(z-\sqrt{j}x \right)^2=|x|^2+\frac{1}{N-j}\left( |z|^2-2\sqrt{j}zx+j|x|^2 \right)\]
\[=\frac{1}{N-j}\left( |z|^2-2\sqrt{j}zx+N|x|^2 \right)
=\frac{1}{N-j}\left(N\left(x-\frac{\sqrt{j}z}{N} \right)^2+\frac{(N-j)|z|^2}{N} \right),\]
shows that (\ref{eq: pre domain in xi}) is
\begin{equation}\label{eq: domain in xi}
\sum_{i=1, \; i\not=j}^{N-1} |\xi_i|^2+\frac{N}{N-j}\left(\xi_j-\frac{\sqrt{j}z}{N} \right)^2=E-\frac{|z|^2}{N} .
\end{equation}
Denoting by $\widetilde{\xi_j}=\sqrt{\frac{N}{N-j}}\left(\xi_j-\frac{\sqrt{j}z}{N} \right)$ and using the fact that $R=R_1\otimes R_2$ is orthogonal along with (\ref{eq: fubini theorem on the sphere}) we find that
\[\int_{\mathcal{S}_B^N(E,z)}f d\sigma^N_{E,z}=\int_{\sum_{i=1, \; i\not=j}^{N-1} |\xi_i|^2+|\widetilde{\xi_j}|^2=E-\frac{|z|^2}{N}}f \circ R^T d\gamma^{N(d-1)}_{\sqrt{E-\frac{|z|^2}{N}}}\]
\[=\frac{\left\lvert \mathbb{S}^{d(N-j-1)-1} \right\rvert}{\left\lvert \mathbb{S}^{d(N-1)-1} \right\rvert \left(E-\frac{|z|^2}{N} \right) ^{\frac{d(N-1)-2}{2}}} \]
\[\int_{\sum_{i=1}^{j-1} |\xi_i|^2 + |\widetilde{\xi_j}|^2 \leq E-\frac{z|^2}{N}}d\xi_1 \dots d\xi_{j-1}d\widetilde{\xi_j} \left(E-\frac{|z|^2}{N}-\sum_{i=1}^{j-1} |\xi_i|^2 -|\widetilde{\xi_j}|^2\right)^{\frac{d(N-j-1)-2}{2}}\]
\[\int_{\mathbb{S}^{d(N-j-1)}\left(\sqrt{E-\frac{|z|^2}{N}-\sum_{i=1}^{j-1} |\xi_i|^2 -|\widetilde{\xi_j}|^2} \right)}f \circ R^T d\gamma^{d(N-j)}_{\sqrt{E-\frac{|z|^2}{N}-\sum_{i=1}^{j-1} |\xi_i|^2 -|\widetilde{\xi_j}|^2}}.\]
Since
\[ E-\frac{|z|^2}{N}-\sum_{i=1}^{j-1} |\xi_i|^2 -|\widetilde{\xi_j}|^2
=E-\sum_{i=1}^j |\xi_i|^2 - \frac{|z-\sqrt{j}\xi_j|^2}{N-j}\]
\[=E-\sum_{i=1}^j |v_i|^2 - \frac{|z-\sum_{i=1}^j v_i|^2}{N-j},\]
we find that
\[\int_{\mathcal{S}_B^N(E,z)}f d\sigma^N_{E,z}=\frac{\left\lvert \mathbb{S}^{d(N-1)-j-1} \right\rvert}{\left\lvert \mathbb{S}^{d(N-1)-1} \right\rvert \left(E-\frac{|z|^2}{N} \right) ^{\frac{d(N-1)-2}{2}}}\cdot \left(\frac{N}{N-j}\right)^{\frac{d}{2}} \]
\[\int_{\sum_{i=1}^j |v_i|^2 + \frac{|z-\sum_{i=1}^j v_i|^2}{N-j} \leq E}dv_1 \dots dv_j \left(E-\sum_{i=1}^j |v_i|^2 - \frac{|z-\sum_{i=1}^j v_i|^2}{N-j} \right)^{\frac{d(N-j-1)-2}{2}}\]
\[\int_{\mathcal{S}_B^{N-j}\left(E-\sum_{i=i}^j|v_i|^2,z-\sum_{i=1}^j v_i\right)}f d\sigma^{N-j}_{E-\sum_{i=i}^j|v_i|^2,z-\sum_{i=1}^j v_i}.\]
\end{proof}

\begin{lemma}\label{lem: h_delta^n is in L^p }
The function $\widehat{h_\delta}^n$ defined in (\ref{eq: the fourier transform of h associated to the generating function}) belongs to $L^{q}\left(\mathbb{R}^{d+1} \right)$ for any $n>\frac{2(1+d)}{qd}$.
\end{lemma}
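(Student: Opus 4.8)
The plan is to extract an explicit pointwise bound on $|\widehat{h_\delta}(p,t)|$ from the formula~(\ref{eq: the fourier transform of h associated to the generating function}) and then integrate directly. First I would record that for real $t$ and any $s>0$ one has $\mathrm{Re}\,\frac{1}{ds+2\pi i t}=\frac{ds}{d^2s^2+4\pi^2t^2}$ and $\bigl|(1+\tfrac{2\pi i t}{ds})^{d/2}\bigr|=\bigl(1+\tfrac{4\pi^2t^2}{d^2s^2}\bigr)^{d/4}$, so that each of the two summands in~(\ref{eq: the fourier transform of h associated to the generating function}) has modulus
\[
\frac{s\,e^{-\frac{\pi^2 ds\,|p|^2}{d^2s^2+4\pi^2t^2}}}{\left(1+\frac{4\pi^2t^2}{d^2s^2}\right)^{d/4}}\qquad(s=\delta\ \text{or}\ s=1-\delta).
\]
Setting $G_s(p,t):=\dfrac{e^{-\frac{\pi^2 ds\,|p|^2}{d^2s^2+4\pi^2t^2}}}{\left(1+\frac{4\pi^2t^2}{d^2s^2}\right)^{d/4}}$ and using that $\delta$ and $1-\delta$ are nonnegative and add to $1$, this gives $|\widehat{h_\delta}(p,t)|\le\max\bigl(G_\delta(p,t),G_{1-\delta}(p,t)\bigr)$, and hence $|\widehat{h_\delta}^{\,n}|^q=|\widehat{h_\delta}|^{nq}\le G_\delta^{\,nq}+G_{1-\delta}^{\,nq}$. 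It therefore suffices to prove $\iint_{\mathbb{R}^{d+1}}G_s^{\,nq}\,dp\,dt<\infty$ for each fixed $s\in(0,1)$.

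Then I would do the two integrations in order. With $\alpha_s(t):=\frac{\pi^2 ds}{d^2s^2+4\pi^2t^2}>0$, the $p$-integral is Gaussian, and using $\pi/\alpha_s(t)=\frac{ds}{\pi}\bigl(1+\frac{4\pi^2t^2}{d^2s^2}\bigr)$ to merge the two powers of $\bigl(1+\frac{4\pi^2t^2}{d^2s^2}\bigr)$ I get
\[
\int_{\mathbb{R}^d}G_s(p,t)^{nq}\,dp=\frac{1}{\left(1+\frac{4\pi^2t^2}{d^2s^2}\right)^{nqd/4}}\left(\frac{\pi}{nq\,\alpha_s(t)}\right)^{d/2}=\left(\frac{ds}{\pi nq}\right)^{d/2}\left(1+\frac{4\pi^2t^2}{d^2s^2}\right)^{\frac{d}{2}-\frac{nqd}{4}}.
\]
Integrating this function of $t$ over $\mathbb{R}$ converges if and only if $2\bigl(\frac{nqd}{4}-\frac{d}{2}\bigr)>1$, i.e. $nqd>2(1+d)$, i.e. $n>\frac{2(1+d)}{qd}$, which is exactly the stated hypothesis; this completes the proof.

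I do not expect any genuine obstacle here: the only points needing care are the bookkeeping of the modulus of the complex Gaussian and denominator (routine once real and imaginary parts are separated) and tracking the exponent of $\bigl(1+\frac{4\pi^2t^2}{d^2s^2}\bigr)$ after the $p$-integration, since it is precisely this exponent that yields the threshold $\frac{2(1+d)}{qd}$. As a consequence, specialising to $q=1$ and $q=2$ gives $\widehat{h_\delta}^{\,n}\in L^1\cap L^2(\mathbb{R}^{d+1})$ whenever $n>\frac{2(1+d)}{d}$, so that $h_\delta^{\ast n}$ is a bounded continuous $L^2$ function and, being the $n$-fold convolution of a probability density, may indeed be viewed as a density, which is the statement of Lemma~\ref{lem: the convolution yields a function}.
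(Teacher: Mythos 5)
Your proof is correct and follows essentially the same route as the paper's: compute the modulus of each Gaussian summand, perform the Gaussian $p$-integration, and power-count the resulting $\left(1+\frac{4\pi^2t^2}{d^2s^2}\right)^{\frac{d}{2}-\frac{nqd}{4}}$ decay in $t$ to obtain the threshold $n>\frac{2(1+d)}{qd}$. The only difference is cosmetic: the paper expands $\widehat{h_\delta}^{\,n}$ by the binomial theorem and estimates each cross term $\widehat{h_a}^{\,j}\widehat{h_b}^{\,n-j}$ separately, whereas you use the convex-combination bound $|\widehat{h_\delta}|\le\max\left(G_\delta,G_{1-\delta}\right)$ to avoid cross terms altogether.
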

\begin{proof}
By the definition, it is sufficient to show that $\widehat{h_a}^j\widehat{h_b}^{n-j}$ is in $L^{q}\left(\mathbb{R}^{d+1} \right)$ for all $j=0,1,\dots,n$ and $a,b>0$ ($\widehat{h_a}$ was defined in the proof of Lemma \ref{lem: the fourier transform of the generating function}). Indeed
\[\int_{\mathbb{R}^{d+1}}\left\lvert \widehat{h_a}(p,t) \right\rvert^{jq}
\left\lvert \widehat{h_b}(p,t) \right\rvert^{(n-j)q}dpdt\]
\[=\int_{\mathbb{R}^{d+1}}\frac{e^{-|p|^2\left(\frac{2aqj\pi^2}{1+16\pi^2 a^2 t^2}+\frac{2bq(n-j)\pi^2}{1+16\pi^2 a^2 t^2} \right)}}{\left(1+16\pi^2 a^2 t^2 \right)^{\frac{dqj}{4}}\left(1+16\pi^2 b^2 t^2 \right)^{\frac{dq(n-j)}{4}}}dpdt\]
\[=C_d\int_{\mathbb{R}} \frac{\left( \frac{2aqj\pi^2}{1+16\pi^2 a^2 t^2}+\frac{2bq(n-j)\pi^2}{1+16\pi^2 b^2 t^2} \right)^{-\frac{d}{2}}}{\left(1+16\pi^2 a^2 t^2 \right)^{\frac{dqj}{4}}\left(1+16\pi^2 b^2 t^2 \right)^{\frac{dq(n-j)}{4}}}dpdt ,
\]
where $C_d=\int_{\mathbb{R}^d}e^{-|x|^2}dx$.\\
The behaviour at infinity is that of $t^{-d\left(\frac{nq}{2}-1 \right)}$ and thus we conclude that $\widehat{h_\delta}^n\in L^{q}\left(\mathbb{R}^d \times [0,\infty) \right)$ for any $n>\frac{2(1+d)}{qd}$.
\end{proof}

\begin{lemma}\label{lem: equivalence of a measure and a function}
Let $F(x)$ be a continuous function in $L^q\left( \mathbb{R}^{d+1} \right)$ for some $q>1$ and let $P$ be a probability measure such that for any $\varphi \in C_c \left( \mathbb{R}^{d+1} \right)$ we have 
\[ \int_{\mathbb{R}^{d+1}} \varphi(x)F(x)dx=\int_{\mathbb{R}^{d+1}} \varphi(x)dP(x). \]
Then $F\geq0$, $F(x)\in L^1 \left( \mathbb{R}^{d+1} \right)$ and $dP(x)=F(x)dx$.
\end{lemma}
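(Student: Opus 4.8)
The plan is to establish the three assertions in order: nonnegativity of $F$, then $F\in L^{1}$, and finally the equality of measures.

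\emph{Nonnegativity.} I would argue by contradiction. If $F(x_{0})<0$ for some $x_{0}\in\mathbb{R}^{d+1}$, then continuity of $F$ gives an open ball $B$ centred at $x_{0}$ and a constant $c>0$ with $F<-c$ on $B$. Pick $\varphi\in C_{c}(\mathbb{R}^{d+1})$ with $0\le\varphi\le 1$, $\varphi(x_{0})>0$ and $\operatorname{supp}\varphi\subset B$. Then $\int\varphi F\,dx<0$, whereas $\int\varphi\,dP\ge 0$ because $P$ is a nonnegative measure and $\varphi\ge 0$. This contradicts the hypothesis, so $F\ge 0$ everywhere. (Note that continuity of $F$ together with $\varphi\in C_{c}$ already makes both integrals well defined, so the hypothesis $F\in L^{q}$ is not needed for this step.)

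\emph{Integrability.} Choose an increasing sequence $\varphi_{n}\in C_{c}(\mathbb{R}^{d+1})$ with $0\le\varphi_{n}\le\varphi_{n+1}\le 1$ and $\varphi_{n}(x)\uparrow 1$ for every $x$ (for instance, $\varphi_{n}$ equal to $1$ on the ball of radius $n$ and supported in the ball of radius $n+1$). Applying the hypothesis to $\varphi_{n}$ and using $P(\mathbb{R}^{d+1})=1$ gives $\int\varphi_{n}F\,dx=\int\varphi_{n}\,dP\le 1$ for all $n$. Since $F\ge 0$, the monotone convergence theorem yields $\int_{\mathbb{R}^{d+1}}F\,dx=\lim_{n}\int\varphi_{n}F\,dx\le 1$, so $F\in L^{1}(\mathbb{R}^{d+1})$ and $F\,dx$ is a finite Borel measure.

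\emph{Equality of measures.} Both $F\,dx$ and $P$ are now finite Borel measures on $\mathbb{R}^{d+1}$ that assign the same integral to every $\varphi\in C_{c}(\mathbb{R}^{d+1})$. By the uniqueness statement in the Riesz representation theorem they must coincide; alternatively, one can approximate the indicator of an arbitrary open set from below by an increasing sequence of functions in $C_{c}(\mathbb{R}^{d+1})$, invoke monotone convergence to obtain agreement on open sets, and then pass to the full Borel $\sigma$-algebra via the $\pi$--$\lambda$ theorem. Either way, $dP(x)=F(x)\,dx$. The only genuinely delicate point is the integrability step: membership in $L^{q}$ with $q>1$ does not by itself control the decay of $F$ at infinity, so it is essential that the bound on $\int_{\mathbb{R}^{d+1}}F\,dx$ come from the finite total mass of $P$; once that is in hand the remaining steps are routine real analysis.
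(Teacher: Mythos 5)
Your proof is correct, but it proceeds in the opposite order from the paper's and is in fact more economical. The paper first identifies $P(E)$ with $\int_E F\,dx$ for every \emph{bounded} Borel set $E$, by sandwiching $E$ between a compact set and an open set (using regularity of both $P$ and Lebesgue measure), invoking Urysohn's lemma, and controlling the error term $\int_{U\setminus C}\lvert F\rvert\,dx$ via H\"older's inequality together with the $L^q$ bound; only afterwards does it deduce real-valuedness, nonnegativity and integrability of $F$ from that set-level identity, and finally pass to unbounded sets. You instead prove nonnegativity first by a direct bump-function argument, obtain $F\in L^1$ by monotone convergence against an exhaustion of unity combined with $P\left(\mathbb{R}^{d+1}\right)=1$, and then conclude by the uniqueness part of the Riesz representation theorem (or a monotone-class argument on open sets). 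Notably, your route never uses the hypothesis $F\in L^q$ with $q>1$: continuity plus the duality identity suffice, so that hypothesis is actually redundant for the lemma and is needed in the paper only because of the order in which it argues. One small remark: in the application $F$ arises as an inverse Fourier transform and is a priori complex-valued, and the paper explicitly rules out a nonzero imaginary part; your bump-function argument applied to $\mathrm{Im}\,F$ (which integrates to zero against every nonnegative test function) handles this verbatim, so it deserves one sentence if $F$ is not assumed real from the start.
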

\begin{proof}
Let $E$ be any bounded Borel set. Given an $\epsilon>0$ we can find open sets $U_1,U_2$ and compact sets $C_1,C_2$ such that $C_i\subset E \subset U_i$ for $i=1,2$, $P \left( U_1\backslash C_1 \right)<\epsilon$ and $\lambda\left( U_2\backslash C_2 \right)<\epsilon$ where $\lambda$ represents the Lebesgue measure. 
Defining $U=U_1\cap U_2$ and $C=C_1\cup C_2$ we find an open and compact sets, bounding $E$ between them, such that $P \left( U\backslash C \right)<\epsilon$ and $\lambda\left( U\backslash C \right)<\epsilon$.\\
By Uryson's lemma we can find a function $\varphi_{\epsilon}\in C_c \left( \mathbb{R}^{d+1} \right)$ such that $0\leq \varphi_{\epsilon}\leq 1$, $\varphi_{\epsilon}\vert_C=1$ and  $\varphi_{\epsilon}\vert_{U^c}=0$.\\
We have that
\[ \int_{\mathbb{R}^{d+1}}\left\lvert \chi_E-\varphi_{\epsilon}\right\rvert |F(x)|dx= 
\int_{U\backslash C}\left\lvert \chi_E-\varphi_{\epsilon}\right\rvert |F(x)|dx \]
\[ \leq \left(\int_{U\backslash C}dx\right)^{\frac{1}{q^\ast}}\cdot 
\left(\int_{U\backslash C}|F(x)|^{q}dx\right)^{\frac{1}{q}}
\leq \sqrt[q^\ast]{\epsilon}\cdot \left\lVert F \right\rVert_{L^q \left( \mathbb{R}^{d+1} \right)} ,\]
and
\[ \int_{\mathbb{R}^{d+1}}\left\lvert \chi_E-\varphi_{\epsilon}\right\rvert dP \leq \int_{U\backslash C}dP<\epsilon .\]
Since $\int_{\mathbb{R}^{d+1}}\varphi_{\epsilon} (x)F(x)dx =\int_{\mathbb{R}^{d+1}}\varphi_{\epsilon}dP$
we conclude that
\[ \left\lvert \int_E F(x)dx - P(E)\right\rvert 
\leq \int_{\mathbb{R}^{d+1}}\left\lvert \chi_E-\varphi_{\epsilon}\right\rvert |F(x)|dx
+\int_{\mathbb{R}^{d+1}}\left\lvert \chi_E-\varphi_{\epsilon}\right\rvert dP\]
\[ \leq \epsilon+\sqrt[q^\ast]{\epsilon}\cdot \left\lVert F \right\rVert_{L^q \left( \mathbb{R}^{d+1} \right)},\]
and since $\epsilon$ is arbitrary we find that for any bounded Borel set $E$, $P(E)=\int_E F(x)dx$.\\
Next, given any Borel set $E$, define $E_m=E\cap B_m(0)$. We have that $E_m\uparrow E$ and as such $P(E)=\lim_{m\rightarrow \infty}P(E_m)$. Using Fatu's lemma we find that\[\int_E F(x)dx=\int_{\mathbb{R}^d}\lim_{m\rightarrow \infty}\chi_{E_m}F(x)dx
\leq \liminf_{m\rightarrow \infty} \int_{E_m}F(x)dx=\liminf_{m\rightarrow \infty} P(E_m)=P(E).\]
If we'll prove that $F\in L^1 \left(\mathbb{R}^{d+1} \right)$ we would be able to use the Dominated Convergence Theorem to show equality in the above inequality and conclude that $dP=F(x)dx$. \\
Since $F$ is continuous, if $\textbf{Im}F(x_0)\not=0$ for one point, we can find a ball around it, $B_r(x_0)$ such that $\textbf{Im}F\not=0$ in the entire ball. Since any ball is a bounded Borel set we have that 
\[P\left( B_r (x_0)\right)=\int_{B_r(x_0)}F(x)dx\not\in \mathbb{R},\]
which is impossible. Thus $F$ is real valued. \\
A similar argument shows that $F$ is positive. Indeed, if $F(x_0)<0$ for one point we can find a ball around it, $B_r(x_0)$ such that $F<0$ in that ball. We have that 
\[0>\int_{B_r(x_0)}F(x)dx=P\left( B_r (x_0)\right),\]
again - impossible.\\ 
Thus $F\geq 0$ and we have that 
\[\int_{\mathbb{R}^{d+1}}|F(x)|dx=\int_{\mathbb{R}^{d+1}}F(x)dx\leq P\left(\mathbb{R}^{d+1}\right)=1,\]
completing our proof.
\end{proof}
The last two Lemmas provide the proof to Lemma \ref{lem: the convolution yields a function}.
\begin{proof}[Proof of Lemma \ref{lem: the convolution yields a function}]
Due to Lemma \ref{lem: h_delta^n is in L^p }, $\widehat{h_\delta}\in L^2\left(\mathbb{R}^{d+1} \right)\cap L^1 \left(\mathbb{R}^{d+1} \right) $ for all $n>\frac{2(1+d)}{d}$. As such, it has an inverse Fourier transform $F_n \in L^2\left(\mathbb{R}^{d+1} \right)\cap C \left(\mathbb{R}^{d+1} \right)$. Given any $\varphi \in C_c \left(\mathbb{R}^{d+1} \right)$ we have that 
\[\int_{\mathbb{R}^{d+1}}\varphi (u,v)dh^{\ast n}(u,v)=\int_{\mathbb{R}^{d+1}}\widehat{\varphi}(p,t)\overline{\widehat{h_\delta}^n}(p,t)dpdt=\int_{\mathbb{R}^{d+1}}\varphi(u,v)\overline{F_n}(u,v)dudv.\]
By Lemma \ref{lem: equivalence of a measure and a function} we conclude that $F_n \geq 0$ and that $dh(u,v)=F_n(u,v)dudv$.
\end{proof}

\end{document}